\newtheorem{proposition}{Proposition}
\newtheorem{defin}{\bf Definition}
\newenvironment{proof}{\noindent{\bf Proof}}{$\diamond$}
\def\ga{\mbox{Ga}}
\def\be{\mbox{Be}}
\def\dir{\mbox{Dir}}
\def\ddir{\mbox{DDir}}
\def\mul{\mbox{Mult}}
\def\no{\mbox{N}}
\def\un{\mbox{Unif}}
\def\E{\mbox{E}}
\def\V{\mbox{Var}}
\def\Cov{\mbox{Cov}}
\def\Cor{\mbox{Corr}}
\def\P{\mbox{P}}
\def\rest{\mbox{rest}}
\def\ba{{\bf a}}
\def\bj{{\bf j}}
\def\bk{{\bf k}}
\def\bp{{\bf p}}
\def\bu{{\bf u}}
\def\bz{{\bf z}}
\def\bU{{\bf U}}
\def\bZ{{\bf Z}}
\def\bone{{\bf 1}}
\def\simind{\stackrel{\mbox{\scriptsize{ind}}}{\sim}}
\newcommand{\balpha}{\boldsymbol{\alpha}}
\newcommand{\bfeta}{\boldsymbol{\eta}}
\newcommand{\blambda}{\boldsymbol{\lambda}}
\newcommand{\bomega}{\boldsymbol{\omega}}
\newcommand{\bpartial}{\boldsymbol{\partial}}
\newcommand{\bpi}{\boldsymbol{\pi}}
\newcommand{\btheta}{\boldsymbol{\theta}}
\newcommand{\MC}{\mathcal{M}}
\newcommand{\JB}{\mathbb{J}}
\newcommand{\NB}{\mathbb{N}}
\newcommand{\RB}{\mathbb{R}}
\begin{document}

\baselineskip=24pt

\title{\bf Multivariate temporal dependence via mixtures of rotated copulas}
\author{{\sc Ruyi Pan, Luis E. Nieto-Barajas, Radu V. Craiu} \\[2mm]
{\sl University of Toronto \& ITAM} \\[2mm]
{\small {\tt ruyi.pan@mail.utoronto.ca, luis.nieto@itam.mx \& radu.craiu@utoronto.ca}} \\}
\date{}
\maketitle

\begin{abstract}
Parametric copula families have been known to flexibly capture various dependence patterns, e.g., either positive or negative dependence in either the lower or upper tails of bivariate distributions. In this paper, our objective is to construct a model that is adaptable enough to capture several of these features simultaneously in $m$ dimensions. We propose a mixture of $2^m$ rotations of a parametric copula that can achieve this goal. We illustrate the construction using the Clayton family but the concept is general and can be applied to other families. In order to include dynamic dependence regimes, the approach is extended to a time-dependent sequence of mixture copulas in which the mixture probabilities are allowed to evolve in time via a moving average and seasonal types of relationship. The properties of the proposed model and its performance are examined using simulated and real data sets. 
\end{abstract}

\vspace{0.2in} \noindent {\sl Keywords}: Bayesian inference, dynamic dependence models, moving average process, seasonal model, time varying copulas.

\section{Introduction}
\label{sec:intro}

Copulas have emerged in recent years as viable tools for modelling dependence in non-standard situations in which the usual ``suspects'' such as multivariate Gaussian, Student or Wishart distributions, are not appropriate. Besides being an important tool for methodological development and having considerable potential for applications,  copulas have gained popularity due to several features that are desirable to a statistician. Allowing the separation of modeling effort for the marginal models and the dependence structure continues to rank high, but so is the flexibility it exhibits in capturing dependence patterns using parametric families, especially for bivariate data. In higher dimensions, this flexibility is expressed through the use of   C- or D- vine copulas that make efficient use of bivariate conditional copulas to flexibly model multivariate ones \citep{czado2022vine}. 

In the analysis of extreme value data, it is often desirable to measure the tail dependence in a vector. Some copulas are able to capture tail dependence for instance, the Clayton/Gumbel copula with positive $\theta$ parameter exhibits upper/lower tail dependence \citep{nelsen:06}.  However, while one can identify copula families able to capture a bivariate distribution's various patterns of lower or upper tail dependence, be they positive or negative, there is interest for developing more flexible parametric families that can capture several such patterns simultaneously in higher dimensions. Most attempts have been done in two dimensions. \cite{joe:97} proposed the Joe-Clayton Archimedean copula, which is able to model lower and upper tail dependence. Later, \cite{hu:06} proposed the use of a three-component mixture of Gaussian, Gumbel and survival Gumbel copulas that allows for no,  lower or upper tail dependence. Alternatively, the survival Gumbel copula was replaced by the Clayton copula in  \cite{liu&al:19}.  

A bivariate survival copula is a $180^\circ$ rotation of a bivariate copula. The advantage of such rotation is that, tail properties are reflected with respect to the $v=1-u$ line in the unit square. However, other degrees of rotations, like $90^\circ$ and $270^\circ$, are also possible. For instance, \cite{klein&al:20} considered all four rotations of a Clayton copula in two dimensions and developed model selection criteria for selecting the correct one. On the other hand, \cite{oh&patton:16} proposed a jointly symmetric copula with an equally weighted mixture of the four way rotations of a copula with the same parameter. In a similar fashion, \cite{smith:15} proposed two rotation mixtures of $0^\circ-180^\circ$ and $90^\circ-270^\circ$, both with the same dependence parameter, in order to tackle both serial and cross sectional dependence.

In time series analysis, copulas have been used to capture serial dependence. For example, \cite{hafner&manner:12} proposed a dynamic copula model in which copula parameters follow an autoregressive process, and \cite{loaiza&al:18} factored the joint density of a unit vector $\bu=(u_1,\ldots,u_T)$ as $c(\bu)=\prod_{t=1}^T f(u_t\mid u_{t-1})$ assuming Markov conditional distributions. They further used a bivariate copula to model the transitions, i.e. $f(u_t\mid u_{t-1})=c_2(u_{t-1},u_t)$. Specifically, in the bivariate copula they use a mixture of rotations of degrees $0^\circ-90^\circ$.

In this paper, we first generalise the concept of bivariate tail dependence to any corner of the unit $m$-dimensional hypercube and propose a flexible copula that is able to capture multiple types of tail dependence. Our goal is achieved by mixing all $2^m$ rotated versions of the multivariate copula. Furthermore, the $2^m$-dimensional mixture weights $\bpi_t$ are all different and are allowed to change over time through a moving average type process of order $q$ and a seasonal component of order $p$ that maintains the marginal distribution invariant over time. For each of the mixture copula components, dependence parameters are allowed to vary over time and are, a priori, assumed to be exchangeable.  Our context is not a traditional time series problem in the sense that we do not follow a single individual through time, but rather we monitor the dependence of several individuals in time.

The rest of the paper is organized as follows.  In Section \ref{sec:mot}, we provide the motivation of the paper and the required notation. Section \ref{sec:taild} contains a discussion about   tail dependence. In Section \ref{sec:model}, we define our mixture model for a specific time and define the time dependent mixture weights and association parameters. Section \ref{sec:bayesian} provides the prior and posterior distributions that are required to conduct a Bayesian analysis of the model.  An illustration of the model's performance is reported in Section \ref{sec:numerical}. Section \ref{sec:conclusion} contains conclusions and directions for future work. 

\section{Motivation and Notation}
\label{sec:mot}

The emergence of copulas as important tools for modeling dependence has its origins in Sklar's paper \citep{sklar:59} which demonstrated that the link between any continuous multivariate distribution and its marginals can be achieved via a unique copula $C:[0,1]^m\to[0,1]$. The latter is a multivariate distribution with uniform marginals on the interval $[0,1]$. Specifically, if $F$ is a multivariate cumulative distribution function (CDF) with marginal CDFs $F_1,\ldots,F_m$, then $F(x_1,\ldots,x_m)=C\{F_1(x_1),\ldots,F_m(x_m)\}$. Additionally, the copula function can be obtained as $C(u_1,\ldots,u_m)=F\{F_1^{-1}(u_1),\ldots,F_m^{-1}(u_m)\}$, where $F_j^{-1}$ for $j=1,\ldots,m$ are the marginal inverse CDFs or quantile functions. 

There is a large body of literature devoted to identifying parametric copula families that are able to capture various dependence patterns in the tails \citep{joe:97}. For instance, in the analysis of extreme value theory, an important concept is that of dependence in the upper-right or lower-down quadrants of a joint bivariate distribution. This is quantified by the so-called upper and lower tail dependence coefficients \citep{frahm&al:05,joe:97}. 

Let $(X_1,X_2)$ be a bivariate vector with marginal CDFs, $F_1$ and, respectively, $F_2$, such that the joint CDF is given in terms of the copula $C$ as $F(x_1,x_2)=C(F_1(x_1),F_2(x_2))$. Tail dependence coefficients are defined as limits of the conditional probabilities that both variables are above an upper quantile of order $1-\nu$, or both variables are below a lower quantile of order $\nu$, as $\nu$ approaches zero. We denote 
$$\lambda_{11}=\lim_{\nu\to 0}\P\{X_1>F_1^{-1}(1-\nu)\mid X_2>F_2^{-1}(1-\nu)\}=\lim_{\nu\to 0}\P(U_1>1-\nu\mid U_2>1-\nu)$$
for the upper-right (upper-upper) corner, and
$$\lambda_{00}=\lim_{\nu\to 0}\P\{X_1\leq F_1^{-1}(\nu)\mid X_2\leq F_2^{-1}(\nu)\}=\lim_{\nu\to 0}\P(U_1\leq\nu\mid U_2\leq\nu)$$
for the lower-down (lower-lower) corner. The binary sub-indexes $0$ and $1$ stand for lower and upper, respectively. However, it is possible that both variables have co-movements in the opposite tails, that is, one variable has values in the upper quantile and the other in the lower quantile, or conversely. In this case the opposite tail dependencies are defined as
$$\lambda_{10}=\lim_{\nu\to 0}\P\{X_1>F_1^{-1}(1-\nu)\mid X_2\leq F_2^{-1}(\nu)\}=\lim_{\nu\to 0}\P(U_1>1-\nu\mid U_2\leq\nu)$$
for the upper-lower corner, and
$$\lambda_{01}=\lim_{\nu\to 0}\P\{X_1\leq F_1^{-1}(\nu)\mid X_2>F_2^{-1}(1-\nu)\}=\lim_{\nu\to 0}\P(U_1\leq\nu\mid U_2>1-\nu)$$
for the lower-upper corner. 

These four tail dependence coefficients can be written entirely in terms of the copula. It is straightforward to show that 
\begin{eqnarray}
\label{eq:tdc}
\lambda_{11}=\lim_{\nu\to 0}\frac{2\nu-1+C(1-\nu,1-\nu)}{\nu},\quad\lambda_{00}=\lim_{\nu\to 0}\frac{C(\nu,\nu)}{\nu},\\
\nonumber
\lambda_{10}=\lim_{\nu\to 0}\frac{\nu-C(1-\nu,\nu)}{\nu},\quad\lambda_{01}=\lim_{\nu\to 0}\frac{\nu-C(\nu,1-\nu)}{\nu}.
\end{eqnarray}

It is also well known \citep[e.g.][]{tjostheim&al:22} that the Clayton copula exhibits lower-lower tail dependence, whereas the Gumbel copula has upper-upper tail dependence. One way of defining copulas with the four types of tail dependence \eqref{eq:tdc} is by means of rotation as in \cite{klein&al:20}. It is easy to see from \eqref{eq:tdc} that for most copulas, the four tail dependence coefficients will be different. In the next section, we develop a mixture of copulas that allows identical tail dependence coefficients.

Before we proceed, let us introduce some notation. Let $\ga(\alpha,\beta)$ denote a gamma density with mean $\alpha/\beta$, $\be(\alpha,\beta)$ a beta density with mean $\alpha/(\alpha+\beta)$, $\no(\mu,\tau)$ a normal density with mean $\mu$ and precision $\tau$, $\dir(\balpha)$ a Dirichlet density with parameter vector $\balpha$, and $\mul(c,\bp)$ a multinomial density with total trials $c$ and probability vector $\bp$. The density evaluated at a specific point $x$, will be denoted using the notation for the density, e.g.  $\ga(x\mid\alpha,\beta)$, in the gamma case.

\section{Bivariate rotations}
\label{sec:taild}


Before we introduce the general case, consider for illustration the case $m=2$ in which  the unit square $[0,1]^2$ is divided into four quadrants as in Figure \ref{fig:rotations}. 
To define the 90-degree rotation, we consider the probability in quadrant II, $\P(U_1>u_1,U_2\leq u_2)=\P(U_2\leq u_2)-\P(U_1\leq u_1,U_2\leq u_2)$, which in terms of the copula becomes $u_2-C(u_1,u_2)$. Finally by making the transformation $U_1'=1-U_1$, we maintain the marginal uniformity in $U_1'$ and can obtain a new CDF (copula) of the form $$C_{10}(u_1,u_2)=\P(U_1'\le u_1,U_2\leq u_2)=u_2-C(1-u_1,u_2).$$ 

To define the $180^\circ$ rotation, we consider the following probability in  quadrant III, $\P(U_1>u_1,U_2>u_2)=1-\P(U_1\leq u_1)-\P(U_2\leq u_2)+\P(U_1\leq u_1,U_2\leq u_2)$, which in terms of the copula becomes $1-u_1-u_2+C(u_1,u_2)$. Again, making the transformation $U_1'=1-U_1$ and $U_2'=1-U_2$ we get a new CDF (copula) $$C_{11}(u_1,u_2)=\P(U_1'\leq u_1,U_2'\leq u_2)=u_1+u_2-1+C(1-u_1,1-u_2).$$ 

Lastly, to define the 270-degree rotation we consider the probability, $\P(U_1\leq u_1,U_2>u_2)=\P(U_1\leq u_1)-\P(U_1\leq u_1,U_2\leq u_2)$, which in terms of the copula can be written as $u_1-C(u_1,u_2)$. Making the transformation $U_2'=1-U_2$, we obtain the new CDF (copula) $$C_{01}(u_1,u_2)=\P(U_1\leq u_1,U_2'\leq u_2)=u_1-C(u_1,1-u_2).$$ 
For completeness, we denote the original, un-rotated, copula as $C_{00}(u_1,u_2)$. 

If a particular copula has lower-lower or upper-upper tail dependence, each of the four rotated versions of the copula will have the same tail dependence, but in different corners. To illustrate, consider the Clayton copula defined as $C(u_1,u_2)=\left(u_1^{-\theta}+u_2^{-\theta}-1\right)^{-1/\theta}$ for $\theta\geq-1$. If $\theta=0$ Clayton copula reduces to the independence copula, and for $\theta>0$ the Kendall's tau association coefficient is $\tau=\theta/(2+\theta)$ and there is a lower-lower tail dependence with coefficient $\lambda_{00}=2^{-1/\theta}$. 

The four rotations of the Clayton copula are as follows.
\begin{enumerate}
\item[(i)] $0$-degree rotation:
$$C_{00}(u_1,u_2)=\left(u_1^{-\theta}+u_2^{-\theta}-1\right)^{-1/\theta}$$
\item[(ii)] $90$- and $270$-degrees rotations: 
\begin{align}
\nonumber
C_{10}(u_1,u_2)&=u_2-\left\{(1-u_1)^{-\theta}+u_2^{-\theta}-1\right\}^{-1/\theta} \\
\nonumber
C_{01}(u_1,u_2)&=u_1-\left\{u_1^{-\theta}+(1-u_2)^{-\theta}-1\right\}^{-1/{\theta}} 
\end{align}
\item[(iii)] $180$-degree rotation: 
$$C_{11}(u_1,u_2)=u_1+u_2-1+\left\{(1-u_1)^{-\theta}+(1-u_2)^{-\theta}-1\right\}^{-1/\theta}.$$
\end{enumerate}

Using (i)--(iii) and \eqref{eq:tdc}, it is not difficult to prove that the tail dependence coefficients, when the tail (corner) corresponds to the rotation, are the same $\lambda_{00}(C_{00})=\lambda_{10}(C_{10})=\lambda_{01}(C_{01})=\lambda_{11}(C_{11})$ and are given by
\begin{equation}
\label{eq:tdc2}
\lambda_{jk}(C_{jk})=\lim_{\nu\to 0}\frac{\left(2\nu^{-\theta}-1\right)^{-1/\theta}}{\nu}=\lim_{\nu\to 0}(2-\nu^\theta)^{-1/\theta}=2^{-1/\theta},
\end{equation}
for $j,k\in\{0,1\}$ and for $\theta>0$. 
If the tail coefficient does not correspond to a rotation then it has the value zero.

\section{Multivariate extensions}
\label{sec:mult}

Let $\bU=(U_1,\ldots,U_m)$ be an $m$-multivariate random vector with support on the unit hyper-cube $[0,1]^m$, for $m\geq 2$. Let $C^{(m)}$ be an $m$-dimensional symmetric copula. Most dependence measures involve two random variables, like the Pearson and Spearman correlation coefficients and Kendall's tau. Additionally, as far as we are aware, the tail dependence is only defined for a pair of random variables. Since this last measure is defined in terms of a conditional probability, it can be generalised to more than two random variables in different ways. 

We proposed a generalised tail (corner) dependence multivariate coefficient as follows. In a $m$-dimensional copula there are $2^m$ corners that can be identified by a binary sequence $\bj\in\JB=\{0,1\}^m$ such that a value of $0$/$1$ at position $l$ denotes the lower/upper corner of variable $l$ for $l=1,\ldots,m$. Let us also denote the intervals $B_0(\nu)=[0,\nu]$ and $B_1(\nu)=(\nu,1]$. Then the tail dependence coefficient for variable $l$ at corner $\bj=(j_1,\ldots,j_m)$ is defined as 
\begin{equation}
\label{eq:mtdc}
\lambda_{\bj}^l=\lim_{\nu\to 0}\P\left[U_l\in B_{j_l}(\nu)\mid\cap_{k\neq l}^m \{U_k\in B_{j_k}(\nu)\}\right],
\end{equation}
for $l=1,\ldots,m$. There will be a total of $\#\JB=2^m$ tail dependence multivariate coefficients. It is straightforward to see that each of these coefficients can be expressed in terms of the copula $C^{(m)}$. Moreover, if the copula is symmetric, then $\lambda_\bj^l$ is the same for any $l=1,\ldots,m$. 

For example, if $m=3$ there are $2^3=8$ corners. If $\bj=(0,0,1)$ and $l=2$ then $\lambda_{001}^2=\lim_{\nu\to 0}\P(U_2\leq\nu\mid U_1\leq\nu,U_3>1-\nu)$. By using the definition of conditional probability we ca re-express this tail dependence coefficient as $\lambda_{001}^2=\lim_{\nu\to 0}\left\{C^{(2)}(\nu,\nu)-C^{(3)}(\nu,\nu,\nu)\right\}/\left\{C^{(1)}(\nu)-C^{(2)}(\nu,\nu)\right\}$, with $C^{(1)}(\nu)=\nu$. 

As in the bivariate case, it is possible to rotate an $m$-variate copula to each of the $2^m$ corners. Let $\JB_k$ the subset of $\JB$ with $k$ $1$'s, i.e., if $\bj\in\JB_k$ then $\sum_{l=1}^m j_l=k$, for $k=0,\ldots,m$. Moreover, we can partition the set $\JB$ into disjoint subsets $\JB_k$ such that, $\JB=\cup_{k=0}^m\JB_k$ with $\JB_k\cap\JB_j=\emptyset$ for $k\neq j$. The number of elements in each subset is $\#\JB_k= {{m}\choose{k}}$, and $2^m=\sum_{k=0}^m {{m}\choose{k}}$. Let also define $\bu^{\bj}=(u_1^{1-j_1}(1-u_1)^{j_1},\ldots,u_m^{1-j_m}(1-u_m)^{j_m})$. 

For $k=0$ and $\bj\in\JB_0$, then $$C_\bj(\bu)=C^{(m)}(\bu^{\bj}),$$ where $C^{(m)}$ is the original copula without rotation. 
For $k=1$ and $\bj\in\JB_1$ and if the only index $1$ is situated at the $l^{th}$ position, then $$C_\bj(\bu)=C^{(m-1)}(\bu^{\bj}_{(-l)})-C^{(m)}(\bu^{\bj}).$$ 
For $k=2$ and $\bj\in\JB_2$ and if the two indexes $1$ are situated at positions $(l_1,l_2)$, then $$C_\bj(\bu)=C^{(m-2)}(\bu^{\bj}_{(-(l_1,l_2))})-\sum_{i=1}^2 C^{(m-1)}(\bu^{\bj}_{(-l_i)})+C^{(m)}(\bu^{\bj}).$$ 
For $k=3$ and $\bj\in\JB_3$ and if the three indexes $1$ are situated at positions $(l_1,l_2,l_3)$, then $$C_\bj(\bu)=C^{(m-3)}(\bu^{\bj}_{(-(l_1,l_2,l_3))})-\sum_{i<j}^3 C^{(m-2)}(\bu^{\bj}_{(-(l_i,l_j))})+\sum_{i=1}^3 C^{(m-1)}(\bu^{\bj}_{(-l_i)})-C^{m}(\bu^{\bj}).$$
We carry on until $k=m$ and $\bj\in\JB_m$, so all indexes in $\bj$ are $1$, then 
\begin{align}
\nonumber
C_\bj(\bu)=&1-\sum_{i=1}^m C^{(1)}(1-u_i)+\sum_{i<j}^m C^{(2)}(1-u_i,1-u_j)-\sum_{i<j<k}^m C^{(3)}(1-u_i,1-u_j,1-u_k)\\ 
\nonumber
&+\cdots+(-1)^m C^{(m)}(1-u_1,1-u_2,\ldots,1-u_m).
\end{align}

If a copula exhibits multivariate tail dependence in a particular corner, each of the rotated versions of the copula will also have tail dependence in one corner. To illustrate, we also consider the Clayton family whose multivariate version is defined as
\begin{equation}
\label{eq:clayton}
C^{(m)}(u_1,\ldots,u_m)=\left\{\sum_{j=1}^m(u_i^{-\theta}-1)+1\right\}^{-1/\theta} \end{equation}
for $\theta\geq 0$. 

Specifically, we concentrate in the case $m=3$. The number of possible rotations is $2^3=8$. The four subsets of rotated copulas generated by $\JB_k$ and $k=0,1,2,3$ are: 
\begin{enumerate}
\item[(iv)] For $k=0$ we have one element $$C_{000}(\bu)=C^{(3)}(\bu)=\left(u_1^{-\theta}+u_2^{-\theta}+u_3^{-\theta}-2\right)^{-1/\theta}$$
\item[(v)] For $k=1$ we have three elements 
$$C_{100}(\bu)=\left(u_2^{-\theta}+u_3^{-\theta}-1\right)^{-1/\theta}-\left\{(1-u_1)^{-\theta}+u_2^{-\theta}+u_3^{-\theta}-2\right\}^{-1/\theta}$$
$$C_{010}(\bu)=\left(u_1^{-\theta}+u_3^{-\theta}-1\right)^{-1/\theta}-\left\{u_1^{-\theta}+(1-u_2)^{-\theta}+u_3^{-\theta}-2\right\}^{-1/\theta}$$
$$C_{001}(\bu)=\left(u_1^{-\theta}+u_2^{-\theta}-1\right)^{-1/\theta}-\left\{u_1^{-\theta}+u_2^{-\theta}+(1-u_3)^{-\theta}-2\right\}^{-1/\theta}$$
\item[(vi)] For $k=2$ we have three elements
\begin{align}
\nonumber
C_{110}(\bu)=&\,u_3-\left\{(1-u_2)^{-\theta}+u_3^{-\theta}-1\right\}^{-1/\theta}-\left\{(1-u_1)^{-\theta}+u_3^{-\theta}-1\right\}^{-1/\theta}\\
\nonumber
&+\left\{(1-u_1)^{-\theta}+(1-u_2)^{-\theta}+u_3^{-\theta}-2\right\}^{-1/\theta}\\
\nonumber
C_{101}(\bu)=&\,u_2-\left\{(1-u_1)^{-\theta}+u_2^{-\theta}-1\right\}^{-1/\theta}-\left\{u_2^{-\theta}+(1-u_3)^{-\theta}-1\right\}^{-1/\theta}\\
\nonumber
&+\left\{(1-u_1)^{-\theta}+u_2^{-\theta}+(1-u_3)^{-\theta}-2\right\}^{-1/\theta}\\
\nonumber
C_{011}(\bu)=&\,u_1-\left\{u_1^{-\theta}+(1-u_2)^{-\theta}-1\right\}^{-1/\theta}-\left\{u_1^{-\theta}+(1-u_3)^{-\theta}-1\right\}^{-1/\theta}\\
\nonumber
&+\left\{u_1^{-\theta}+(1-u_2)^{-\theta}+(1-u_3)^{-\theta}-2\right\}^{-1/\theta}
\end{align}
\item[(vii)] For $k=3$ we have one element
\begin{align}
\nonumber
C_{111}(\bu)=&\,1-(1-u_1)-(1-u_2)-(1-u_3)+\left\{(1-u_1)^{-\theta}+(1-u_2)^{-\theta}-1\right\}^{-1/\theta}\\
\nonumber
&+\left\{(1-u_1)^{-\theta}+(1-u_3)^{-\theta}-1\right\}^{-1/\theta}+\left\{(1-u_2)^{-\theta}+(1-u_3)^{-\theta}-1\right\}^{-1/\theta}\\
\nonumber
&-\left\{(1-u_1)^{-\theta}+(1-u_2)^{-\theta}+(1-u_3)^{-\theta}-2\right\}^{-1/\theta}
\end{align}
\end{enumerate}

Similarly to the bivariate case, if the tail coefficient corresponds to a rotation, then we get a coefficient different from zero. Specifically for the Clayton family with $\theta>0$, using (iv)--(vii) and \eqref{eq:mtdc}, the tail coefficients are given by
\begin{equation}
\label{eq:tdc3}
\lambda_{j_1,j_2,j_3}(C_{j_1,j_2,j_3})=\lim_{\nu\to 0}\frac{\left(3\nu^{-\theta}-2\right)^{-1/\theta}}{\left(2\nu^{-\theta}-1\right)^{-1/\theta}}=\lim_{\nu\to 0}\left(\frac{3-2\nu^{\theta}}{2-\nu^{\theta}}\right)^{-1/\theta}=\left(\frac{3}{2}\right)^{-1/\theta},
\end{equation}
for $j_l\in\{0,1\}$, $l=1,2,3$. Note that the superindex in $\lambda$ has been removed because these tail coefficients are the same for any $l$ due to the symmetry in the copula. If the tail (corner) does not correspond to a rotation, the tail coefficient is zero. 

\section{Dynamic mixtures}
\label{sec:model}

Let $\bU_t=(U_{t1},U_{t2},\ldots,U_{tm})$ be an $m$-variate vector with $\un(0,1)$ marginal distributions for each $t=1,2,\ldots,T$. The aim in this section is to model the joint distribution of $\bU_t$ through a flexible copula $C_t$ which is able to capture any kind of tail dependence as it evolves in time. For that, we use all $2^m$ rotations, each one with different parameter $\theta_{t,\bj}$ for $\bj\in\JB$, and define the following mixture copula
\begin{equation}
\label{eq:mixc}
C_t(\bu_t\mid\bpi_t,\btheta_t)=\sum_{\bj\in\JB}\pi_{t,\bj}\,C_{\bj}(\bu_{t}\mid\theta_{t,\bj}),
\end{equation}
with $2^m$-dimensional parameters $\bpi_t=(\pi_{t,00\cdots0},\pi_{t,10\cdots0},\ldots,\pi_{t,11\cdots1})$, \\ $\btheta_t=(\theta_{t,00\cdots0},\theta_{t,10\cdots0},\ldots,\theta_{t,11\cdots1})$, and $C_\bj$ is a rotated copula as defined in Section \ref{sec:mult}. 
Parameters $\pi_{t,\bj}>0$ are mixture weights such that $\sum_{\bj\in\JB}\pi_{t,\bj}=1$, whereas $\theta_{t,\bj}$ are copula parameters whose parameter space depends on the specific copula chosen, for $t=1,\ldots,T$. 

It is not difficult to derive association coefficients, like Kendall's tau, and tail dependence coefficients for a mixture copula in terms of the corresponding coefficients for the individual copulas. In particular, the Kendall's tau between any par of variables, say $(U_{t,i},U_{t,k})$, $i\neq k=1,\ldots,m$, for the mixture copula \eqref{eq:mixc} is
\begin{align*}
\tau_t&=4\E\{C_t^{(2)}(U_{t,i},U_{t,k}\mid\bpi_t,\btheta_t)\}-1=4\sum_{\bj\in\JB}\pi_{t,\bj}\E\{C_\bj^{(2)}(U_{t,i},U_{t,k}\mid\btheta_t)\}-1\\
&=\sum_{\bj\in\JB}\pi_{t,\bj}\left[4\E\{C_\bj^{(2)}(U_{t,i},U_{t,k}\mid\btheta_t)\}-1\right]=\sum_{\bj\in\JB}\pi_{t,\bj}\,\tau_{t,\bj},
\end{align*}
where $\tau_{t,\bj}$ is the individual Kendall's tau, between $(U_{t,i},U_{t,k})$, for each of the mixture copula components $C_\bj$. In particular, for the Clayton family \eqref{eq:clayton}, all rotated mixture components $C_\bj$, as in (i)--(vii), share the same properties of the Clayton family, therefore their Kendall's tau coefficients are $\tau_{t,\bj}=\theta_{t,\bj}/(2+\theta_{t,\bj})$ if $\sum_{l=1}^m j_l$ is an even number, and $\tau_{t,\bj}=-\theta_{t,\bj}/(2+\theta_{t,\bj})$ if $\sum_{l=1}^m j_l$ is an odd number. Therefore, the Kendall's tau coefficient for the mixture copula \eqref{eq:mixc} of Clayton components becomes 
\begin{equation}
\label{eq:ktaum}
\tau_t=\sum_{\bj\in\JB}(-1)^{\sum_{l=1}^m j_l}\,\pi_{t,\bj}\left(\frac{\theta_{t,\bj}}{2+\theta_{t,\bj}}\right).
\end{equation}

Similarly to Kendall's tau, it is not difficult to prove that multivariate tail dependence coefficients \eqref{eq:tdc3} for mixture copula \eqref{eq:mixc} at time $t$,  become the mixture of components-wise tail dependence coefficients, that is 
\begin{equation}
\label{eq:tdcm}
\lambda_{t,\bk}(C_t)=\sum_{\bj\in\JB}\pi_{t,\bj}\lambda_{t,\bk}(C_\bj),
\end{equation}
for any $\bk\in\JB$ and with $\lambda_{t,\bk}(C_\bj)$ the $\bk$-tail dependence coefficient for rotated copula $C_{\bj}$. For the Clayton family case, $\lambda_{t,\bk}(C_\bj)>0$ only for $\bk=\bj$. Therefore, when $m=2$ and using \eqref{eq:tdc2}, we obtain that the tail dependence coefficients simplify to $\lambda_{t,j,k}(C_t)=\pi_{t,j,k}(2)^{-1/\theta_{t,j,k}}$. Now, for $m=3$ we use \eqref{eq:tdc3} and obtain that the tail dependence coefficients are $\lambda_{t,j_1,j_2,j_3}(C_t)=\pi_{t,j_1,j_2,j_3}(3/2)^{-1/\theta_{t,j_1,j_2,j_3}}$.

In summary, our mixture copula proposal \eqref{eq:mixc} is flexible enough to capture a larger class of dependence associations and all $2^m$ tail dependencies, in terms of the copula parameters $\pi_{t,\bj}$ and $\theta_{t,\bj}$ for $\bj\in\JB$ and $t=1,\ldots,T$. 

\section{Bayesian analysis}
\label{sec:bayesian}

\subsection{Prior distributions}

To allow for temporal dependence in the parameter estimation, we propose a prior dynamic process for $\bpi=\{\bpi_t\}_{1\le t \le T}$, where $\bpi_t=\{\pi_{t,\bj},\bj\in\JB\}$. Since $\sum_{\bj\in\JB}\pi_{t,\bj}=1$ for all $t$, the natural marginal prior for $\bpi_t$ would be a Dirichlet distribution with parameter $a_0\bp$, where $\bp=\{p_\bj,\bj\in\JB\}$ such that $a_0>0$, $p_\bj>0$ and $\sum_{\bj\in\JB}{p_\bj}=1$. To relate a set of Dirichlet random variables, we use ideas from \cite{nieto&gutierrez:22}, who defined dependence in univariate random variables whose distributions belong to the exponential family, and define a dynamic prior with temporal dependence as follows. 

Let $\bfeta_t=\{\eta_{t,\bj},\bj\in\JB\}\in \RB^{2^m}$ be a latent vector corresponding to each $\bpi_t$ and let $\bomega=\{\omega_\bj,\bj\in\JB\}$ be a unique latent vector such that 
\begin{equation}
\label{eq:ddir1}
\bomega\sim\dir(a_0\bp)\quad\mbox{and}\quad\bfeta_t\mid\bomega\simind\mul(a_t,\bomega),
\end{equation}
with $a_t\in\NB$, $\eta_{tk}\in\NB$ and $\sum_{\bj\in\JB}\eta_{t,\bj}=a_t$. Then, the prior dependence in $\bpi_t$ is modeled through a subset $\partial_t$ of previous latent variables $\{\bfeta_1,\bfeta_{2},\ldots,\bfeta_{t-1}\}$ 
\begin{equation}
\label{eq:ddir2}
\bpi_t\mid\bfeta\simind\dir\left(a_0\bp+\sum_{k\in\partial_t}\bfeta_{k}\right).
\end{equation}
We denote this construction as $\ddir(a_0,\ba,\bpartial)$ with $a_0>0$, $\ba=(a_1,\ldots,a_T)$ and subsets $\bpartial=\{\partial_t\}$ of lags. Different temporal dependencies can be induced by an appropriate selection of subsets of lags. For instance: \emph{moving average} type of order $q$ can be induced by defining 
\begin{equation}
\label{eq:maq}
\partial_t=\{t,t-1,\ldots,t-q\}; 
\end{equation}
\emph{seasonal dependence} of order $p$ can be induced by defining, for seasonality $s$, 
\begin{equation}
\label{eq:sp}
\partial_t=\{t,t-s,t-2s,\ldots,t-ps\}; 
\end{equation}
or a combination of moving average of order $q$ and seasonal dependence of order $p$. In general, the only requirement is that $t\in\partial_t$. 

Properties of this prior are given in the following proposition. 

\begin{proposition}
Let $\bpi=\{\bpi_t\}\sim\ddir(a_0,\ba,\bpartial)$ a sequence of vectors whose probability law is defined by \eqref{eq:ddir1} and \eqref{eq:ddir2} for $a_0>0$, $a_t\in\NB$ and subsets $\bpartial$. Then, 
\begin{itemize}
\item[(i)]
The marginal distribution for each $\bpi_t$ is $\dir(a_0\bp)$, 
\item[(ii)]
The correlation between $\pi_{t,\bj}$ and $\pi_{r,\bj}$, for $t\neq r$ and $\bj\in\JB$, does not depend on the specific $\bj$ and is given by 
$$\Cor(\pi_{t,\bj},\pi_{r,\bj})=\frac{a_0\left(\sum_{k\in\partial_t\cap\partial_r}a_{k}\right)+\left(\sum_{k\in\partial_t} a_{k}\right)\left(\sum_{k\in\partial_r} a_{k}\right)}{\left(a_0+\sum_{k\in\partial_t} a_{k}\right)\left(a_0+\sum_{k\in\partial_r} a_{k}\right)}$$
\item[(iii)]
If $a_t=0$ for all $t=1,2\ldots$ then the $\bpi_t$'s become independent.
\end{itemize}
\end{proposition}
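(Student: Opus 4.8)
The plan is to handle the three claims in turn, exploiting the Dirichlet--multinomial conjugacy built into \eqref{eq:ddir1}--\eqref{eq:ddir2} together with the conditional independence of the $\bpi_t$ given $\bfeta$.

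For part (i), I would first observe that, conditionally on $\bomega$, the latent counts $\bfeta_k$ are independent $\mul(a_k,\bomega)$, so their sum over any lag set is again multinomial: $\sum_{k\in\partial_t}\bfeta_k\mid\bomega\sim\mul(n_t,\bomega)$ with $n_t=\sum_{k\in\partial_t}a_k$. The construction \eqref{eq:ddir2} then generates $\bpi_t$ exactly as the posterior of $\bomega$ in the conjugate model $\bomega\sim\dir(a_0\bp)$, $\bm\mid\bomega\sim\mul(n_t,\bomega)$, namely $\dir(a_0\bp+\bm)$. Since averaging a posterior over the prior-predictive distribution of the data returns the prior, marginalising $\bm$ recovers $\bpi_t\sim\dir(a_0\bp)$. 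This route avoids any explicit Dirichlet--multinomial integral.

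For part (ii), I would condition on the whole latent field $\bfeta$, under which $\bpi_t$ and $\bpi_r$ are independent Dirichlet vectors by \eqref{eq:ddir2}. Writing $m_{t,\bj}=\sum_{k\in\partial_t}\eta_{k,\bj}$, the Dirichlet mean gives $\E(\pi_{t,\bj}\mid\bfeta)=(a_0p_\bj+m_{t,\bj})/(a_0+n_t)$, so by the tower property
$$\E(\pi_{t,\bj}\pi_{r,\bj})=\frac{\E\{(a_0p_\bj+m_{t,\bj})(a_0p_\bj+m_{r,\bj})\}}{(a_0+n_t)(a_0+n_r)},$$
the denominators being deterministic. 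The marginal moments I need, $\E(\pi_{t,\bj})=p_\bj$ and $\V(\pi_{t,\bj})=p_\bj(1-p_\bj)/(a_0+1)$, follow from part (i) and the fact that a Dirichlet coordinate is $\be(a_0p_\bj,a_0(1-p_\bj))$. The crux, and the step I expect to be the main obstacle, is the cross moment $\E(m_{t,\bj}m_{r,\bj})=\sum_{k\in\partial_t}\sum_{l\in\partial_r}\E(\eta_{k,\bj}\eta_{l,\bj})$, which I would evaluate by conditioning on $\bomega$ and splitting according to whether $k=l$. For $k\neq l$ the counts are conditionally independent, giving $\E(\eta_{k,\bj}\eta_{l,\bj})=a_ka_l\,\E(\omega_\bj^2)$; for $k=l$, which forces $k\in\partial_t\cap\partial_r$, the binomial second moment gives $\E(\eta_{k,\bj}^2)=a_kp_\bj+(a_k^2-a_k)\E(\omega_\bj^2)$, where $\E(\omega_\bj^2)=p_\bj^2+p_\bj(1-p_\bj)/(a_0+1)$. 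Crucially, the $\sum_{k\in\partial_t\cap\partial_r}a_k^2\,\E(\omega_\bj^2)$ contributions cancel between the two cases, so the intersection enters only through $s=\sum_{k\in\partial_t\cap\partial_r}a_k$. Substituting into $\Cov(\pi_{t,\bj},\pi_{r,\bj})=\E(\pi_{t,\bj}\pi_{r,\bj})-p_\bj^2$, the quadratic-in-$p_\bj$ pieces cancel and every surviving term carries a factor $p_\bj(1-p_\bj)$, yielding $\Cov=p_\bj(1-p_\bj)(a_0s+n_tn_r)/\{(a_0+1)(a_0+n_t)(a_0+n_r)\}$. Dividing by $\sqrt{\V(\pi_{t,\bj})\V(\pi_{r,\bj})}=p_\bj(1-p_\bj)/(a_0+1)$ cancels the dependence on $p_\bj$ entirely and returns the stated correlation, which also explains why the answer is free of $\bj$.

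Finally, part (iii) is immediate: if $a_t=0$ for every $t$, then each $\bfeta_t\mid\bomega\sim\mul(0,\bomega)$ is degenerate at $\bzero$ irrespective of $\bomega$, so $\bfeta=\bzero$ almost surely and \eqref{eq:ddir2} reduces to $\bpi_t\mid\bfeta\simind\dir(a_0\bp)$ with no residual conditioning. The joint law therefore factorises into independent $\dir(a_0\bp)$ marginals; consistently, setting all $a_k=0$ in the formula of part (ii) returns correlation zero.
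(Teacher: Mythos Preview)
Your proof is correct and follows essentially the same route as the paper's: parts (i) and (iii) coincide almost verbatim, and for part (ii) both arguments condition first on $\bfeta$ (exploiting conditional independence of the $\bpi_t$) and then on $\bomega$ to handle the cross moments of the $\eta_{k,\bj}$. The only cosmetic difference is that the paper packages the computation as two nested applications of the law of total covariance and splits the lag sums into $\partial_t\cap\partial_r$ versus $\partial_t\setminus\partial_r$, whereas you compute $\E(\pi_{t,\bj}\pi_{r,\bj})$ directly and split the double sum by diagonal ($k=l$) versus off-diagonal terms; the algebra is equivalent.
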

\begin{proof}

\noindent
For (i) we rely on conjugacy properties of the Dirichlet multinomial Bayesian updating \citep{bernardo&smith:00}. This states that if $\bfeta_t$, $t=1,2,\ldots$ are conditionally independent given $\bomega$ in \eqref{eq:ddir1}, whose prior is $\bomega \sim \dir(a_0\bp)$, then the posterior distribution for $\bomega$ given the $\bfeta_t$'s is $\dir\left(a_0\bp+\sum_{t}\bfeta_t\right)$. Replacing $\bomega$ in the posterior by $\bpi_t$ we obtain that the marginal distribution for $\bpi_t$ is the same as the prior for $\bomega$. 

\noindent
For (ii) we first note that for a specific $\bj$, the distributions for $\omega_\bj$, $\eta_{t,\bj}$ and $\pi_{t,\bj}$ reduce to beta, binomial and beta, respectively.
To obtain the correlation we rely on iterative formulae. The covariance is $\Cov(\pi_{t,\bj},\pi_{r,\bj})=\E\{\Cov(\pi_{t,\bj},\pi_{r,\bj}\mid\bfeta)\}+\Cov\{\E(\pi_{t,\bj}\mid\bfeta),\E(\pi_{r,\bj}\mid\bfeta)\}$, where the first term is zero due to conditional independence. Then $$\Cov(\pi_{t,\bj},\pi_{r,\bj})=\Cov\left\{\frac{a_0p_\bj+\sum_{k\in\partial_t}\eta_{k,\bj}}{a_0+\sum_{k\in\partial_r} a_{k}},\frac{a_0p_\bj+\sum_{k\in\partial_r}\eta_{k,\bj}}{a_0+\sum_{k\in\partial_r} a_{k}}\right\},$$
which, after canceling the additive constants and using the linearity of the covariance, becomes
$$\Cov(\pi_{t,\bj},\pi_{r,\bj})=\frac{1}{\left(a_0+\sum_{k\in\partial_t} a_{k}\right)\left(a_0+\sum_{k\in\partial_r} a_{k}\right)}\Cov\left\{\sum_{k\in\partial_t}\eta_{k,\bj},\sum_{k\in\partial_r}\eta_{k,\bj}\right\}.$$
After using the iterative formula for a second time, we get
\begin{equation}\E\left[\Cov\left\{\left.\sum_{k\in\partial_t}\eta_{k,\bj},\sum_{k\in\partial_r}\eta_{k,\bj}\right|\bomega\right\}\right]+\Cov\left\{\E\left(\left.\sum_{k\in\partial_t}\eta_{k,\bj}\right|\bomega\right),\E\left(\left.\sum_{k\in\partial_r}^q\eta_{k,\bj}\right|\bomega\right)\right\}.\label{cov}
\end{equation}
Within each sum we can isolate the common part as $\sum_{k\in\partial_t}\eta_{k,\bj}=\sum_{k\in\partial_t\cap\partial_r}\eta_{k,\bj}+\sum_{k\in\partial_t-\partial_r}\eta_{k,\bj}$ and $\sum_{k\partial_r}\eta_{k,\bj}=\sum_{k\in\partial_t\cap\partial_r}\eta_{k,\bj}+\sum_{k\in\partial_r-\partial_t}\eta_{k,\bj}$, and using covariance properties and conditional independence, \eqref{cov} becomes
$$\E\left\{\V\left(\left.\sum_{k\in\partial_t\cap\partial_r}\eta_{k,\bj}\right|\omega_\bj\right)\right\}+\Cov\left\{\sum_{k\in\partial_t} a_{k}\omega_\bj,\sum_{k\in\partial_r} a_{k}\omega_\bj\right\}.$$
The first expected value, after obtaining the conditional variance is $\E\{\sum_{k\in\partial_t\cap\partial_r}a_{k}\omega_\bj(1-\omega_\bj)\}=(\sum_{k\in\partial_t\cap\partial_r}a_{k})\E(\omega_\bj-\omega_\bj^2)$ with $\E(\omega_\bj-\omega_\bj^2)=\E(\omega_\bj)-\E^2(\omega_\bj)-\V(\omega_\bj)=a_0\V(\omega_\bj)$. The second term is $(\sum_{k\in\partial_t} a_{k})(\sum_{k\in\partial_r} a_{k})\V(\omega_\bj)$. In conclusion, we obtain  $$\Cov(\omega_{t,\bj},\omega_{r,\bj})=\frac{a_0\left(\sum_{k\in\partial_t\cap\partial_r}a_{k}\right)+\left(\sum_{k\in\partial_t} a_{k}\right)\left(\sum_{k\in\partial_r} a_{k}\right)}{\left(a_0+\sum_{k\in\partial_t} a_{k}\right)\left(a_0+\sum_{k\in\partial_r} a_{k}\right)}\V(\omega_\bj).$$
Since $\omega_\bj$, $\pi_{t,\bj}$ and $\pi_{r,\bj}$ all have the same beta marginal distribution, (ii) is demonstrated. 

\noindent
For (iii) we note that $a_t=0$ for all $t$ implies that $\bfeta_t=0$ with probability one so the dependence disappears and $\bpi_t$ become independent with marginal distribution $\dir(a_0\bp)$. 
\end{proof}

The strength of dependence in the prior for $\bpi$ depends on the model parameters $a_0$, $\ba$ and subsets $\bpartial$. Larger values of any of the first two induce stronger dependence. More shared elements in $\partial_t$ and $\partial_r$ also indicate stronger dependence. However, if the intersection between sets $\partial_t$ and $\partial_r$ is empty, the correlation is still positive. 

Prior distributions are completed by assigning hierarchical gamma distributions for each $\theta_{t,\bj}$, so that information is shared across times $t$ for each $\bj$. That is, 
\begin{equation}
\label{eq:theta}
\theta_{t,\bj}\mid\beta_\bj\simind\ga(d_\bj,\beta_\bj),\;\;\mbox{and}\;\; \beta_\bj\sim\ga(e_\bj,g_\bj)
\end{equation}
for $t\ge 1$ and $\bj\in\JB$. 

\subsection{Posterior distributions}

Let $\bU_{t,i}=(U_{1,t,i},\ldots,U_{m,t,i})$ for $i=1,\ldots,n_t$ a sample of size $n_t$ from model \eqref{eq:mixc} for each $t=1,\ldots,T$. Let $\bZ_{t,i}$ be a latent vector that identifies the mixture component from where observation $i$ is coming from, that is, $\bZ_{ti}=\{Z_{\bj,t,i},\bj\in\JB\}\sim\mul(1,\bpi_t)$. Assuming for the moment that together with $\bU_{t,i}$ we observe $\bZ_{t,i}$, then the extended likelihood has the form
$$f(\bu,\bz\mid\bpi,\btheta)=\prod_{t=1}^T\prod_{i=1}^{n_t}\prod_{\bj\in\JB}\left\{\pi_{t,\bj}f_\bj(u_{1,t,i},\ldots,u_{m,t,i}\mid\theta_{t,\bj})\right\}^{z_{\bj,t,i}},$$
where $f_\bj(\bu_{t,i}\mid\theta_{t,\bj})$ are the density functions associated to each of the copula rotations $C_\bj$ for $\bj\in\JB$. 

In particular, for the Clayton family \eqref{eq:clayton}, the corresponding density becomes
\begin{equation*}
f_C(\bu)=\left\{\sum_{i=1}^{m}u_i^{-\theta}-m-1\right\}^{-1/\theta-m}\prod_{i=1}^{m}\{1+(i-1)\theta\}u_i^{-\theta-1}.
\end{equation*}
For any rotated version we just evaluate at $\tilde{\bu}=(\tilde{u}_1, \tilde{u}_2, \ldots, \tilde{u}_m)$ where $\tilde{u}_i$ can be $u_i$ or $1-u_i$. 

The prior distribution for $(\bpi,\btheta)$ is defined by equations \eqref{eq:ddir1}, \eqref{eq:ddir2} and \eqref{eq:theta}. Again, extending the prior to include the latent variables $\bfeta$ and $\bomega$ we get
$$f(\bpi,\bfeta,\bomega)=\dir(\bomega\mid a_0\bp)\prod_{t=1}^T\dir\left(\bpi_t\left|a_0\bp+\sum_{j\in\partial_t}\bfeta_{j}\right.\right)\mul(\bfeta_t\mid a_t,\bomega)$$
and 
$$f(\btheta)=\prod_{\bj\in\JB}\ga(\beta_\bj\mid e_\bj,g_\bj)\prod_{t=1}^T\ga(\theta_{t,\bj}\mid d_\bj,\beta_\bj),$$
independent of each other. 

Posterior distributions are characterized through their full conditional distributions. These include actual parameters as well as latent variables and are given as follows.
\begin{enumerate}
\item[(a)] The posterior conditional for $\bZ_{ti}$ is
$$\bZ_{ti}\mid\rest\sim\mul(1,\bpi_t^*),$$
where $\bpi^*=\{\pi_{t,\bj}^*\}$ and $$\pi_{t,\bj}^*=\frac{\pi_{t,\bj}f_\bj(\bu_{t,i}\mid\theta_{t,\bj})}{\sum_{\bk\in\JB}\pi_{t,\bk}f_\bk(\bu_{t,i}\mid\theta_{t,\bk})}.$$ 
\item[(b)] The posterior conditional for $\bpi_t$ is
$$\bpi_t\mid\rest\sim\dir\left(a_o\bp+\sum_{k\in\partial_t}\bfeta_{k}+\sum_{i=1}^{n_t}\bz_{t,i}\right).$$
\item[(c)] The posterior conditional for $\bfeta_t$ is
$$f(\bfeta_t\mid\rest)\propto\left\{\prod_{\bj\in\JB}\frac{\left(\omega_\bj\prod_{l\in\varrho_t}\pi_{l,\bj}\right)^{\eta_{t,\bj}}}{\Gamma(\eta_{t,\bj}+1)\prod_{l\in\varrho_t}\Gamma\left(a_0p_\bj+\sum_{k\in\partial_l}\eta_{k,\bj}\right)}\right\}I\left(\sum_{\bj\in\JB}\eta_{t,\bj}=a_t\right),$$ where $\varrho_t=\{l:t\in\partial_l\}$ is the set of inverse subsets. 
\item[(d)] The posterior conditional for $\bomega$ is
$$f(\bomega\mid\rest)=\dir\left(\bomega\left|c_0\bp+\sum_{t=1}^T\bfeta_t\right.\right).$$
\item[(e)] The posterior conditional for $\theta_{t,\bj}$ is
$$f(\theta_{t,\bj}\mid\rest)\propto \theta_{t,\bj}^{d_\bj-1}e^{-\beta_\bj\theta_{t,\bj}}\prod_{i=1}^{n_t}\left\{f_\bj(\bu_{t,i}\mid\theta_{t,\bj})\right\}^{z_{t,\bj,i}}.$$
\item[(f)] The posterior conditional for $\beta_\bj$ is
$$\beta_\bj\mid\rest\sim\ga\left(e_\bj+Td_\bj\,,\,g_\bj+\sum_{t=1}^T\theta_{t,\bj}\right).$$
\end{enumerate}

Posterior inference will rely on the implementation of a Gibbs sampler \citep{smith&roberts:93} based on the previous posterior conditional distributions. Sampling from (a), (b), (d) and (f) is straightforward since they are of standard form. To sample from (c), since $\bfeta_t$ is a vector of dimension $2^m$ with a sum restriction, it is easier if we sample from each of the components $\eta_{t,\bj}$ for $\bj\in\JB_{(-\bone)}=\JB-\bone$, with $\bone=11\cdots1$ the vector of all $1$'s, using $f(\eta_{t,\bj}\mid\rest)\propto$
$$\frac{\left\{\omega_\bj\prod_{l\in\varrho_t}\pi_{l,\bj}/\left(\omega_\bone\prod_{l\in\varrho_t}\pi_{l,\bone}\right)\right\}^{\eta_{t,\bj}}I\left(\eta_{t,\bj}\in\{0,1,\ldots,a_t-\sum_{\bk\in\JB_{(-\bone)}}\eta_{t,\bk}\}\right)}{\Gamma(\eta_{t,\bj}+1)\prod_{l\in\varrho_t}\Gamma\left(a_0p_\bj+\sum_{k\in\partial_l}\eta_{k,\bj}\right)\Gamma(\eta_{t,\bone}+1)\prod_{l\in\varrho_t}\Gamma\left(a_0p_\bone+\sum_{k\in\partial_l}\eta_{k,\bone}\right)},$$
with $\eta_{t,\bone}=a_t-\sum_{\bj\in\JB_{(-\bone)}}\eta_{t,\bj}$. 
Sampling from (e) will require a Metropolis-Hastings step \citep{tierney:94}. We suggest to use an adaptive random walk proposal defined as follows. At iteration $(r+1)$ sample $\theta_{t,\bj}^*\sim\ga(\kappa,\kappa/\theta_{t,\bj}^{(r)})$ and accept it with probability 
$$\alpha(\theta_{t,\bj}^*,\theta_{t,\bj}^{(r)})=\frac{f(\theta_{t,\bj}^*\mid\rest)\ga(\theta_{t\bj}^{(r)}\mid\kappa,\kappa/\theta_{t,\bj}^*)}{f(\theta_{t,\bj}^{(r)}\mid\rest)\ga(\theta_{t,\bj}^*\mid\kappa,\kappa/\theta_{t,\bj}^{(r)})},$$
where $\alpha$ is truncated to the interval $[0,1]$ and $\kappa$ is a tuning parameter that controls the acceptance rate. We adapt $\kappa$ following  the method of \cite{nieto:24}. The adaptation method uses batches of $50$ iterations and for every batch $h$ we compute the acceptance rate $AR^{(h)}$ and increase $\kappa^{(h+1)}=\kappa^{(h)}1.01^{\sqrt{h}}$ if $AR^{(h)}<0.3$ and decrease $\kappa^{(h+1)}=\kappa^{(h)}1.01^{-\sqrt{h}}$ if $AR^{(h)}>0.4$, with $\kappa^{(1)}=1$ as starting value. This adaptation scheme satisfies diminishing adaptation as $h \rightarrow \infty$ and in the applications we restrict the parameters to a compact thus ensuring that the sampler is valid \citep{rr2007}. 

Implementation code for our dynamic Clayton mixture model, for dimensions $m=2$ and $m=3$, can be found at the GitHub repository \url{https://github.com/RuyiPan/TD-MRC}.

\section{Numerical analyses}
\label{sec:numerical}

\subsection{Simulation study}

We conduct a comprehensive simulation study with $m=2$ to evaluate the performance of the proposed model. The true generative model is set using $\theta_t=(\theta_{t,00}, \theta_{t,10}, \theta_{t,01}, \theta_{t,11})=(5,3,4,3)$ for $t=1,\ldots,T$, with $T=20$. The weights of rotated Clayton copulas are chosen to be linearly dependent in time. More specifically, we first set $\bpi_1=(\pi_{1,00}, \pi_{1,10}, \pi_{1,01}, \pi_{1,11})=(0.4,0.25,0.25,0.1)$ as the initial values at $t=1$ and subsequently the  weights are constructed using $\pi_{t,00}=0.95\pi_{t-1,00}, \pi_{t,10}=1.05\pi_{t-1,10}, \pi_{t,11}=0.1, \pi_{t,01}=1-\pi_{t,00}-\pi_{t,10}-\pi_{t,11}$ for $t=2,\ldots,20$. We sampled $n_t=300$ realizations from this model for each time $t=1,\ldots,T$ as the simulated data. 

For the prior distributions, we set hyper-parameters $a_0=1$, and $e_k=g_k=1$. To evaluate the performance of the model in capturing the temporal dependence, we considered a moving average type of order $q$, that is, the subsets of lags are defined by \eqref{eq:maq}. We considered different hyper-parameters for the dynamic process: $a_t=0,1,10,20,30,40$ and $q=0,1,\ldots,7$. The model with $a_t$ and $q$ is denoted as ${\MC}_{a_t,q}$. We ran the MCMC for $7,000$ iterations. To determine the burn-in, we monitor the adaptive $\kappa$ parameter and the acceptance rate for each batch. These are included in Figure \ref{fig:kappa_acc_sim} where we observe that the $\kappa$ becomes stable and the acceptance rate stabilizes between 0.3 and 0.4 after 60 batches. Therefore we set the burn-in to be equal to $3,000=60\times 50$ iterations. This also confirms that the adaptation method proposed at the end of Section \ref{sec:bayesian} performs well. Computing time is around 50 minutes for each run on an Intel Core i9 processor at 2.3 GHz with 32 GB of RAM. 

To assess model performance, we computed two goodness of fit (gof) measures, the logarithm of the pseudo marginal likelihood (LPML) \citep{geisser&eddy:79} and Watanabe–Akaike Information Criterion (WAIC) \citep{watanabe&opper:10}. Table \ref{tab:lpml_waic_sim} shows these values. In general, the two gof measures concur in determining the best model for each value of $q$. Briefly put, for smaller values of $a_t$, better fitting is achieved for larger orders of dependence $q$ in the $\bpi$, whereas for larger values of $a_t$, smaller orders of dependence $q$ produce better fit. Overall, the best model is, ${\MC}_{30,7}$, obtained with $a_t=30$ and $q=7$.

Two more comparisons are also included in Table \ref{tab:lpml_waic_sim}. The case of independence across times for $\pi_{t,\bj}$ is obtained when $a_t=0$, regardless of the value of $q$. Goodness of fit statistics show that the independence fitting is not the worst, but is definitely underperforming in comparison to the other dependence models. Additionally, we also consider the model that assumes independence in the $\theta_{t,\bj}$. This is achieved by considering a very low variance in $\beta_\bj$, as obtained by setting $e_\bj=g_\bj=1000$. This latter model produces inferior gof measures as compared to the hierarchical priors. 

To assess in more detail our model's performance, we compare posterior estimates of $\bpi$ and $\btheta$ with the true values. We use the best fitting model and use posterior means as point estimates, together with quantiles 2.5\% and 97.5\% to produce 95\% credible intervals. Figure \ref{fig:pi_theta_predict_sim} (left panel) shows posterior estimates of $\pi_{t,\bj}$ as time series for $t=1,\ldots,20$ in four panels for $\bj\in\{00,10,01,11\}$. Posterior estimates follow very closely the path of the true values. Similarly, Figure \ref{fig:pi_theta_predict_sim} (right panel) shows posterior estimates of $\theta_{t,\bj}$ as time series in four panels. All the true values lie within the 95\% credible intervals. We note that credible intervals for $\theta_{t,00}$ are narrower at the beginning and become wider toward the end (bottom-left panel) whereas the credible intervals for $\theta_{t,10}$ are wider at the beginning and narrow toward the end (bottom-right panel). The credible intervals for $\theta_{t,11}$ remain wide overall (top-right panel), while those for $\theta_{t,01}$ are narrow (top-left panel) in general. Wider credible intervals are due to smaller weights (less data points) associated to the corresponding mixture components. Specifically, the higher variability for $\theta_{t,11}$ for larger $t$ is a consequence of the smaller weights for the third component $\pi_{t,11}$.

We compare the best fit produced by  ${\cal{M}}_{a_t,q}$ with the dynamic copula model of Hafner \& Manner  \cite{hafner&manner:12}. This latter model assumes a Gaussian copula with time-varying association parameter $\rho_t$. It relies on a Fisher transformation (inverse hyperbolic tangent) of the association parameter as $\lambda_t=(1/2)\log\left((1+\rho_t)/(1-\rho_t)\right)$ and models the dynamics via an autoregressive process of the form $\lambda_t=\alpha+\beta\lambda_{t-1}+\epsilon_t$ with $\epsilon_t\sim\no(0,\tau)$. We perform a Bayesian analysis for this model with vague prior distributions $\alpha\sim\no(0,0.01)$, $\beta\sim\no(0,0.01)$ and $\tau\sim\ga(0.01,0.01)$. We will refer to this model as dynamic Gaussian. 

As a second competitor, we consider a simple Clayton copula, which is the result of assigning fixed weight one to the mixture model and keeping the exchangeable prior on the association copula time varying parameters. 

To compare, we compute the log predictive scores (LPS) defined in \cite{geweke&amisano:10}. We fit models with data up to time $t-1$ and compute the LPS for time $t$, that is, $LPS(t)=\sum_{i=1}^{n_t}\log f(\bu_{t,i}\mid \bu_{t-1})$, where the predictive distribution is approximated via Monte Carlo as $$f(\bu_{t,i}\mid \bu_{t-1})=(1/R)\sum_{r=1}^R f(\bu_{t,i}\mid \bpi_t^{(r)},\btheta_t^{(r)})$$ and $(\bpi_t^{(r)},\btheta_t^{(r)})$ are obtained from the posterior distribution $f(\bpi_t,\btheta_t\mid \bu_1,\ldots,\bu_{t-1})$ for a total of $R$ iterations. 

The LPS measures at $t=20$ as well as the LPML and WAIC for the different models are reported in Table \ref{tab:sim_model_comparison}. The performance of the simple Clayton copula is the worst, as expected. The dynamic Gaussian improves a little the goodness of fit measures, but is far behind the dynamic Clayton mixtures. When removing the $n_{20}$ data points from the fitting, the best performance is achieved when taking $a_t=20$ and $q=7$.

\subsection{Bivariate real data analysis}

We also assess how well our model can capture the relationship between variables in a real life application where data is generated from some unknown distribution, rather than directly from a mixture copula.

We used the Environment and Climate Change Canada (ECCC) data catalogue from the government of Canada. The ECCC managed the National Air Pollution Surveillance Program (NAPS), which began in 1969 and is now comprised of nearly 260 stations in 150 rural and urban communities reporting to the Canada-wide air quality database (for more details about the dataset, please visit \href{https://data-donnees.az.ec.gc.ca/data/air/monitor/national-air-pollution-surveillance-naps-program/?lang=en}{\it{https://data-donnees.az.ec.gc.ca/data/air/monitor}}). 

Specifically, we selected ozone ($O_3$) and particulate matter with diameters 2.5 and smaller ($PM_{2.5}$) as the bivariate data. We used the hourly data from the years 2017 to 2019. Due to a large number of missing values, we took averages across hours and across days to produce monthly data for each station, $t=1,\ldots,T$ with $T=36$ for a total time span of three years. The number of stations varies across months, specifically we have $n_t=177$ for the months in 2017, $n_t=187$ for 2018 and $n_t=189$ for 2019. 

Since our focus is not on the marginal distributions, but on the association between these two variables, we applied the modified rank transformation \citep{deheuvels:79} to produce data in the interval $[0,1]$. Specifically, if observed data is denoted by $(X_{1ti},X_{2ti})$, we compute the empirical cdf's, independently for each variable, say $\widehat{F}_{X_1}$ and $\widehat{F}_{X_2}$ and apply the inverse transformation $U_{1ti}=\widehat{F}_{X_1}^{-1}(X_{1ti})$ and $U_{2ti}=\widehat{F}_{X_2}^{-1}(X_{2ti})$. 

To explore the data, we computed empirical Kendall's tau and Spearman's rho association coefficients per month. These are shown in Figure \ref{fig:kappa_acc_sim}. In both cases the dependence is cyclical around zero, reaching its maximum in June/July and its minimum in October/November. This suggests that a seasonal specification of our model seems to be a good candidate to capture these cycles. For completeness we considered three types of models: moving average type of order $q$ as in \eqref{eq:maq}; seasonal model of order $p$ with annual seasonality $s=12$ as in \eqref{eq:sp}; and a combination of moving average and seasonal. We denote the model ${\cal{M}}_{a_t,q,p}$ where the indexes describe chosen values for $a_t$, $q$ and $p$. 

Similar to the simulation study, we set the parameters $a_0=1$ and $e_k=g_k=1$ to define the prior distributions. In this real data analysis, we also varied the dependence parameters $a_t=0,1,\ldots,5$, $q=0,1,\ldots,4$ and $p=0,1,2$ to assess the performance of the model under different strengths of temporal dependence. We ran the MCMC for $10,000$ iterations and set the burn-in equal to $5,000$. Computing time is around 60 minutes for each run in an Intel Core i9 processor at 2.3 GHz with 32 GB of RAM. 

Table \ref{tab:LPML_WAIC_real} shows the LPML and WAIC values for the different prior specifications. The two gof measures agree that  the  preferred model is ${\MC}_{1,1,2}$, i.e. when $a_t=1$, $q=1$ and $p=2$. In summary, a combination of moving average and seasonality are required to model this particular dataset.  

In Figure \ref{fig:pi_theta_predict_real} we show posterior estimates of the weights $\bpi$ (left panel) and copula coefficients $\btheta$ (right panel), respectively. The cyclical monthly dependence is captured by the weights. Since the first and third components of the mixture induce positive dependence, and second and fourth induce negative dependence, there is an opposite behaviour between the pairs $(\pi_{t,00},\pi_{t,11})$ and $(\pi_{t,10},\pi_{t,01})$. The former reaches its peak in the summer and the latter in the autumn-winter, however the peaks within the second pairs do not occur exactly at the same months, $\pi_{t,10}$ has its peak around September-October (autumn), whereas $\pi_{t,01}$ has its peak around December-January (winter). Among the four components, component 3 is the one with slightly smaller weights,  apart from the summer of the year 2019 where $\pi_{t,11}$ has two peaks in May and September of around $0.75$ and $0.5$, respectively. Therefore, our mixture model is able to capture the seasonal dynamics in the data.  

The strength of the association between the pair $(O_3,PM_{2.5})$ is determined by parameters $\theta_{t,\bj}$. Their posterior estimates are all around slightly below the value of one, with $\theta_{t,10}$ and $\theta_{t,01}$ showing more variability along time. Uncertainty in the estimation of $\theta_{t,11}$ is somehow higher, due to the slightly smaller weights $\pi_{t,11}$ and thus smaller sample size for estimating $\theta_{t,11}$. According to $\theta_{t,00}$, positive dependence is particularly higher in the summer of the years 2017 and 2019 with a lower-lower tail dependence. On the other hand, looking at $\theta_{t,10}$ and $\theta_{t,01}$, negative dependence is high in the winter of the three years. 

We can further assess the tail dependence in the four corners of the unit square by computing the tail dependence coefficients $\blambda_t$, given in \eqref{eq:tdcm}. These are reported in Figure \ref{fig:tail_coefficient_predict_real}. We first note that none of them is larger than $1/2$, the only exception being the upper-lower $\lambda_{t,10}$ in October of 2018, where perhaps $O_3$ was extremely high and $PM_{2.5}$ was extremely low in that month. The lower-lower and upper-upper tail parameters $\lambda_{t,00}$ and $\lambda_{t,11}$ show very similar behaviour, they are most of the time close to zero, both with moderate peaks in July of the three years. On the other hand, the upper-lower and lower-upper tail parameters $\lambda_{t,10}$ and $\lambda_{t,01}$ do not behave exactly in the same way. These have wider peaks than the previous tail coefficients with moderate tail dependencies in the autumn for $\lambda_{t,10}$ and in the winter for $\lambda_{t,01}$.  

Finally, we show joint density estimates in Figure \ref{fig:joint_predict_real} as heat plots, together with scatter plots of the data for each month $t$. We particularly concentrate on the months where the dependence changes from negative to positive. This transition is consistent along the three years of study for the months of June and July. It is interesting to see that August is a transition month, where in 2017 and 2019 the dependence is 4-way, i.e., the four mixture components of our model are present, in fact the estimated weights and coefficients are: $\bpi_{2017-8}=(0.41,0.52,0.02,0.05)$, $\btheta_{2017-8}=(0.69,0.69,0.81,0.59)$; $\bpi_{2018-8}=(0.07,0.86,0.04,0.03)$, $\btheta_{2018-8}=(0.48,0.31,0.84,0.50)$; $\bpi_{2019-8}=(0.11,0.59,0.09,0.21)$, $\btheta_{2019-8}=(0.57,0.31,0.59,0.58)$. What makes the heat plots to show the 4-way dependence is a combination of the weight $\pi_{t,\bj}$ and the intensity $\theta_{t,\bj}$. 

We compare the fit of ${\MC}_{1,1,2}$ with the dynamic Gaussian and simple Clayton copula and carry out two validation studies.  In the first validation study we partition the dataset into two sets, fitting and testing. For each month $t=1,\ldots,T$ we took $n_{1t}=140$ observations for fitting and $n_{2t}=n_t-140$ for testing. We estimate the model parameters with the fitting set and use the testing set to predict $O_3$ conditional on the observed value of $PM_{2.5}$. For this we use the conditional copula $C_t(u_{1,t}\mid u_{2,t},\bpi_t,\btheta_t)=\sum_{\bj}\pi_{t,\bj}C_k(u_{2,t}\mid u_{1,t},\theta_{t,\bj})$ with $C_\bj(u_{2,t}\mid u_{1,t},\theta_{t,\bj})=(\partial/\partial u_{1,t})C_\bj(u_{1,t},u_{2,t}\mid\theta_{t,\bj})$ for $\bj\in\{00,10,01,11\}$, and obtain the posterior predictive mean $\widehat{u}_{2,t}=\E(U_{2,t}\mid u_{1,t},\bpi_t,\btheta_t)$ as point prediction. 

We assess model performance by computing the mean square error between the observed $u_{2,t}$ and predicted $\widehat{u}_{2,t}$ for $O_3$ in the test set, i.e. $MSE=\left(\sum_{t=1}^T n_{2t}MSE_t\right)/\left(\sum_{t=1}^T n_{2t}\right)$, where $MSE_t=(1/n_{2t})\sum_{i=1}^{n_{2t}}(u_{2,t}-\widehat{u}_{2,t})^2$, as well as the LPML and WAIC goodness of fit measures for the fitting sets.  Results from the first validation study are included in Table \ref{tab:validation_real}. All four gof statistics for the dynamic Clayton mixture model are better than those obtained with the dynamic Gaussian and simple Clayton copula models, confirming that our model is preferred for the analysis. 

The second validation study consists in comparing the log predictive scores (LPS) defined in \cite{geweke&amisano:10}. We fit models with data up to time $t-1$ and compute the LPS for time $t$. We repeat this for times $t=s+1,\ldots,T$ and aggregate the scores as $$LPS=\sum_{t=s+1}^T\sum_{i=1}^{n_t}\log f(\bu_{t,i}\mid \bu_{t-1}).$$  In particular we took $s=30$. The values of LPS are reported in Table \ref{tab:real_model_comparison}. Here we observe that the simple Clayton has the worst predictive scores. For times $t=31,32,33,35$ the dynamic Gaussian copula obtains a better predictive score, and for times $t=34,36$ the dynamic Clayton mixtures has better performance. However, when aggregating the predictive scores for the six predicted times, our proposed model achieves the best performance. 

\subsection{Multivariate real data analysis}

Using the same Canadian pollution repository ECCC, we  selected $PM_{2.5}$, nitrogen dioxide ($NO_2$) and sulfur dioxide ($SO_2$) to perform a multivariate analysis with $m=3$ variables. Again, hourly data were averaged to produce monthly data for years 2017 to 2019. In total we have $T=36$ months and $n_t=91$ stations for each month in 2017, $n_t=104$ for 2018, and $n_t=100$ for 2019. We applied the modified rank transformation to produce data in the interval $[0,1]$.

We first performed an exploratory analysis and computed pairwise empirical Kendall's tau coefficients for the three variables. These are reported in Figure \ref{fig:real_tau}. Dependencies are not seasonal as in the bivariate case and the three coefficients fluctuate between negative and positive values. Largest positive dependence occurs between $PM_{2.5}$ and $NO_2$, and lowest negative dependence between $NO_2$ and $CO_2$. 

Prior specifications, dependence search and MCMC settings were the same as in the bivariate real data analysis. Table \ref{tab:LPML_WAIC_real2} reports the LPML and WAIC gof statistics. Both measures agree on the best model which is $\MC_{4,4,2}$, that is, a dynamic total mass $a_t=4$, a moving average order of $q=4$ and a seasonal order of $p=2$. Although the seasonality was not clear in the exploratory analysis, the best model is using information from previous two years. 

Posterior inference for parameters $\bpi$ and $\btheta$ are shown in Figure \ref{fig:pi_dim3_predict_real} obtained with the preferred model. We observe that component $000$ (no rotation) dominates the mixture with the highest weight $\pi_{000}$ of around $75\%$, followed by component $001$ with the remaining $25\%$. The rest of the components have estimated weights close to zero, with component $111$ having sometimes a small positive weight in the winter 2017-2018 and in spring of 2019. Copula coefficients were also estimated. Point estimates are all different than zero for all mixture components, however credible intervals are a lot narrower for those components with high weight. Largest $\btheta$ values correspond to mixture $110$, but the uncertainty is too high (wide credible intervals) to be trusted. 

Joint density estimates are presented as pairwise heat maps in Figure \ref{fig:joint_predict_real_PM25_NO2_SO2}. We only show the last trimester of 2017 and each row corresponds to a different pair, as follows: $(PM_{2.5},NO_2)$ in the top row, $(PM_{2.5},SO_2)$ in the middle row, and $(NO_2,SO_2)$ in the bottom. In all bivariate density estimates we see the presence of tail dependence in the $00$ (lower-lower) corner, whereas in the second and third row we see a $01$ (lower-upper) tail dependence. This corresponds to $SO_2$ which is plotted in the y-axis and has a negative dependence with the other two variables $PM_{2.5}$ and $CO_2$. This combination of tail dependencies makes the heatmaps show a spider effect towards the bottom-left and upper-left corners in September and November of 2017. Moreover, in October of 2017, for the pairs $(PM_{2.5},SO_2)$ and $(NO_2,SO_2)$, a combination of three trends can be identified with three tail dependencies in the $00$, $01$ and $11$ corners. 

Certainly, this non standard dependence among these three pollution variables could not be captured by a single copula model. 

\section{Concluding remarks and future work}
\label{sec:conclusion}

We  extend a copula's versatility in capturing dependence patterns using  mixtures of copulas with a dynamic component in the weights. The idea is illustrated using  Clayton copulas, but it can be applied to any other families. The motivation is given by problems where different extreme regions of the paiwise bivariate distributions exhibit patterns that cannot be captured by a single copula. 

In situations in which the dependence varies in time, we propose a dynamic mixture of copulas model in which the mixture weights and the parameters of the copulas involved in the mixture are modelled either through a moving average or a seasonal dynamic. The resulting increase in modelling flexibility is illustrated by all our numerical experiments, be they synthetic or real. 

Dependence in our dynamic Dirichlet prior on the weights is controlled by the triplet $(a_t,q,p)$. For the analyses considered here we have kept $a_t$ to be constant across time, to make our prior easy to define. However, this parameter can certainly be chosen to be different across time, providing additional flexibility in the model. We have left this substantial generalization for future work. 

Additionally, special care has to be put for the analysis of multivariate data for large dimension $m$, since the number of components in the mixture increases exponentially at a rate $2^m$. In practice, not all extremes are likely to be significant. In order to impose sparsity, we plan to exploit a sparse prior designed to reduce the number of components needed to model the data and still maintain the added flexibility demonstrated in this work. An added question of interest is the identification of lower dimensional manifolds where a mixture of lower-dimensional copulas can be used to capture the dependence in the data.

\section*{Acknowledgements}

This work was supported by \textit{Asociaci\'on Mexicana de Cultura, A.C.} while the second author was visiting the Department of Statistical Sciences at the University of Toronto and by an NSERC of Canada discovery grant of the third author. We are also grateful to Jun Young Park and Patrick Brown for their guidance on potential applications.

\bibliographystyle{abbrv}

\newpage

\begin{table}
\centering
\begin{tabular}{cccccccccc}
  \hline\hline
 \small{$e_k,g_k$} & $a_t$ & $q=0$ & $q=1$ & $q=2$ & $q=3$ & $q=4$ & $q=5$ & $q=6$ & $q=7$ \\  \hline
\multicolumn{10}{c}{LPML}\\ \hline
 1 & 0 & 1685 & \\ 
 1 & 1 & 1686 & 1683 & 1684 & 1688 & 1687 & 1692 & 1691 & 1696 \\ 
 1 & 10 & 1693 & 1698 & 1702 & 1702 & 1706 & 1705 & 1707 & 1706 \\ 
 1 & 20 & 1696 & 1702 & 1705 & 1705 & 1707 & 1706 & 1707 & 1706 \\ 
 1 & 30 & 1699 & 1704 & 1706 & \textbf{1708} & 1705 & 1708 & 1706 & \textbf{1708} \\ 
 1 & 40 & 1702 & 1705 & 1704 & 1708 & 1706 & 1705 & 1705 & 1705 \\ 
 1000 & 30 & 1682 & 1688 & 1691 & 1693 & 1693 & 1691 & 1693 & 1692 \\
   \hline
\multicolumn{10}{c}{WAIC}\\ \hline
 1 & 0 & -3372 &  \\ 
 1 & 1 & -3374 & -3368 & -3370 & -3378 & -3375 & -3384 & -3384 & -3393 \\ 
 1 & 10 & -3388 & -3397 & -3403 & -3405 & -3413 & -3410 & -3413 & -3411 \\ 
 1 & 20 & -3393 & -3405 & -3409 & -3411 & -3415 & -3413 & -3414 & -3412 \\ 
 1 & 30 & -3398 & -3409 & -3411 & \textbf{-3416} & -3410 & -3415 & -3413 & \textbf{-3416} \\ 
 1 & 40 & -3403 & -3410 & -3409 & -3416 & -3412 & -3410 & -3409 & -3411 \\ 
 1000 & 30 & -3364 & -3376 & -3382 & -3386 & -3386 & -3382 & -3385 & -3385 \\
 \hline \hline
\end{tabular}
\caption{Simulated data. LPML and WAIC gof values of different ${\cal{M}}_{a_t,q}$ models. Best values are bolded.}
\label{tab:lpml_waic_sim}
\end{table}

\begin{table}
    \centering
    \begin{tabular}{lrrr} \hline \hline
      Model / Measure & LPS & LPML & WAIC \\
        \hline
       ${\cal{M}}_{2,2}$ & 73.2 & 1603 & -3209 \\
        ${\cal{M}}_{20,7}$ & \textbf{74.9} & \textbf{1625} & \textbf{-3250} \\
        ${\cal{M}}_{30,7}$ & 73.1 & 1624 & -3247 \\
        D.Gaussian & 25.4 &  146 & -292 \\
        S.Clayton & -12.3 & 9  & -18 \\
        \hline \hline
    \end{tabular}
    \caption{Simulated data. LPS statistic using times $1,\ldots,19$ for fitting and $t=20$ for prediction. The other gof measures, LPML and WAIC, were calculated with the fitting data.}
    \label{tab:sim_model_comparison}
\end{table}

\begin{table}
\centering
\begin{tabular}{c|ccccc|ccccc}
\hline \hline 
 & \multicolumn{5}{c|}{LPML} & \multicolumn{5}{c}{WAIC} \\ \hline
\small{$a_t$} & \scriptsize{$MA(0)$} & \scriptsize{$MA(1)$} & \scriptsize{$MA(2)$} & \scriptsize{$MA(3)$} & \scriptsize{$MA(4)$} & \scriptsize{$MA(0)$} & \scriptsize{$MA(1)$} & \scriptsize{$MA(2)$} & \scriptsize{$MA(3)$} & \scriptsize{$MA(4)$} \\ \hline 
 & \multicolumn{10}{c}{$S(0)$}\\ \hline
 0 & 435 & & & & & -871 & & & & \\
 1 & 433 & \textbf{438} & 436 & 434 & 434 & -867 & \textbf{-875} & -873 & -868 & -867 \\ 
 2 & 433 & 433 & 433 & 426 & 422 & -866 & -867 & -866 & -852 & -843 \\ 
 3 & 432 & 432 & 428 & 421 & 416 & -864 & -863 & -857 & -843 & -832 \\ 
 4 & 432 & 427 & 421 & 417 & 410 & -863 & -855 & -841 & -834 & -820 \\ 
 5 & 429 & 424 & 417 & 409 & 403 & -858 & -849 & -834 & -817 & -805 \\ \hline 
 & \multicolumn{10}{c}{$S(1)$}\\ \hline
 0 & 435 & & & & & -871 & & & & \\  
 1 & 440 & \textbf{441} & 439 & 435 & 433 & -880 & \textbf{-881} & -878 & -869 & -866 \\ 
 2 & 439 & 435 & 433 & 427 & 422 & -877 & -871 & -865 & -854 & -843 \\ 
 3 & 434 & 432 & 427 & 420 & 414 & -868 & -865 & -854 & -839 & -829 \\ 
 4 & 432 & 428 & 422 & 414 & 408 & -863 & -857 & -844 & -829 & -815 \\ 
 5 & 427 & 424 & 416 & 408 & 401 & -854 & -848 & -832 & -815 & -802 \\ \hline 
 & \multicolumn{10}{c}{$S(2)$}\\ \hline
 0 & 435 & & & & & -871 & & & & \\  
 1 & 440 & \textbf{442} & 438 & 433 & 430 & -879 & \textbf{-883} & -876 & -867 & -859 \\
 2 & 437 & 433 & 433 & 427 & 420 & -874 & -865 & -865 & -854 & -840 \\ 
 3 & 434 & 432 & 427 & 419 & 413 & -867 & -864 & -853 & -838 & -826 \\ 
 4 & 430 & 425 & 424 & 412 & 407 & -861 & -850 & -847 & -824 & -813 \\ 
 5 & 426 & 422 & 418 & 408 & 397 & -852 & -844 & -836 & -817 & -794 \\ \hline \hline
\end{tabular}
\caption{Bivariate real data. LPML and WAIC statistics for different prior choices for ${\cal{M}}_{a_t,q,p}$.}
\label{tab:LPML_WAIC_real}
\end{table}

\begin{table}
\centering
\begin{tabular}{lrrr} \hline \hline
Model / Measure & MSE & LPML & WAIC \\ \hline
${\cal{M}}_{1,1,2}$ & \textbf{0.0751} & \textbf{397} & \textbf{-794} \\
D.Gaussian & 0.0763 & 383 & -767 \\
S.Clayton & 0.0826 & 39 & -78 \\
\hline \hline
\end{tabular}
\caption{Bivariate real data. Goodness of fit measures in first validation study.}
\label{tab:validation_real}
\end{table}

\begin{table}
\centering
\begin{tabular}{lrrrrrrrrrr} \hline\hline \\[-4mm]
Model & \small{$t=31$} & \small{$t=32$} & \small{$t=33$} & \small{$t=34$} & \small{$t=35$} & \small{$t=36$} & \text{Total}\\[1mm] \hline 
 ${\cal{M}}_{1,1,0}$ & -10.5 & -3.1 &  -9.1 & 17.8 & 21.6 & 11.4 & 28.2 \\ 
 ${\cal{M}}_{1,2,0}$ & -3.2 &-4.7 & -12.1 & \textbf{21.8} & 25.3 & 12.2 &\textbf{39.3} \\ 
 ${\cal{M}}_{1,1,1}$ & -12.4 &-3.8 & -10.9 & 16.1 & 24.0 & 11.9 & 24.9 \\ 
 ${\cal{M}}_{1,2,1}$ & -5.9 &-2.7 & -11.6 & 18.5 & 26.8 & 11.6 & 36.7 \\ 
 ${\cal{M}}_{1,2,2}$ & -8.2 &-3.1 & -10.0 & 15.4 & 26.8 &\textbf{12.3} & 33.2\\
 D.Gaussian & \textbf{5.2} &  \textbf{-1.8} &  \textbf{-1.3}  & -7.1 & \textbf{37.5} & 6.1 & 38.8 \\ 
 S.Clayton & -79.5 & -170.5 & -153.6 & -258.2 & -246.1 & -199.8 & -1107.6  \\
 \hline \hline
\end{tabular}
\caption{Bivariate real data. LPS statistic computed by fitting models from time $1$ to $t-1$ and predicting time $t$.} 
\label{tab:real_model_comparison}
\end{table}

\begin{table}
\centering
\begin{tabular}{c|ccccc|ccccc}
\hline \hline 
 & \multicolumn{5}{c|}{LPML} & \multicolumn{5}{c}{WAIC} \\ \hline
\small{$a_t$} & \scriptsize{$MA(0)$} & \scriptsize{$MA(1)$} & \scriptsize{$MA(2)$} & \scriptsize{$MA(3)$} & \scriptsize{$MA(4)$} & \scriptsize{$MA(0)$} & \scriptsize{$MA(1)$} & \scriptsize{$MA(2)$} & \scriptsize{$MA(3)$} & \scriptsize{$MA(4)$} \\ \hline 
 & \multicolumn{10}{c}{$S(0)$}\\ \hline
 0 & 271 &  &  &  & & -542 & & &  & \\ 
1 & 286 & 289 & 293 & 294 & 294 & -573 & -579 & -589 & -588 & -589 \\ 
2 & 281 & 278 & 278 & 278 & 276 & -562 & -557 & -557 & -556 & -553 \\ 
3 & 287 & 291 & 289 & 292 & 291 & -574 & -582 & -579 & -586 & -583 \\ 
4 & 290 & \bf{299} & 293 & 297 & 297 & -581 & \bf{-601} & -587 & -594 & -595 \\ 
5 & 282 & 282 & 281 & 282 & 282 & -565 & -565 & -563 & -565 & -565 \\ 
\hline 
 & \multicolumn{10}{c}{$S(1)$}\\ \hline
0 & 271 &  &  &  &  & -542 &  &  &  &  \\ 
1 & 288 & 288 & 291 & 295 & 287 & -576 & -577 & -583 & -590 & -575 \\ 
2 & 279 & 279 & 278 & 278 & 276 & -559 & -558 & -557 & -557 & -553 \\ 
3 & 289 & 289 & 292 & 292 & {294} & -579 & -579 & -584 & -586 & -589 \\ 
4 & 292 & 294 & \bf{298} & 297 & 293 & -586 & -591 & \bf{-597} & -594 & -587 \\ 
5 & 283 & 282 & 283 & 283 & 283 & -568 & -565 & -567 & -566 & -567 \\
\hline 
 & \multicolumn{10}{c}{$S(2)$}\\ \hline
 0 & 271 &  &  &  &  & -542 &  &  &  &  \\ 
1 & 287 & 286 & 289 & 295 & 279 & -575 & -572 & -578 & -590 & -562 \\ 
2 & 280 & 279 & 281 & 277 & 277 & -561 & -558 & -562 & -555 & -556 \\ 
3 & 290 & 291 & 291 & 292 & 292 & -579 & -582 & -582 & -586 & -585 \\ 
4 & 296 & 296 & 295 & 295 & \bf{301} & -594 & -593 & -591 & -590 & \bf{-603} \\ 
5 & 285 & 285 & 286 & 282 & 283 & -571 & -571 & -572 & -565 & -566 \\   \hline \hline
\end{tabular}
\caption{Multivariate real data. LPML and WAIC statistics for different prior choices for ${\cal{M}}_{a_t,q,p}$.}
\label{tab:LPML_WAIC_real2}
\end{table}


\begin{figure}[h]
\setlength{\unitlength}{1cm}
\hspace{-1cm}
\begin{center}
\begin{picture}(6,6)
\put(0.5,0.5){\vector(1,0){5.5}}
\put(0.5,0.5){\vector(0,1){5.5}}
\put(0.5,3.0){\line(1,0){5.0}}
\put(3.0,0.5){\line(0,1){5.0}}
\put(0.5,5.5){\line(1,0){5.0}}
\put(5.5,0.5){\line(0,1){5.0}}
\put(0.1,0.1){$0$}
\put(5.3,0.1){$1$}
\put(0.1,5.3){$1$}
\put(1.5,1.5){$00$}
\put(4.0,1.5){$10$}
\put(4.0,4.0){$11$}
\put(1.5,4.0){$01$}
\end{picture}
\end{center}
\caption{Unit square divided in four quadrants.}
\label{fig:rotations} 
\end{figure}
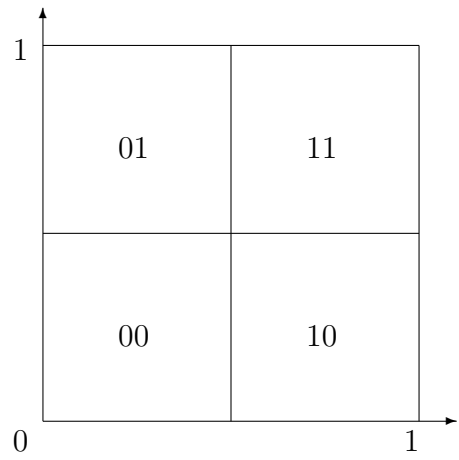

\begin{figure}
\centerline{\includegraphics[scale=0.5]{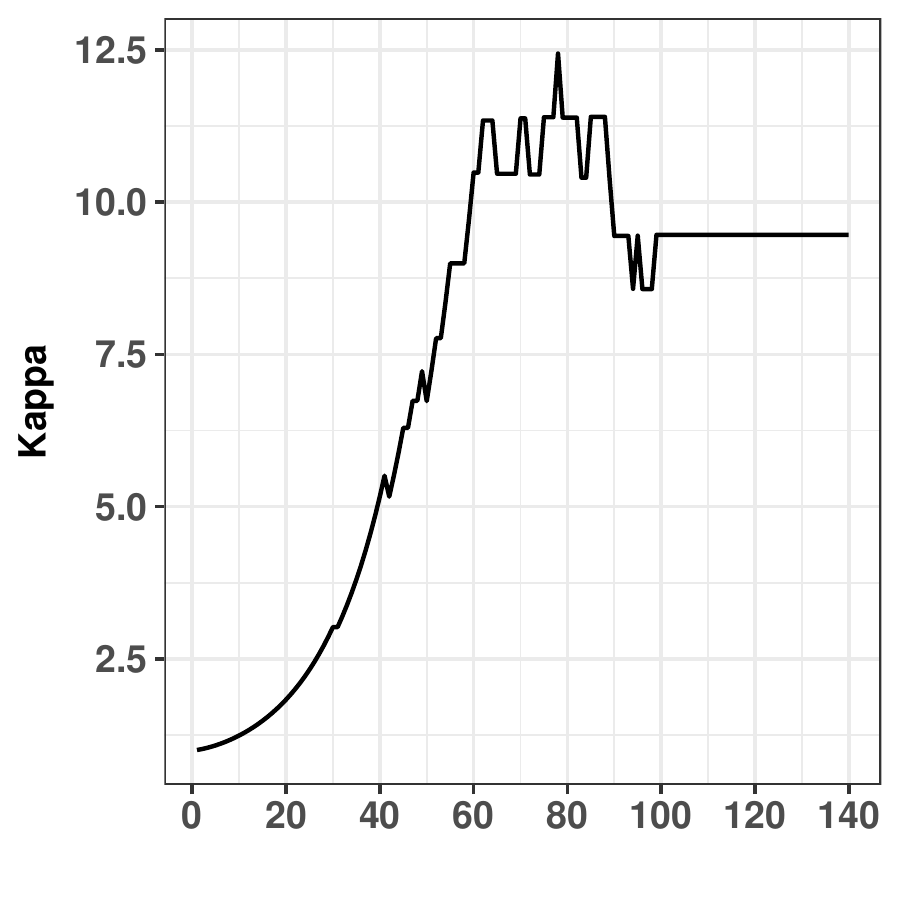}
\includegraphics[scale=0.5]{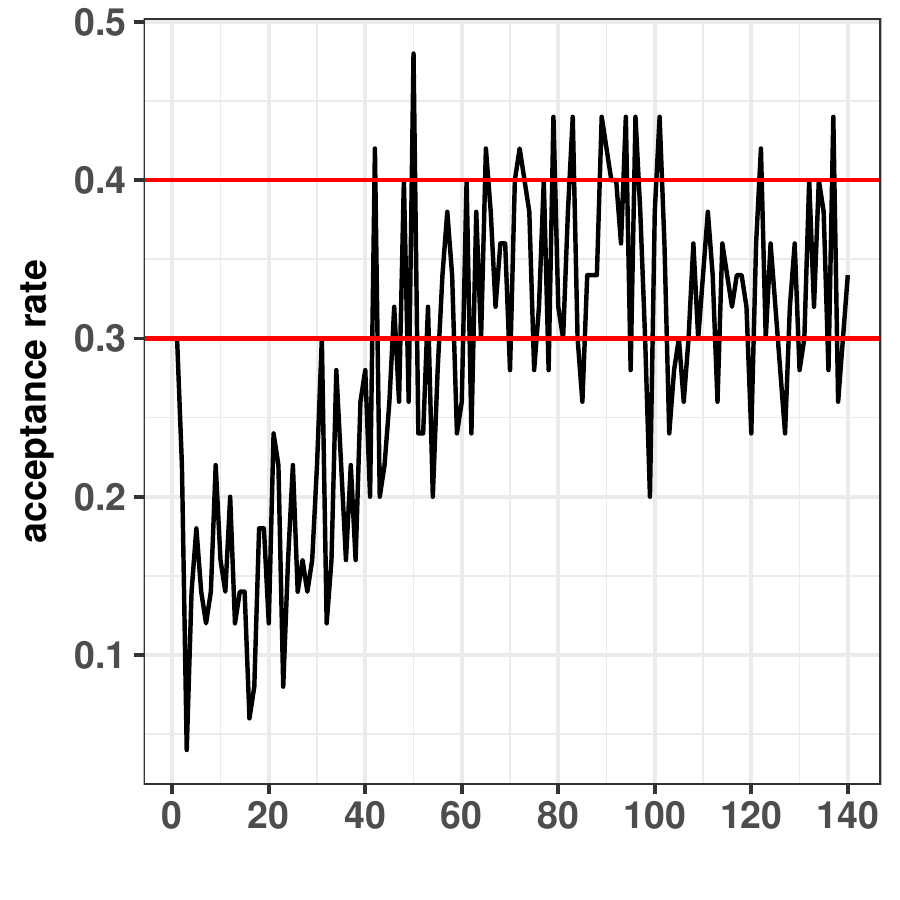}}
\caption{The recorded $\kappa^{(h)}$ and acceptance rate for each batch $h$. The batch size is $50$.}
\label{fig:kappa_acc_sim}
\end{figure}

\begin{figure}
 \centering
    \includegraphics[scale=0.32]{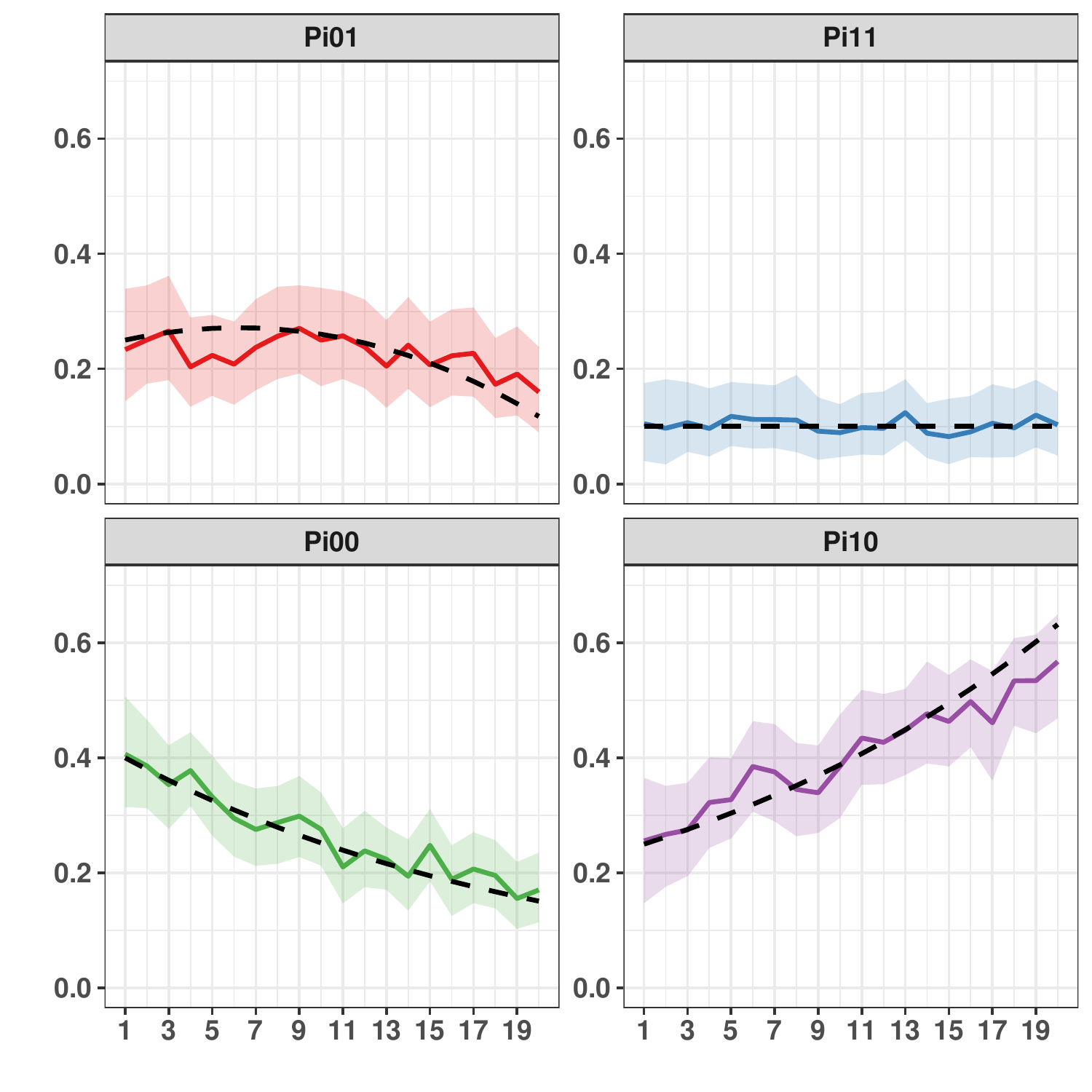}
    \includegraphics[scale=0.32]{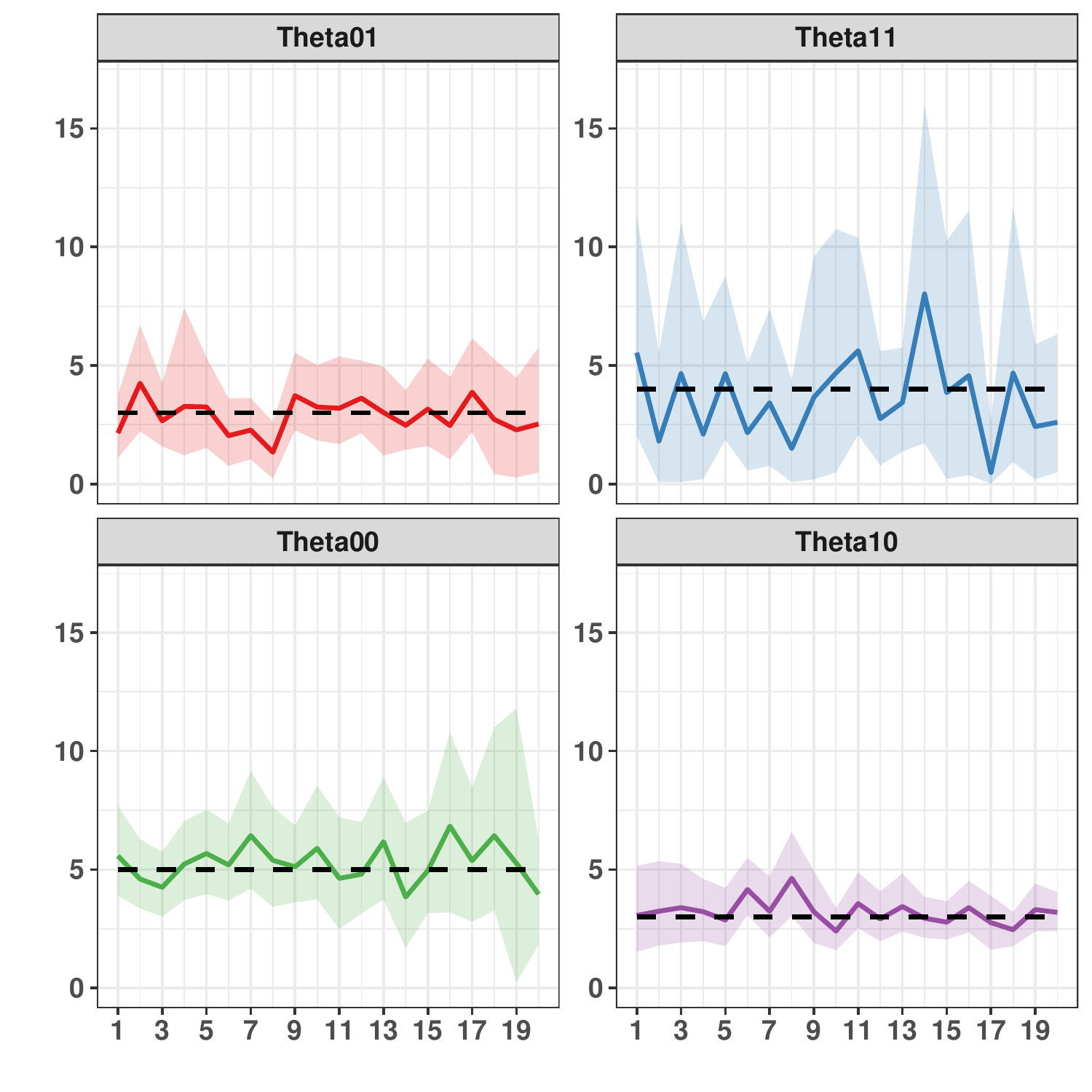}
\caption{Simulated data. Posterior estimates of $\bpi$ (left) and $\btheta$ (right):  posterior mean (solid line) with 95\% credible intervals (shadows), together with the true value (dotted black line).}
\label{fig:pi_theta_predict_sim}
\end{figure}

\begin{figure}
\centerline{\includegraphics[scale=0.5]{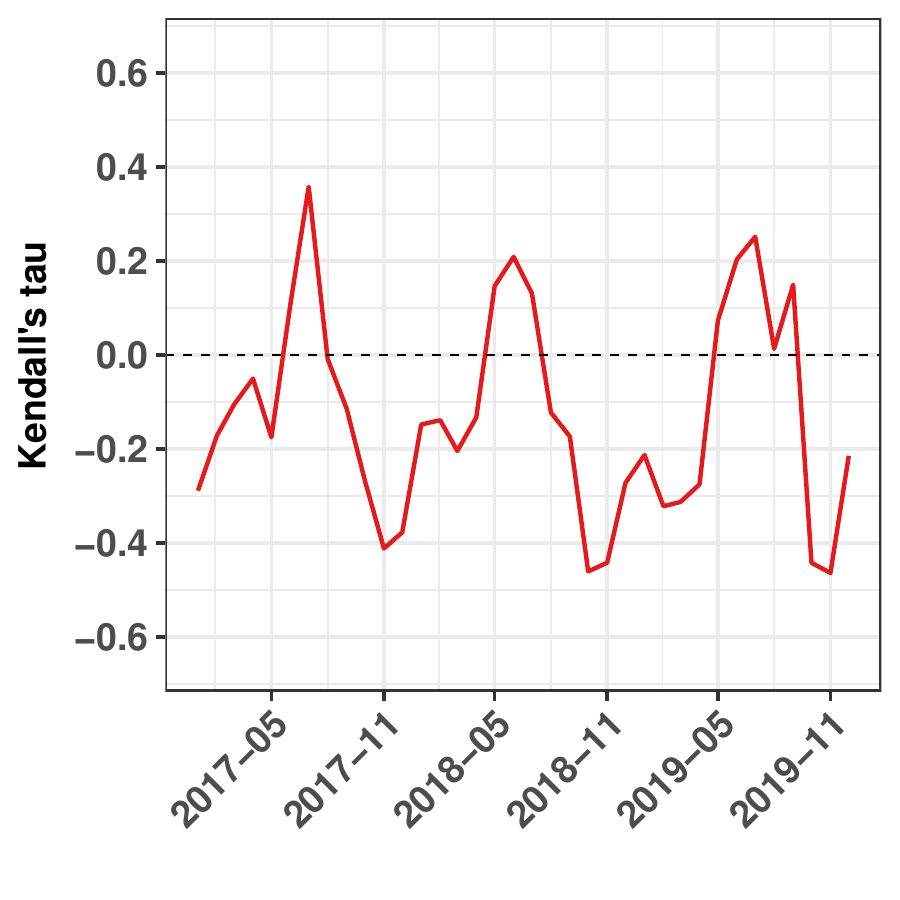}
\includegraphics[scale=0.5]{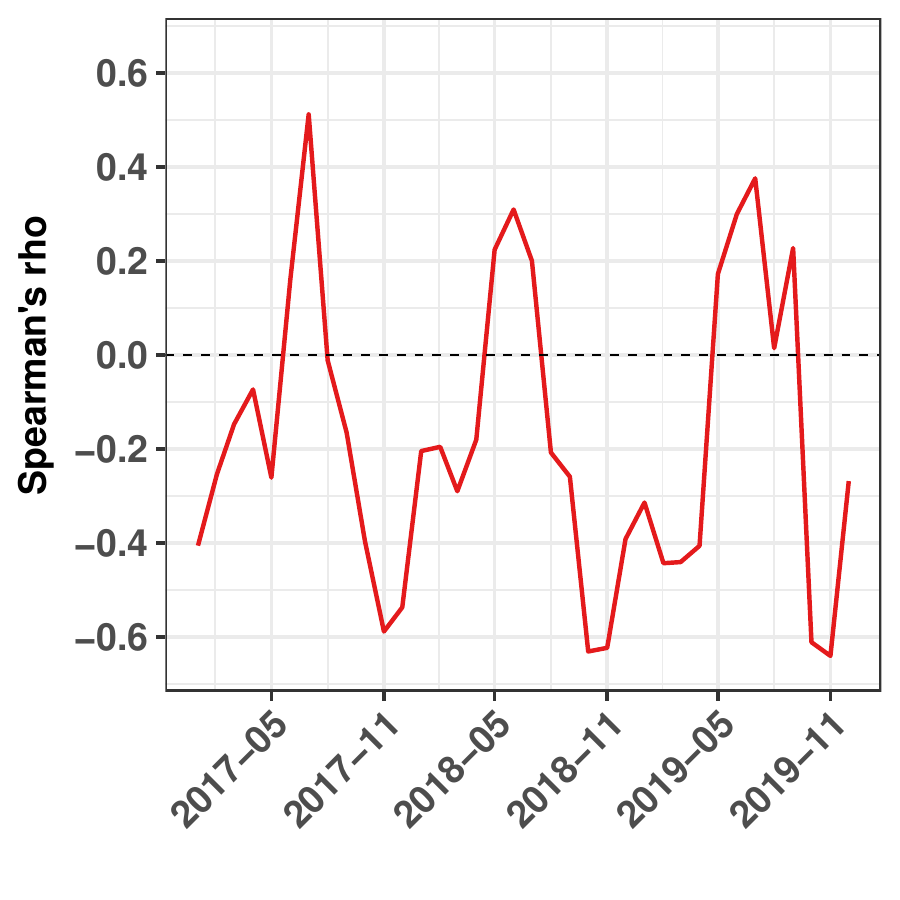}}
\caption{Bivariate real dataset. Empirical Kendall's tau (left) and Spearman's rho (right).}
\label{fig:kendall_spearman}
\end{figure}

\begin{figure}
    \centering
    \includegraphics[scale=0.3]{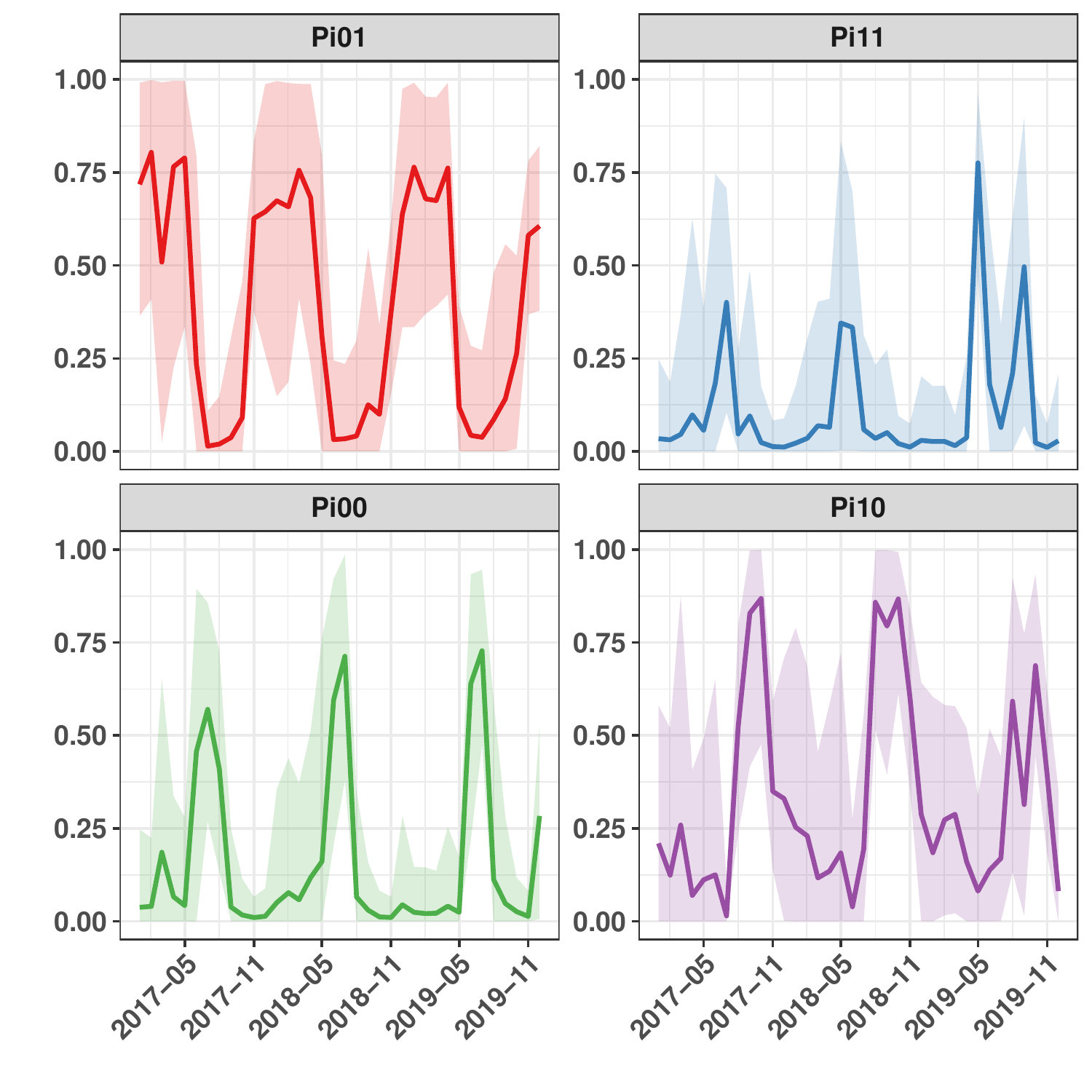}
    \includegraphics[scale=0.3]{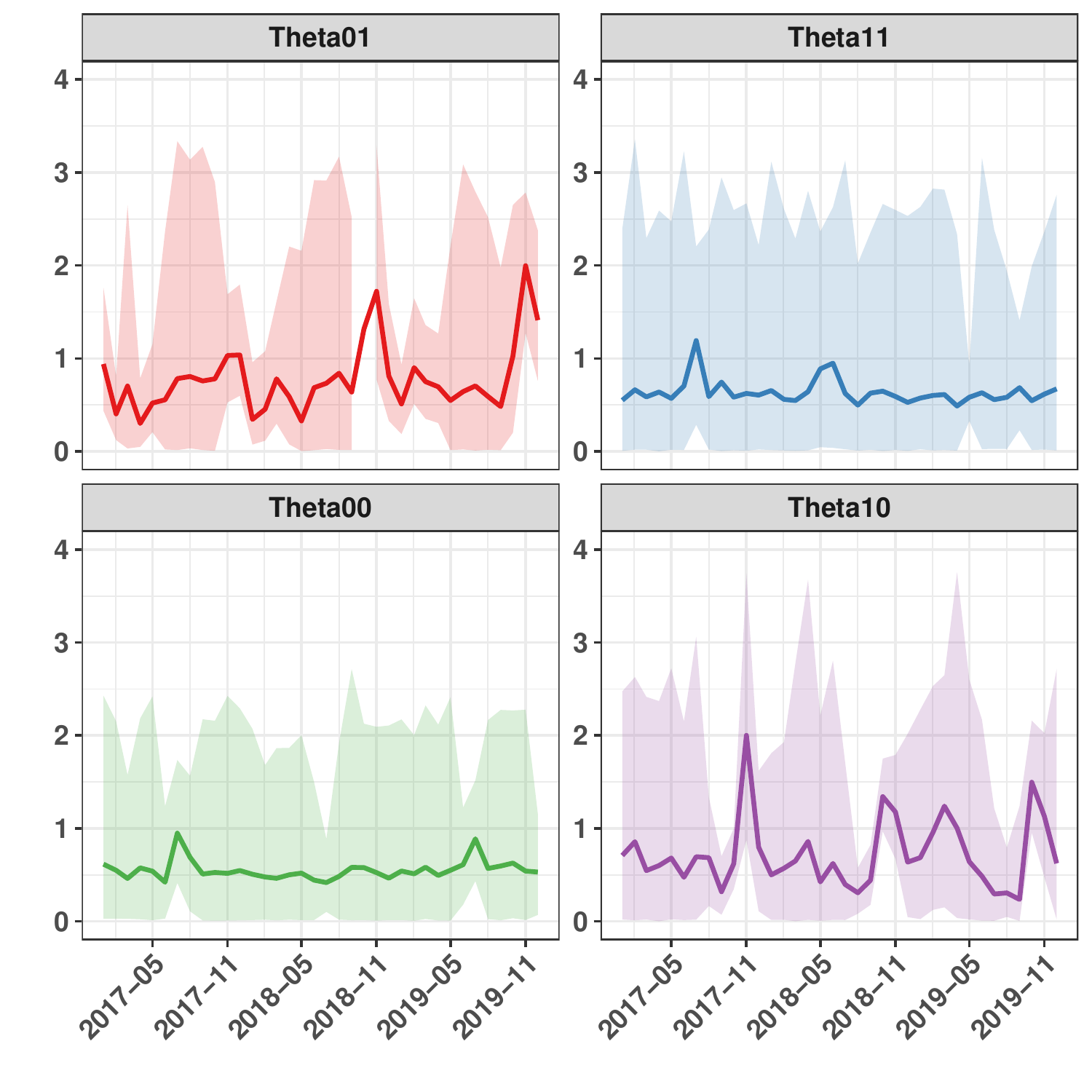}
    \caption{Bivariate real dataset. Posterior estimates of $\bpi$ (left) and $\btheta$ (right): posterior mean (solid line) with 95\% credible intervals (shadows).}
\label{fig:pi_theta_predict_real}
\end{figure}

\begin{figure}
    \centering
    \includegraphics[scale=0.5]{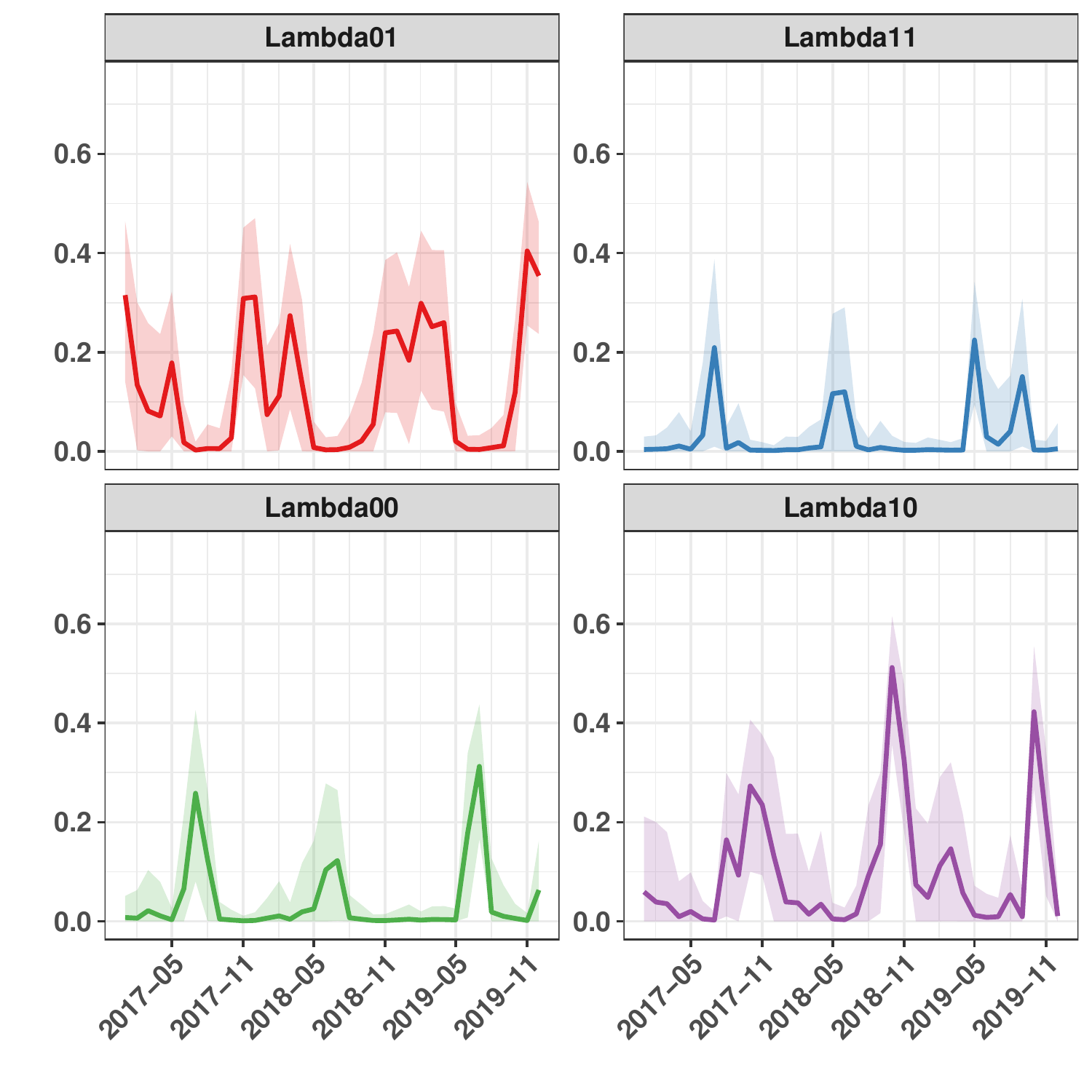}
    \caption{Bivariate real dataset. Posterior estimates of $\blambda$:  posterior mean (solid line) with 95\% credible intervals (shadows).}
\label{fig:tail_coefficient_predict_real}
\end{figure}

\begin{figure}
\centerline{
\includegraphics[scale=0.144]{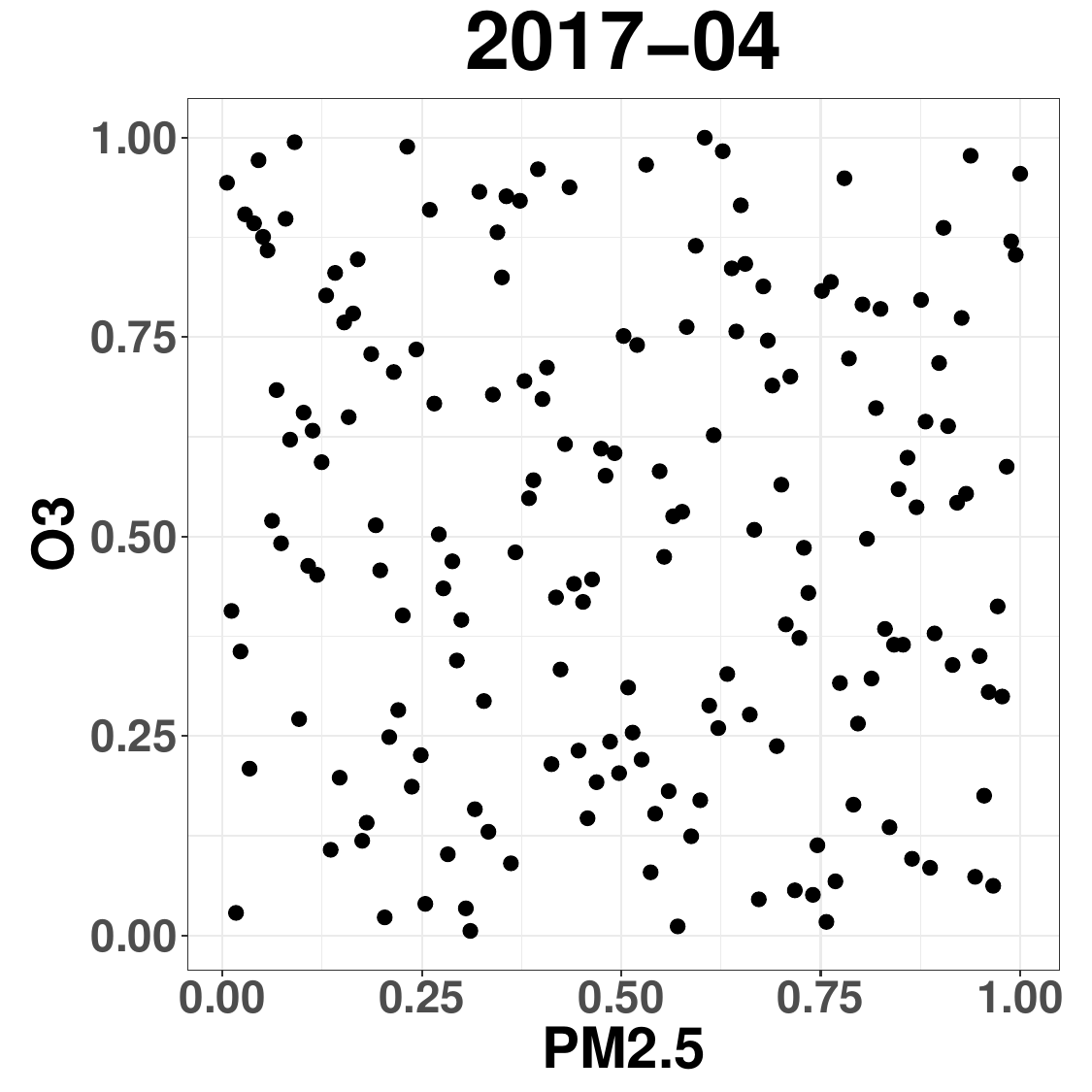}
\includegraphics[scale=0.144]{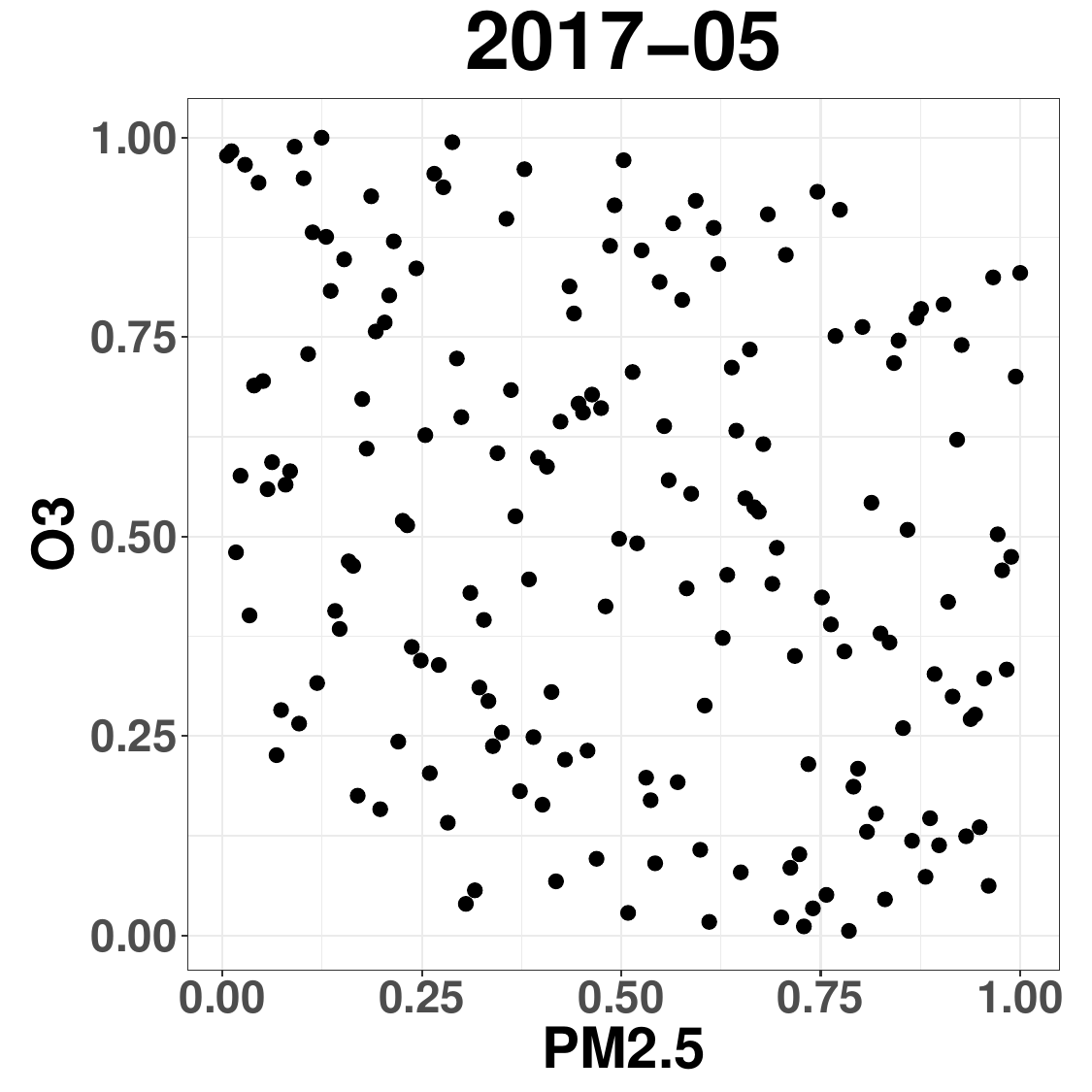}
\includegraphics[scale=0.144]{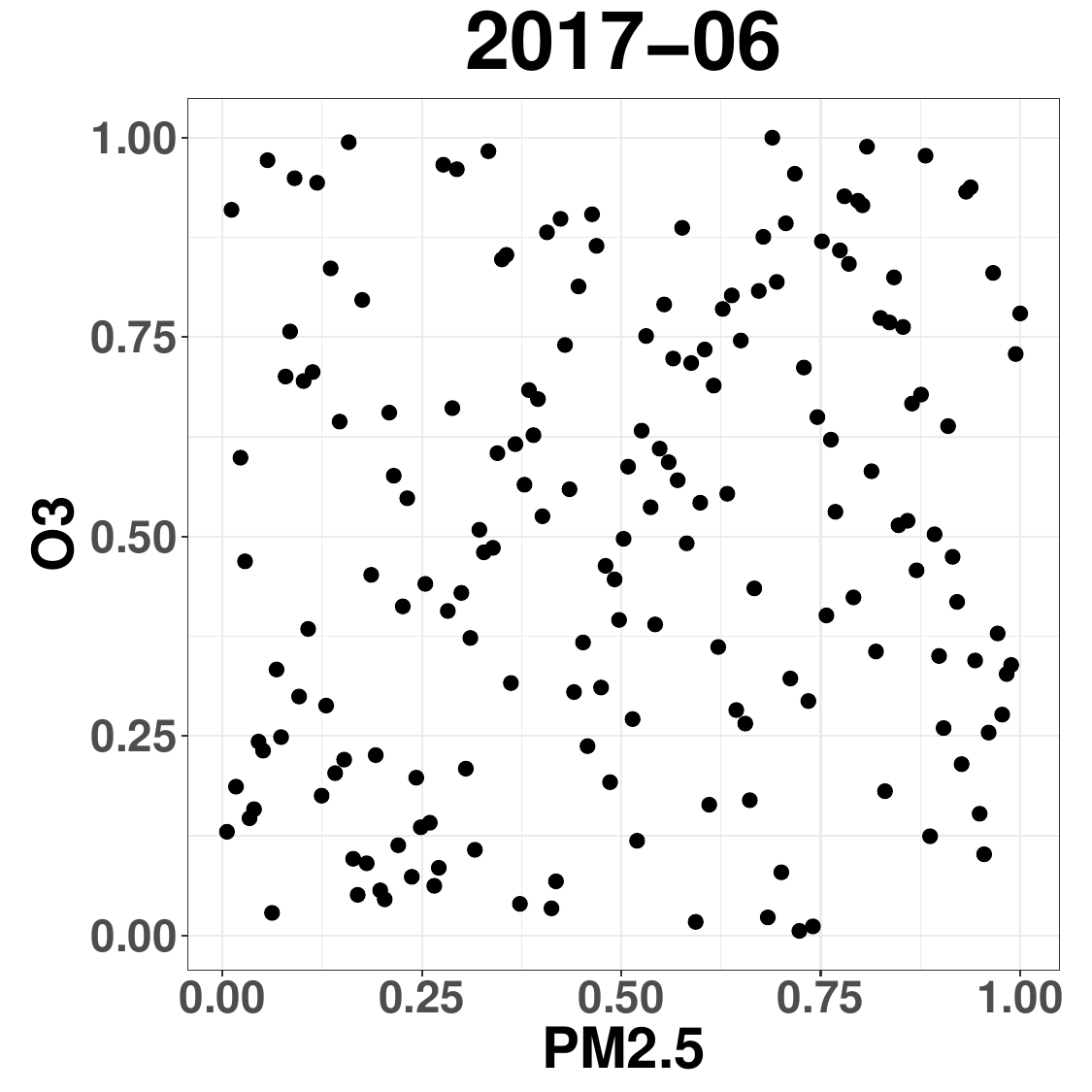}
\includegraphics[scale=0.144]{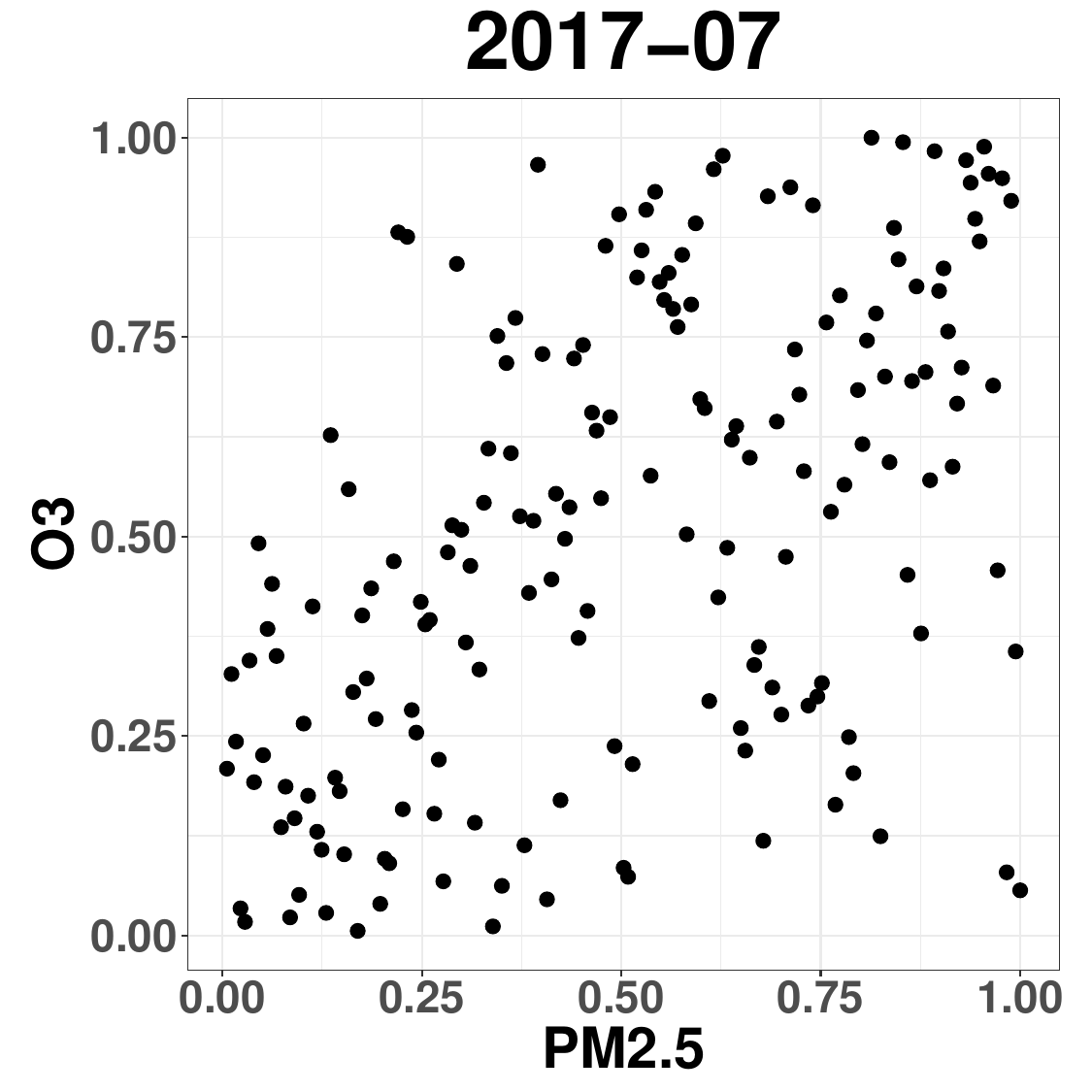}
\includegraphics[scale=0.144]{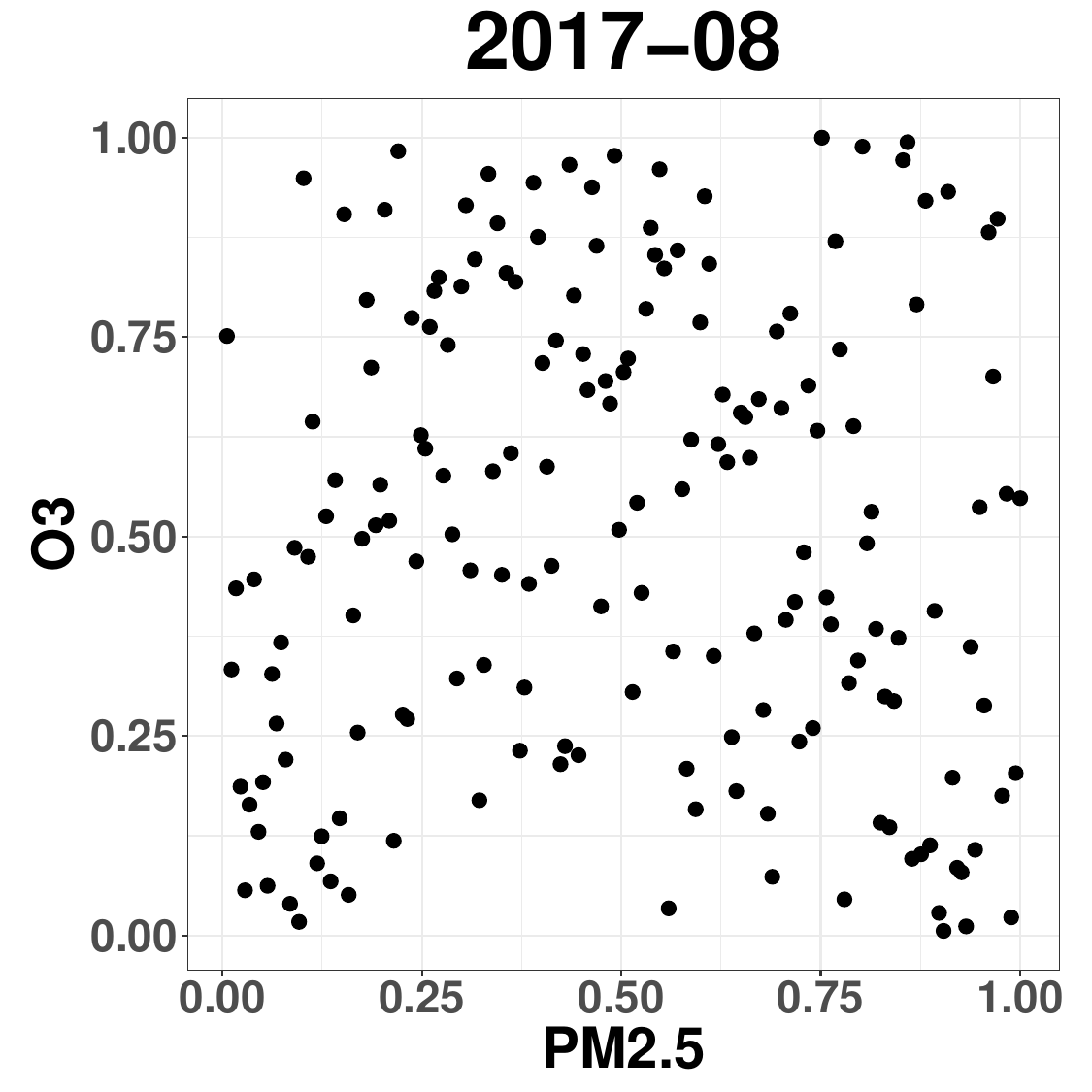}
\includegraphics[scale=0.144]{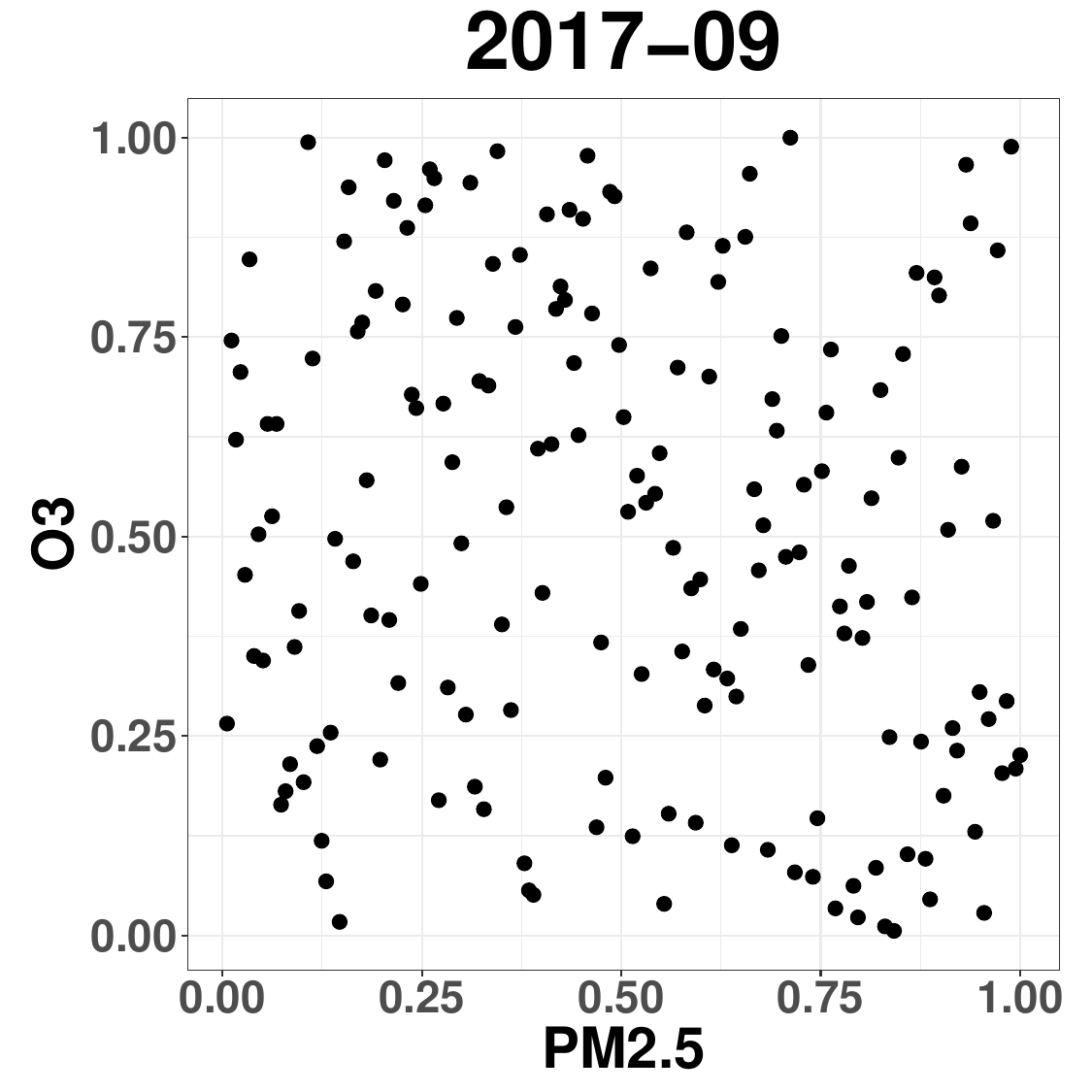}
}
\centerline{
\includegraphics[scale=0.144]{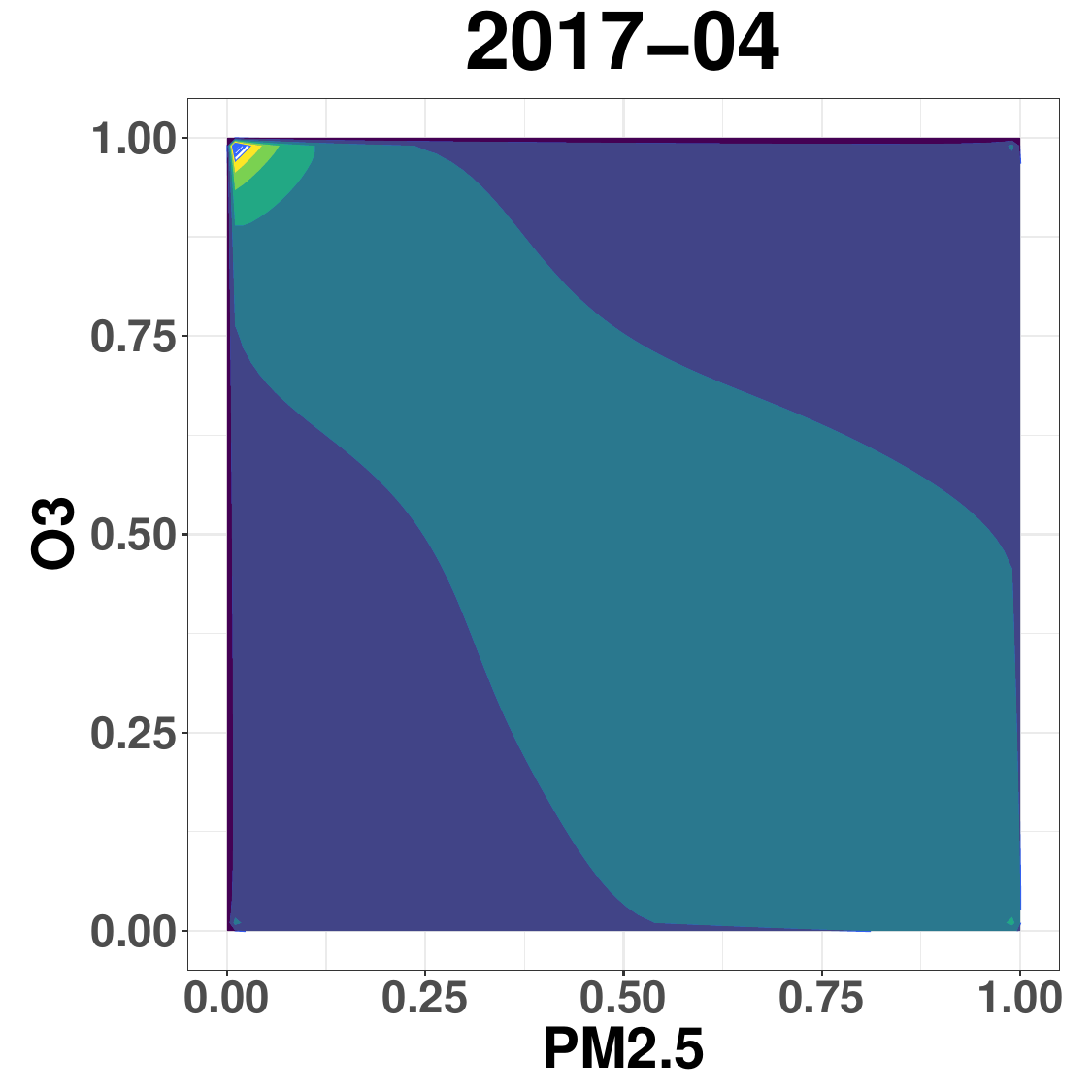}
\includegraphics[scale=0.144]{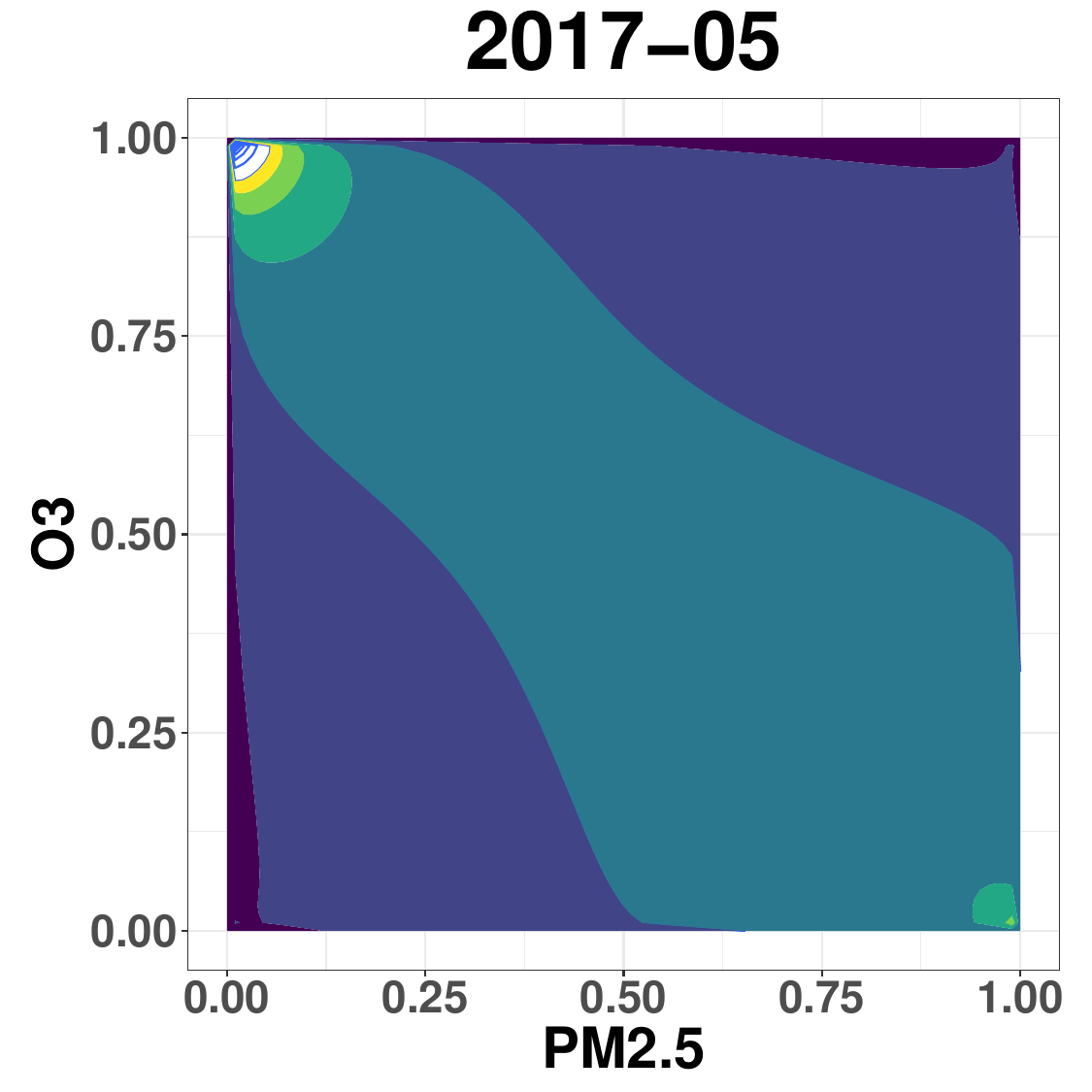}
\includegraphics[scale=0.144]{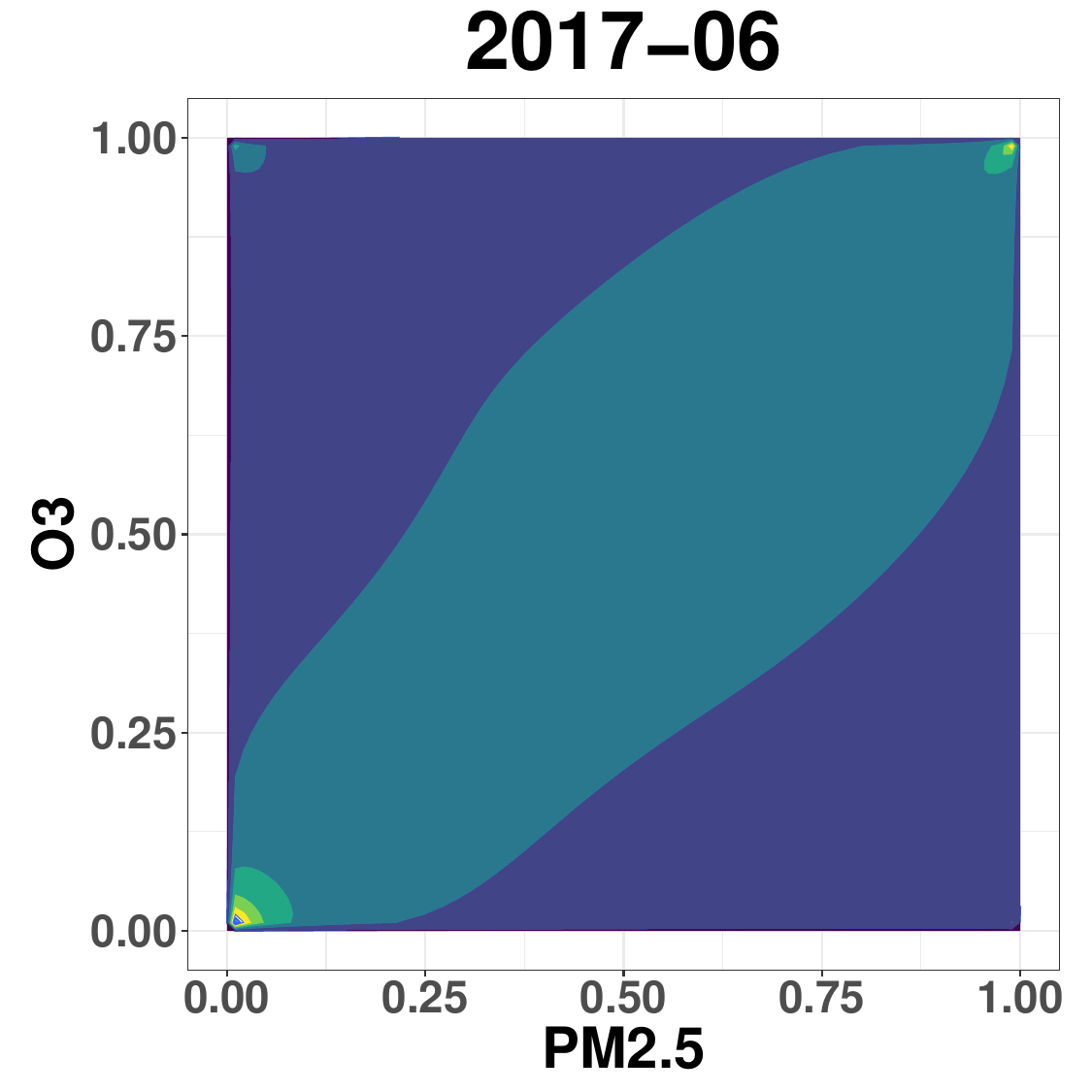}
\includegraphics[scale=0.144]{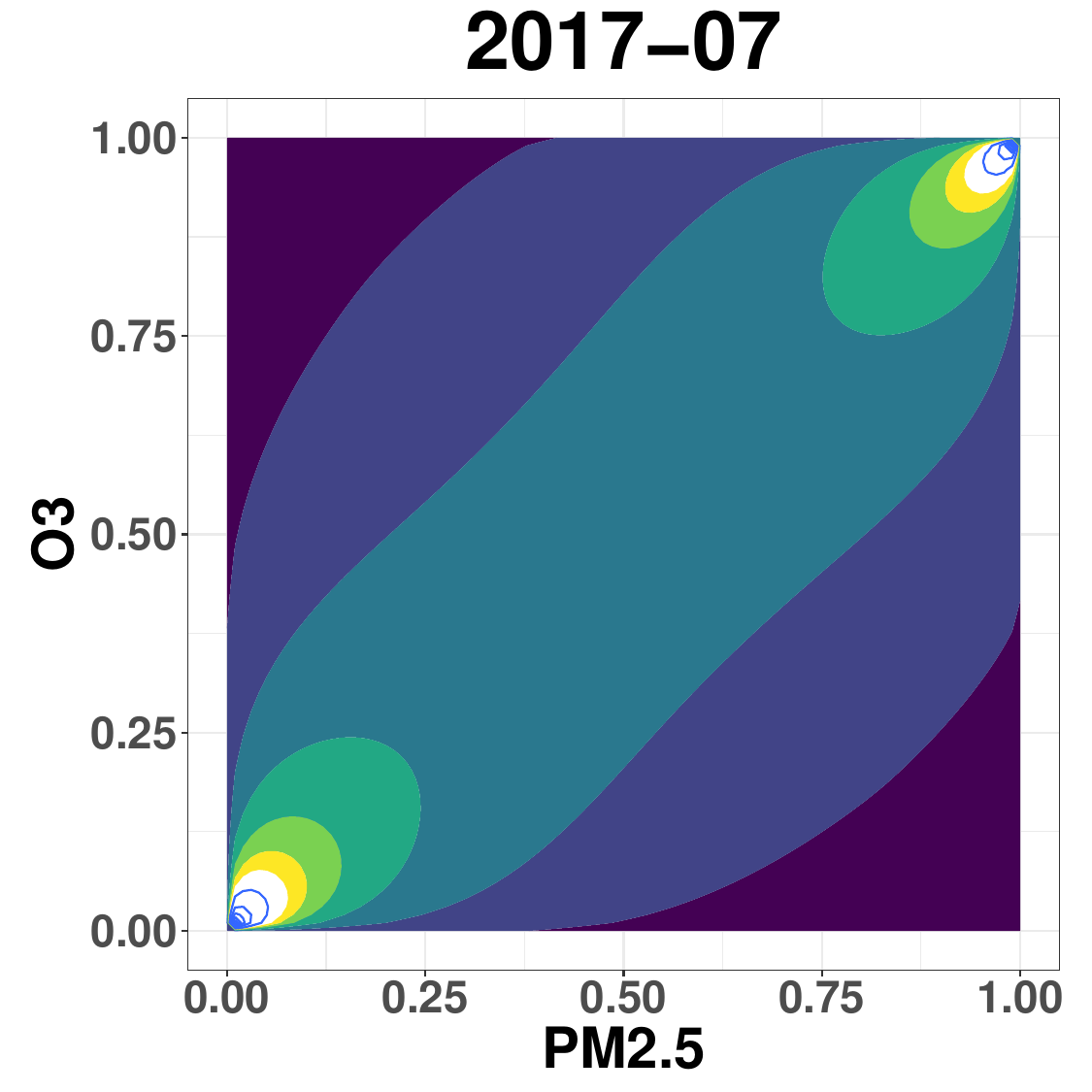}
\includegraphics[scale=0.144]{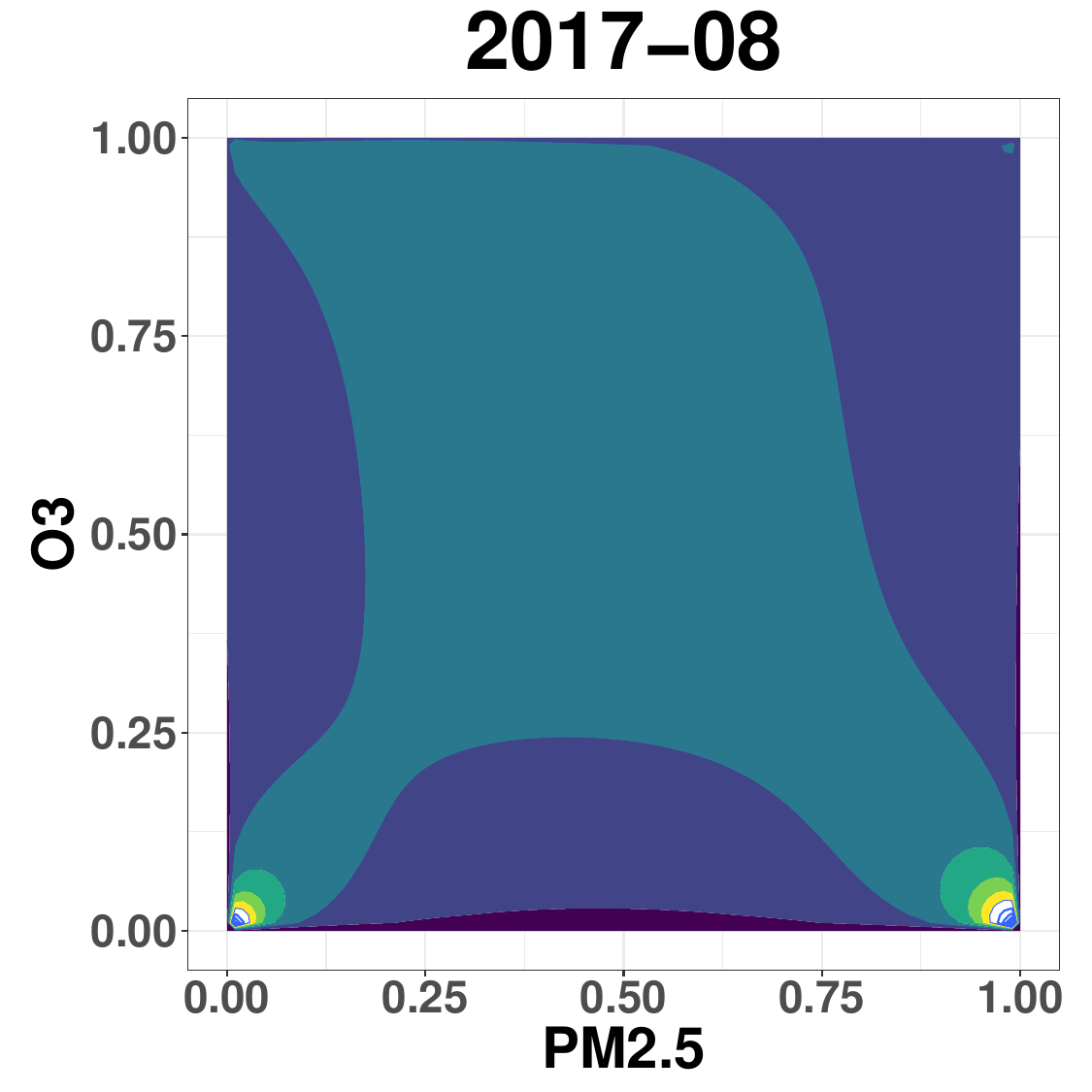}
\includegraphics[scale=0.144]{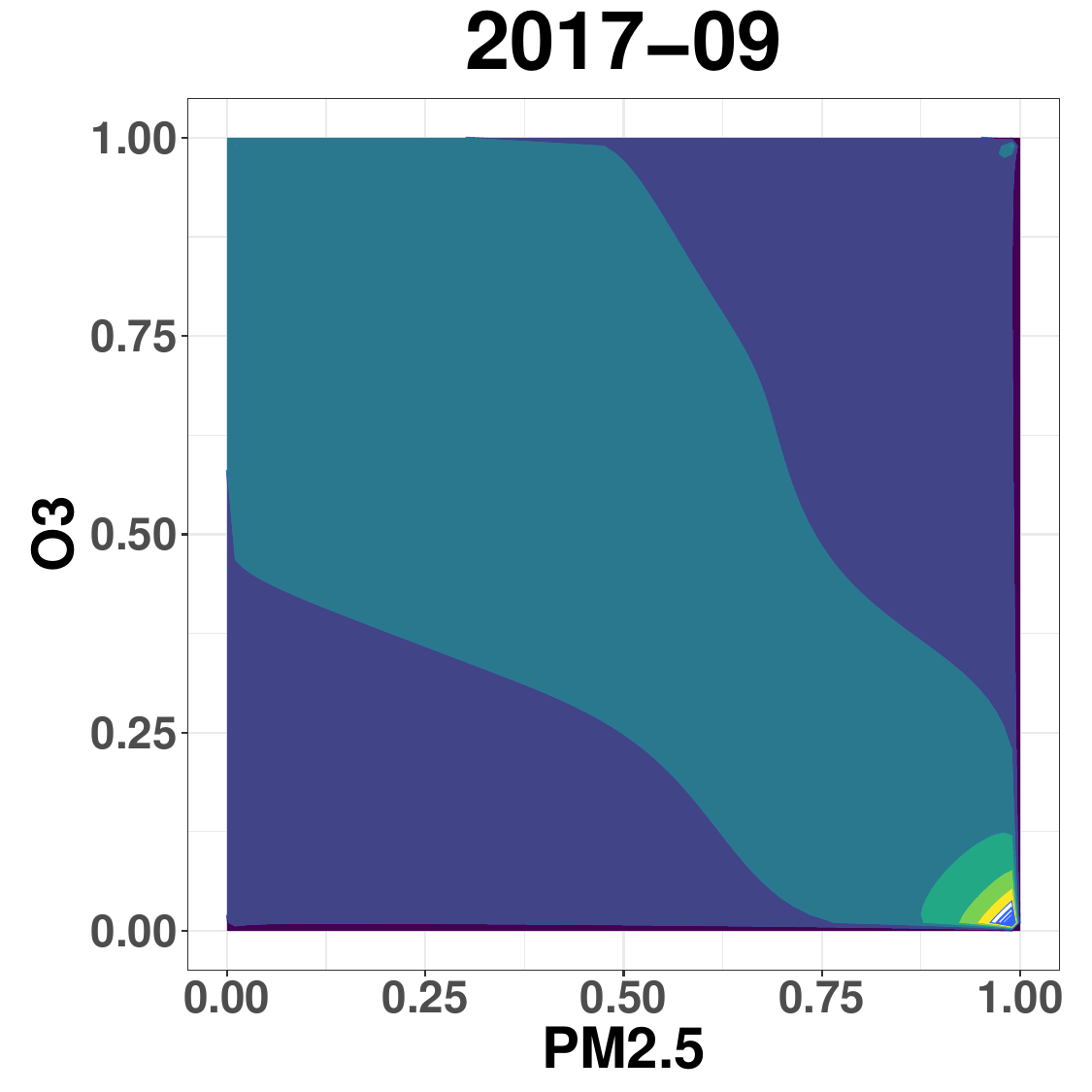}
}
\centerline{
\includegraphics[scale=0.144]{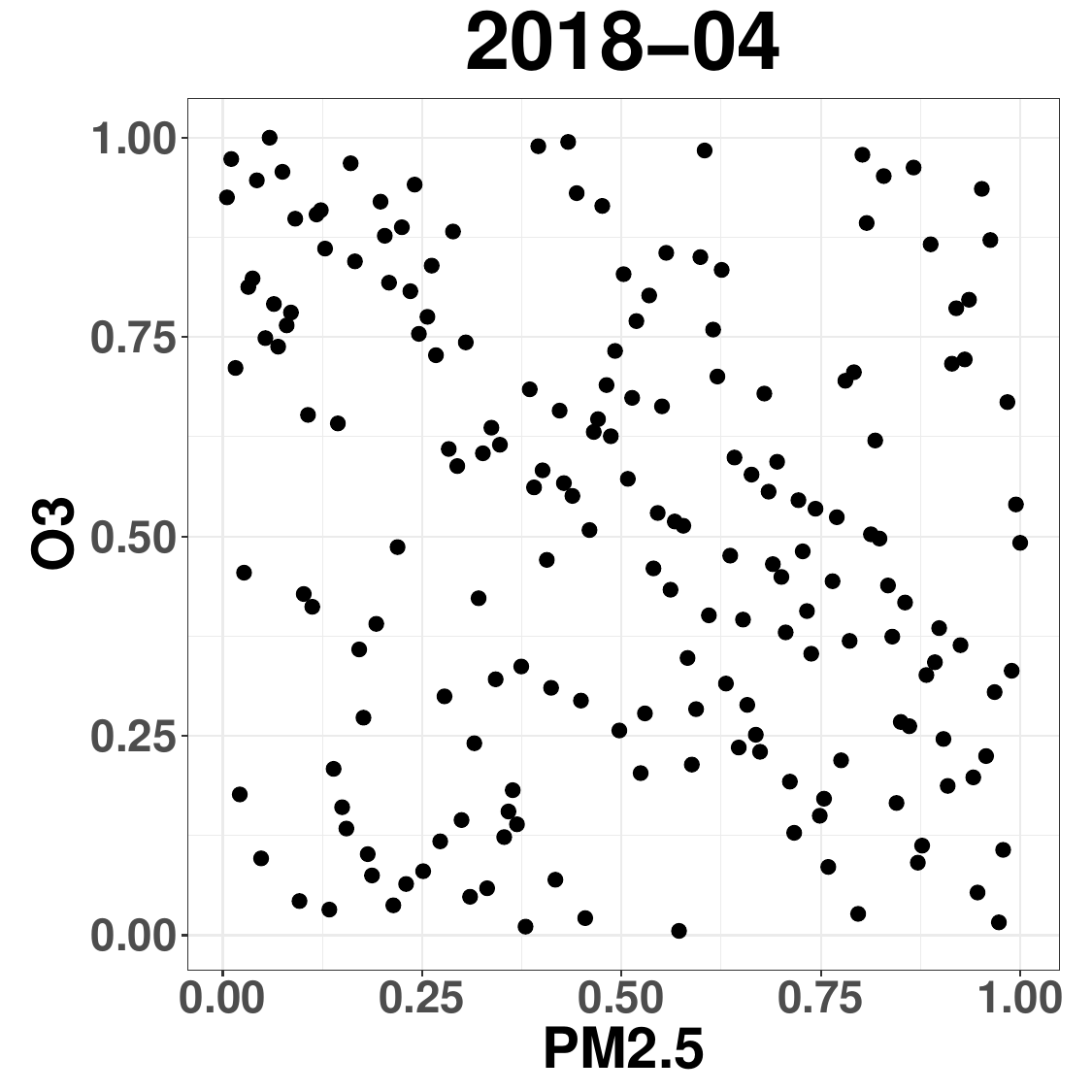}
\includegraphics[scale=0.144]{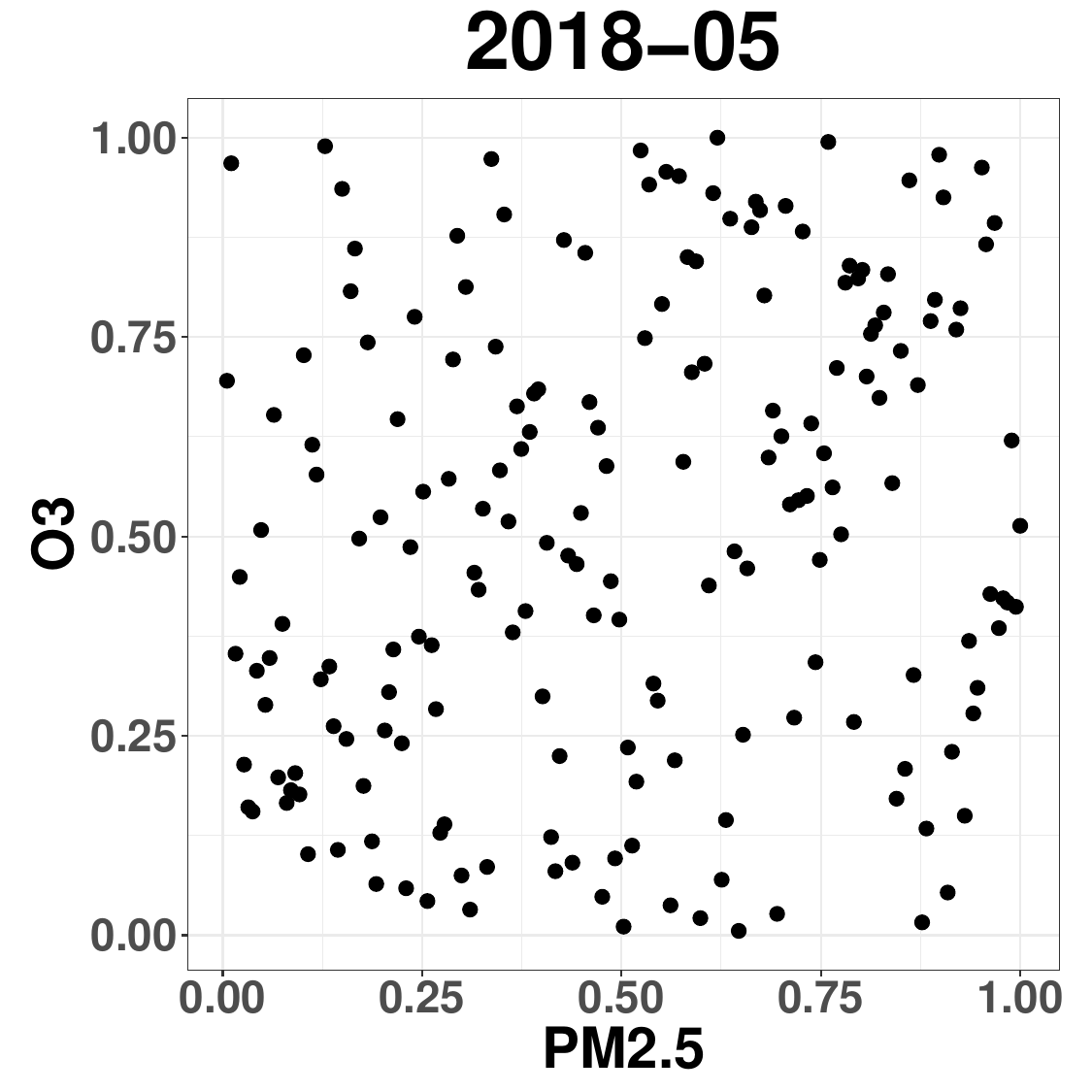}
\includegraphics[scale=0.144]{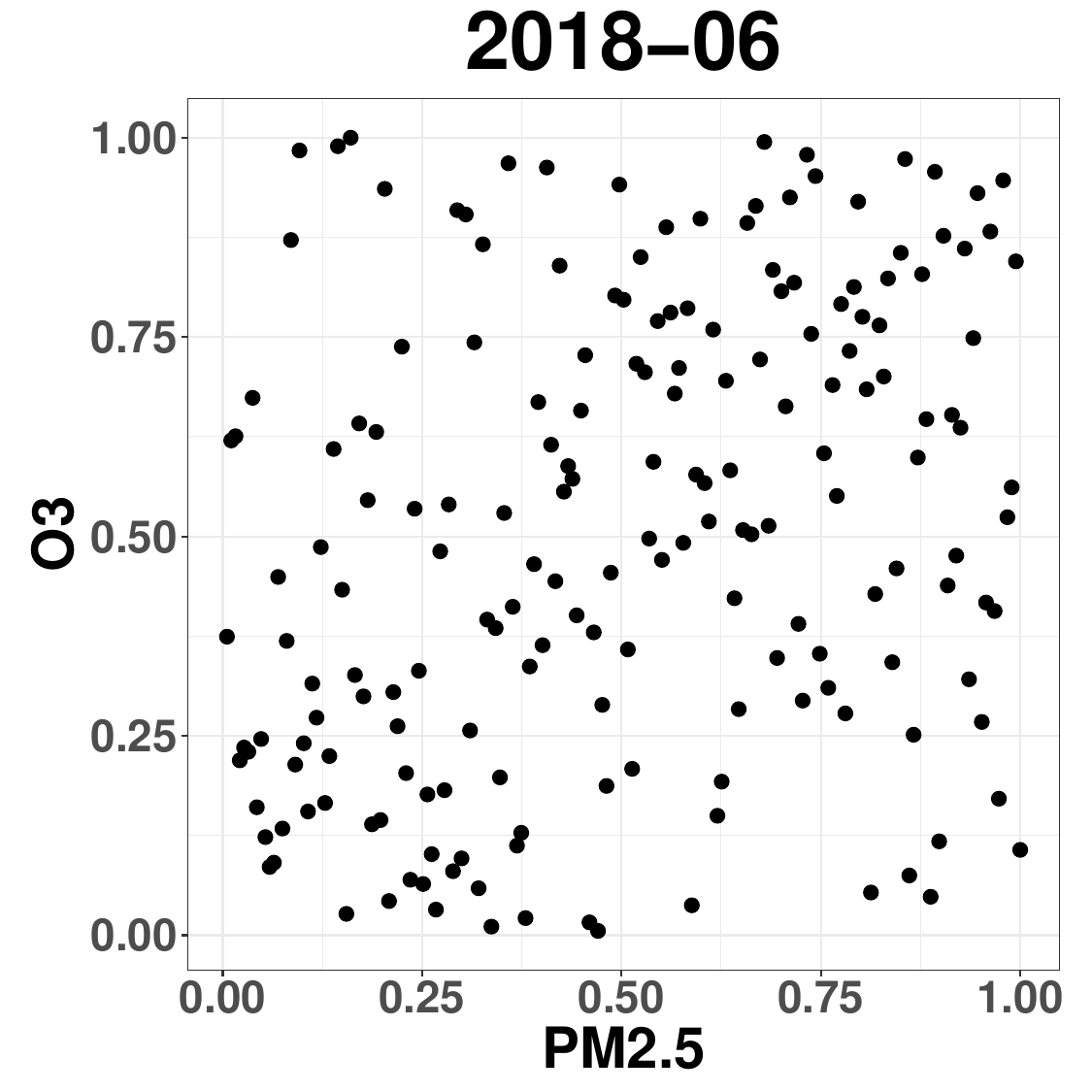}
\includegraphics[scale=0.144]{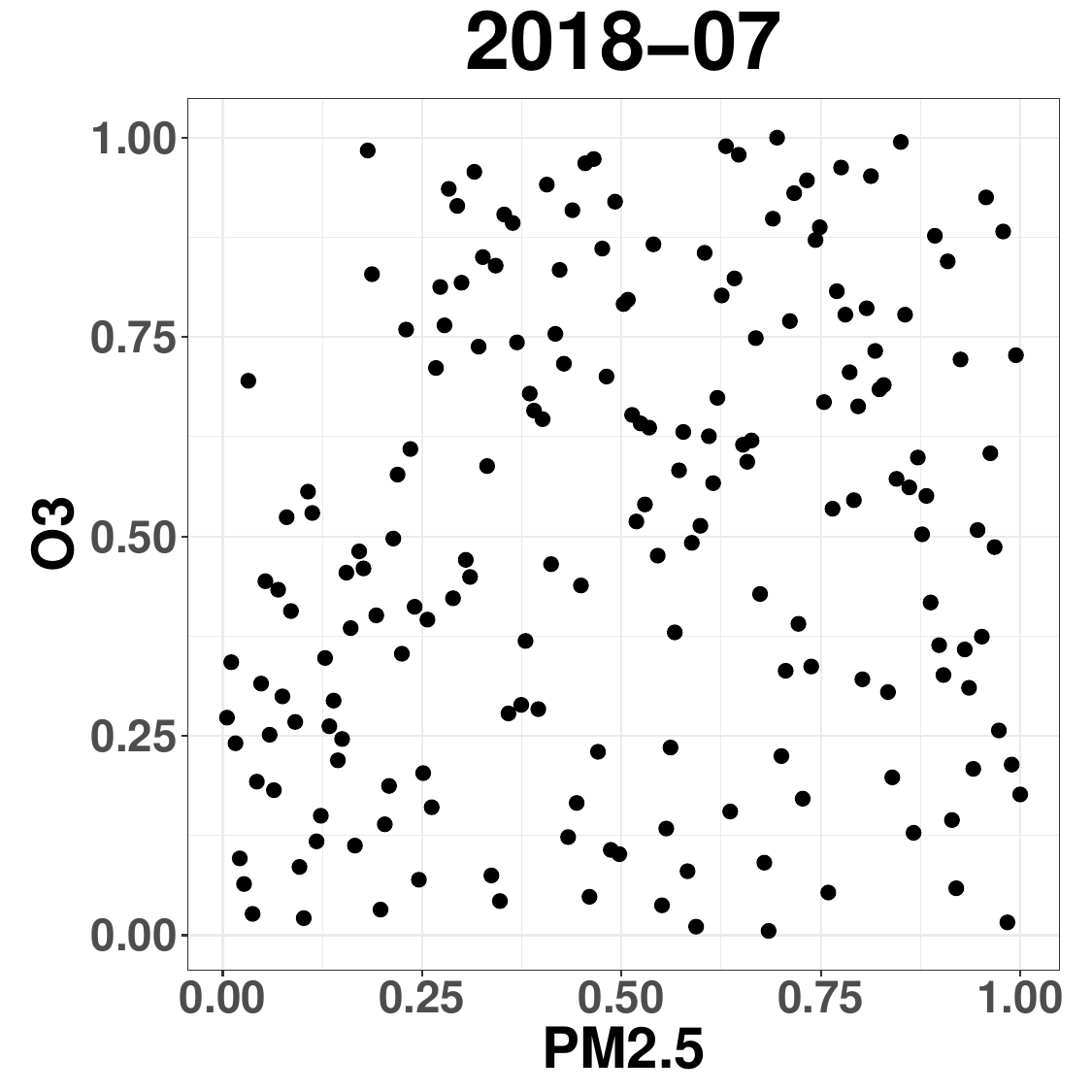}
\includegraphics[scale=0.144]{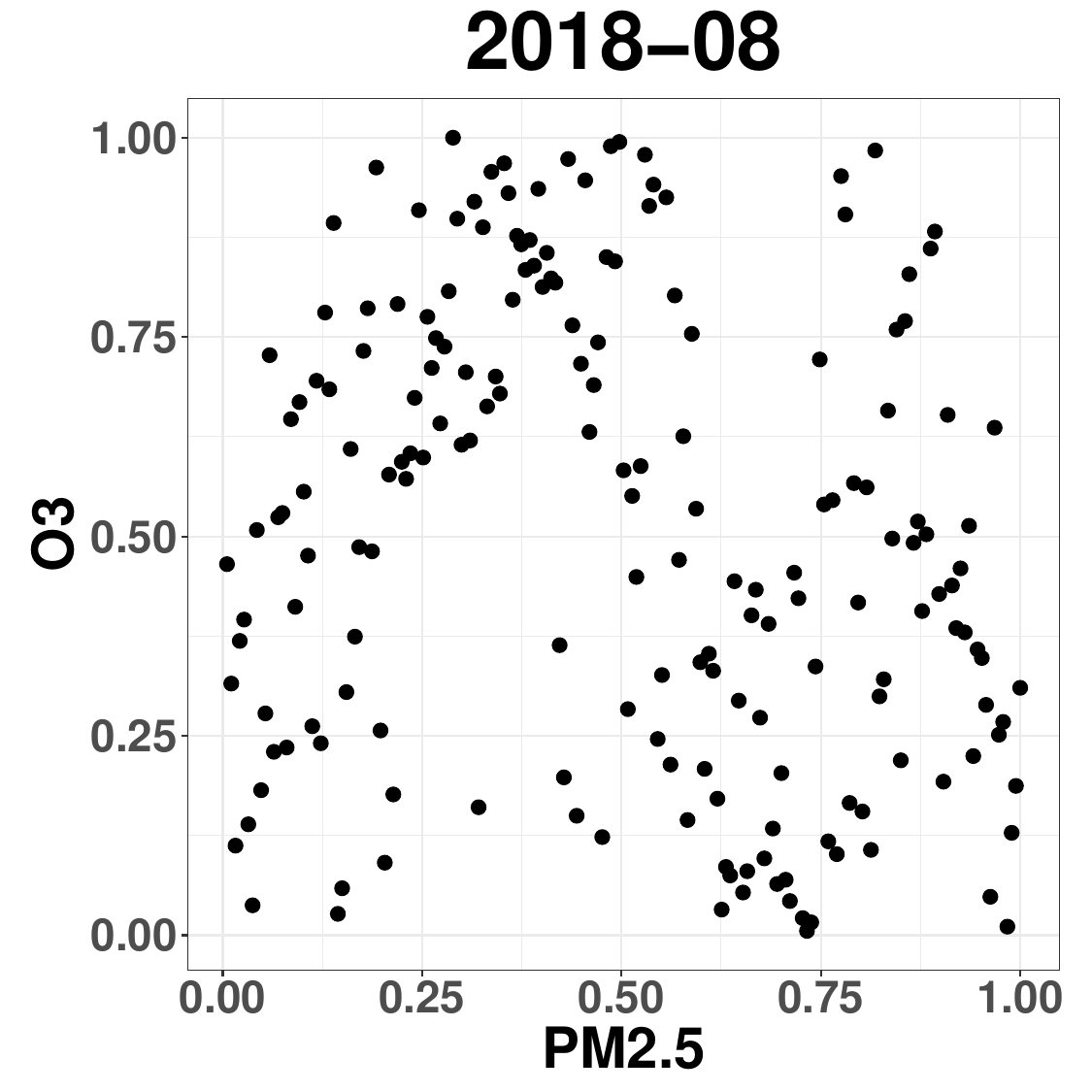}
\includegraphics[scale=0.144]{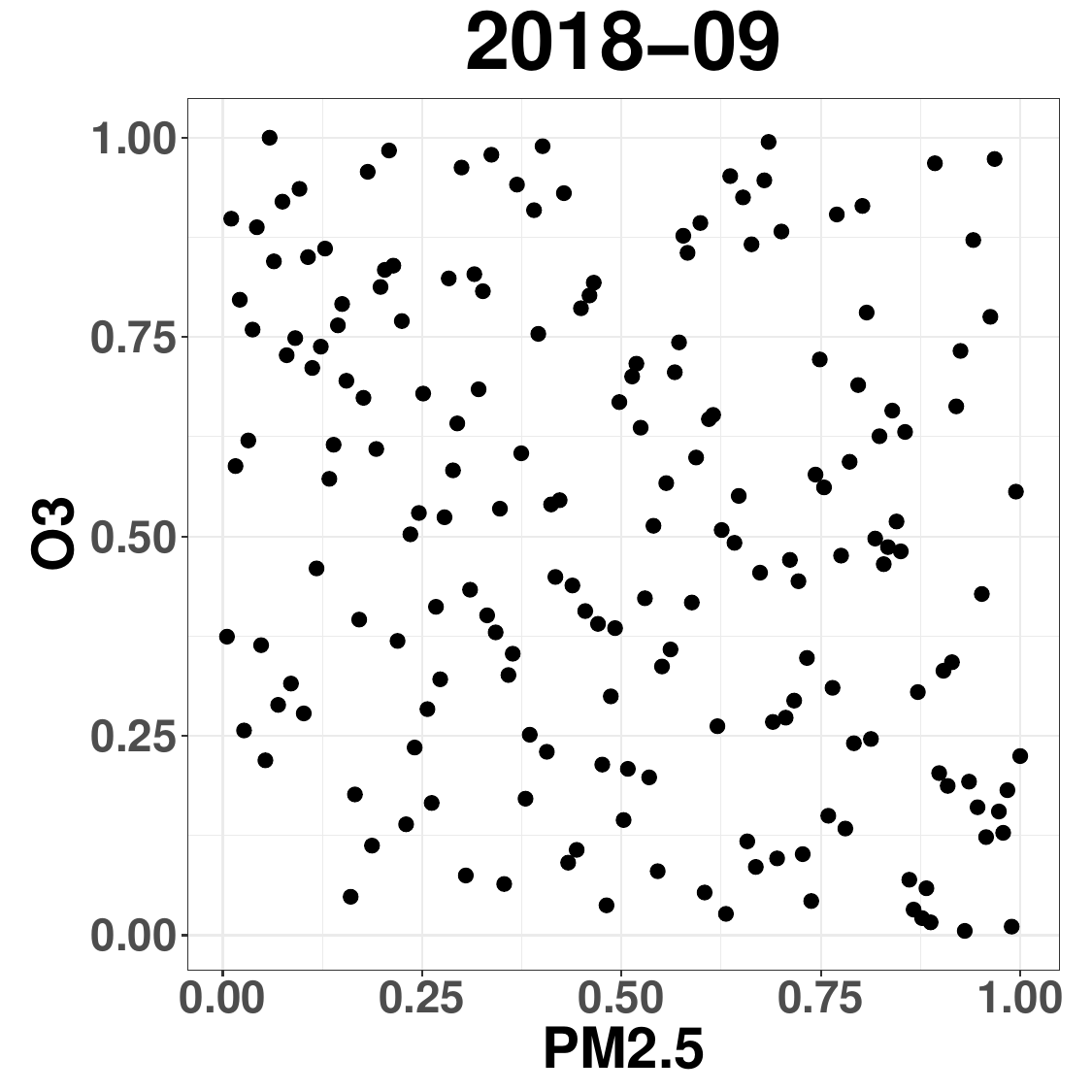}
}
\centerline{
\includegraphics[scale=0.144]{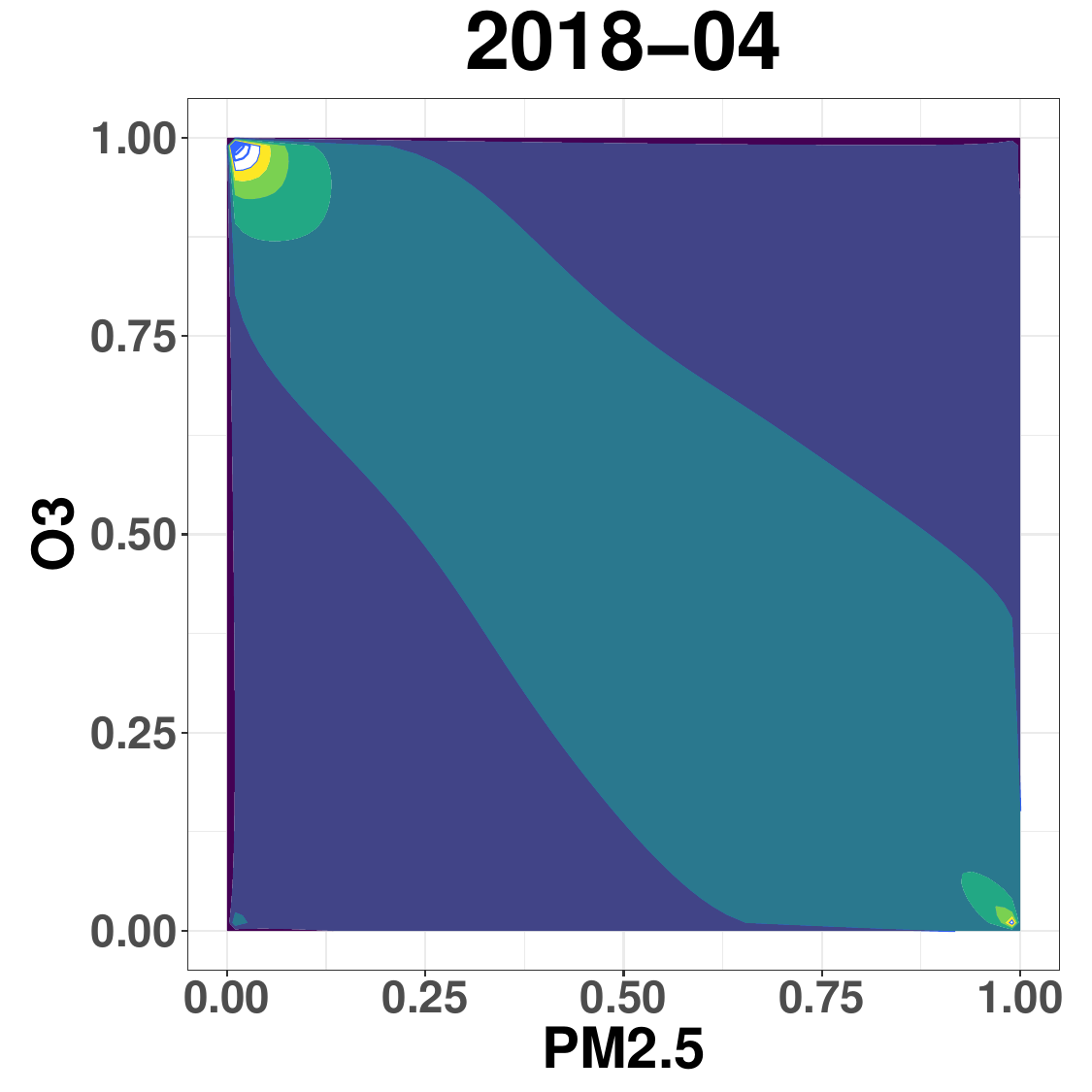}
\includegraphics[scale=0.144]{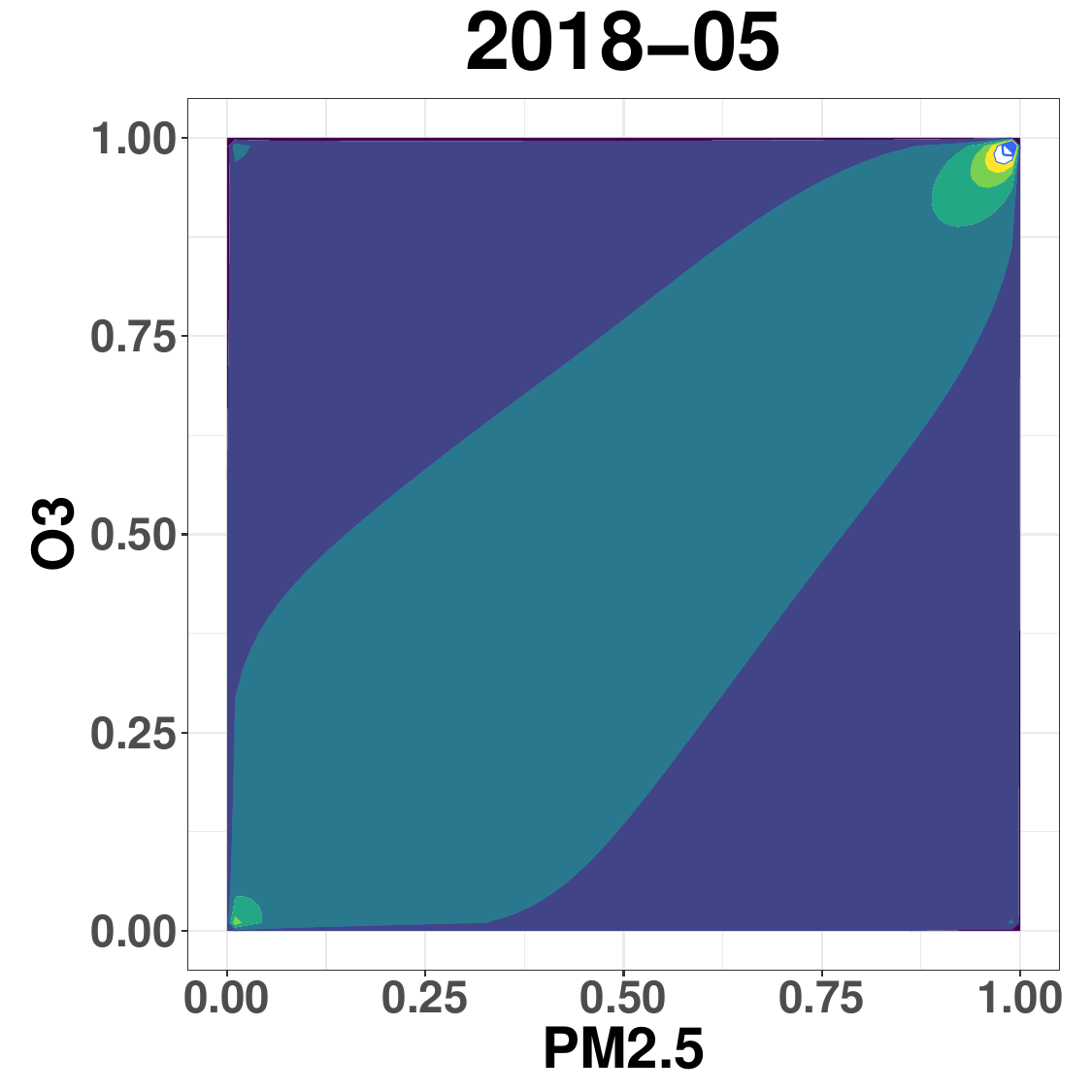}
\includegraphics[scale=0.144]{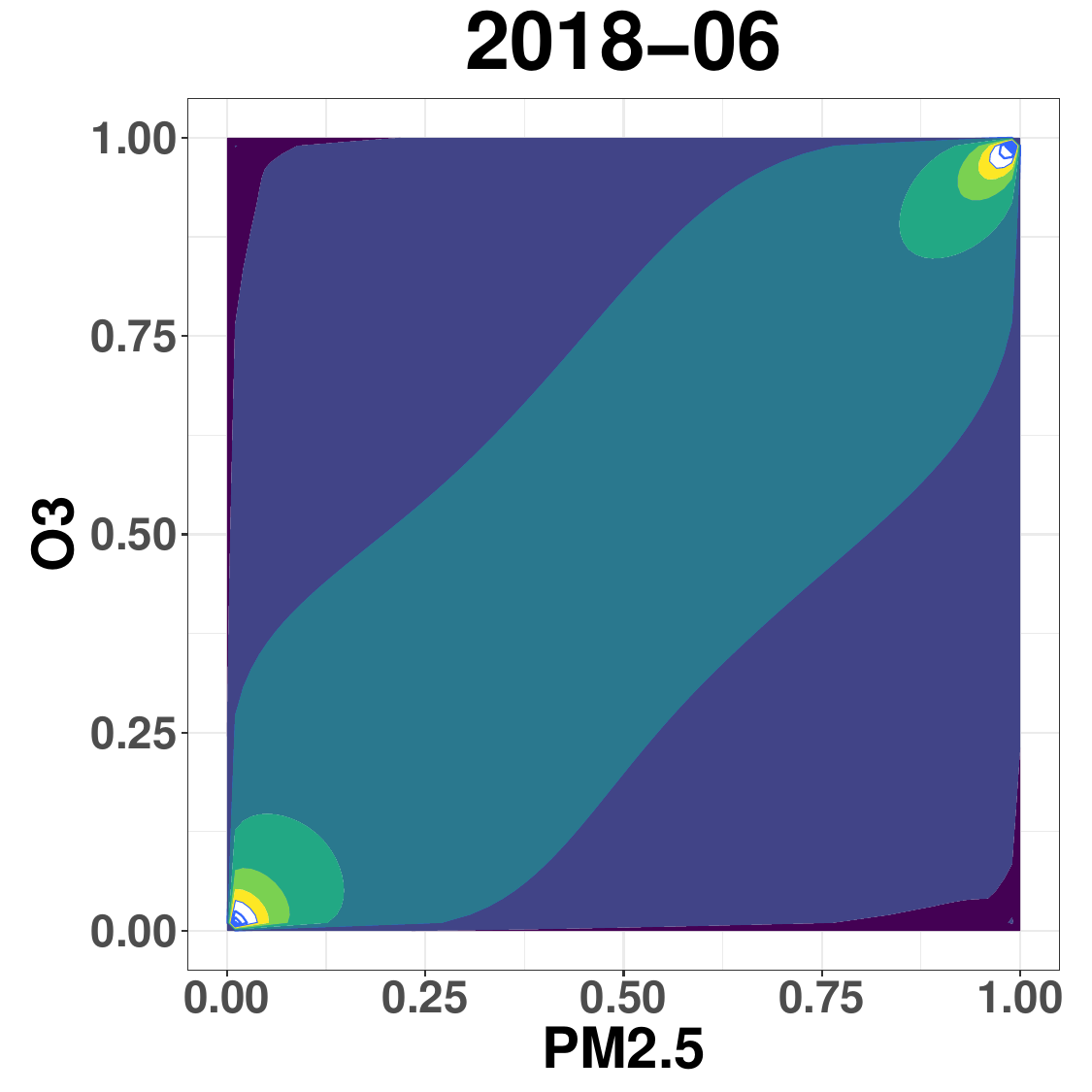}
\includegraphics[scale=0.144]{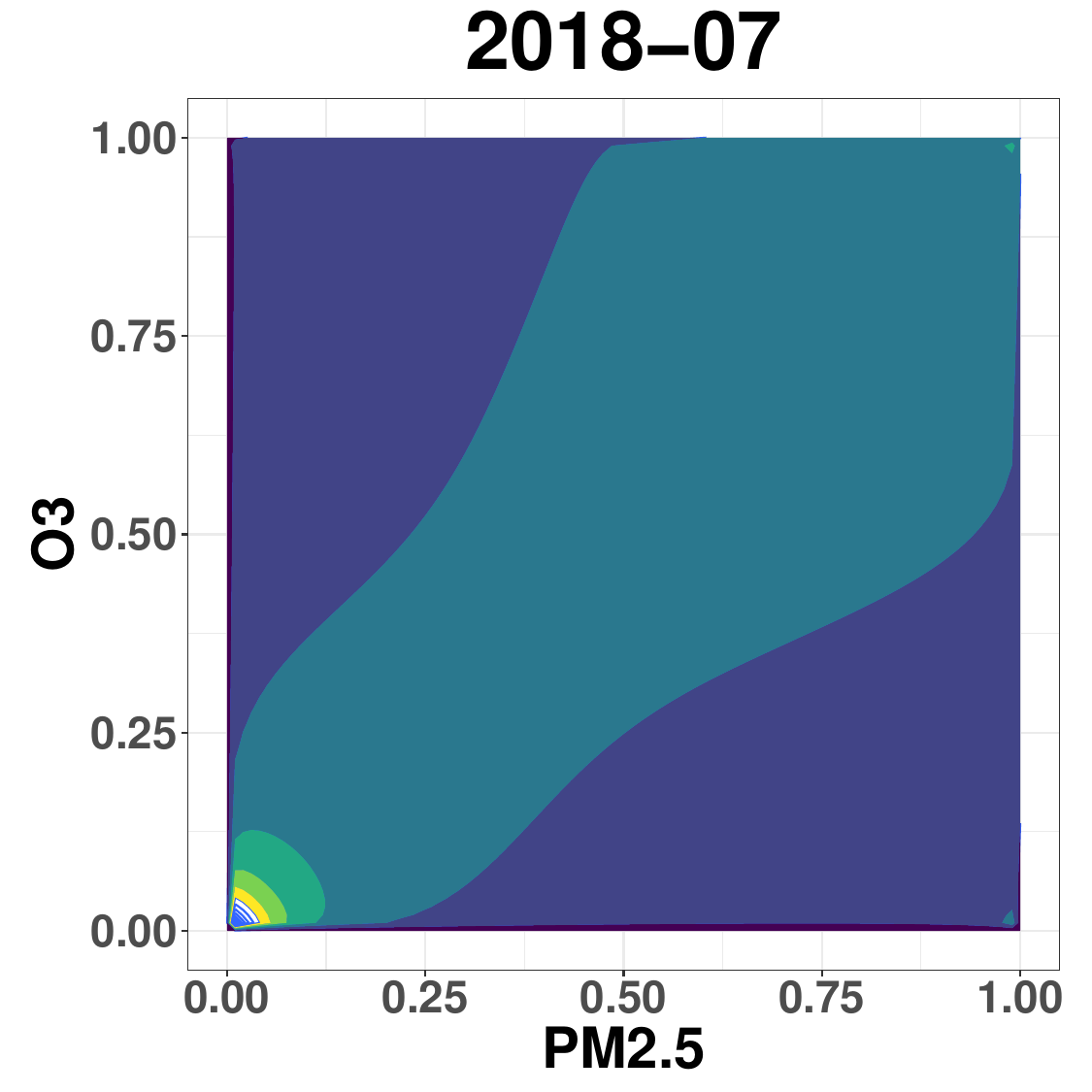}
\includegraphics[scale=0.144]{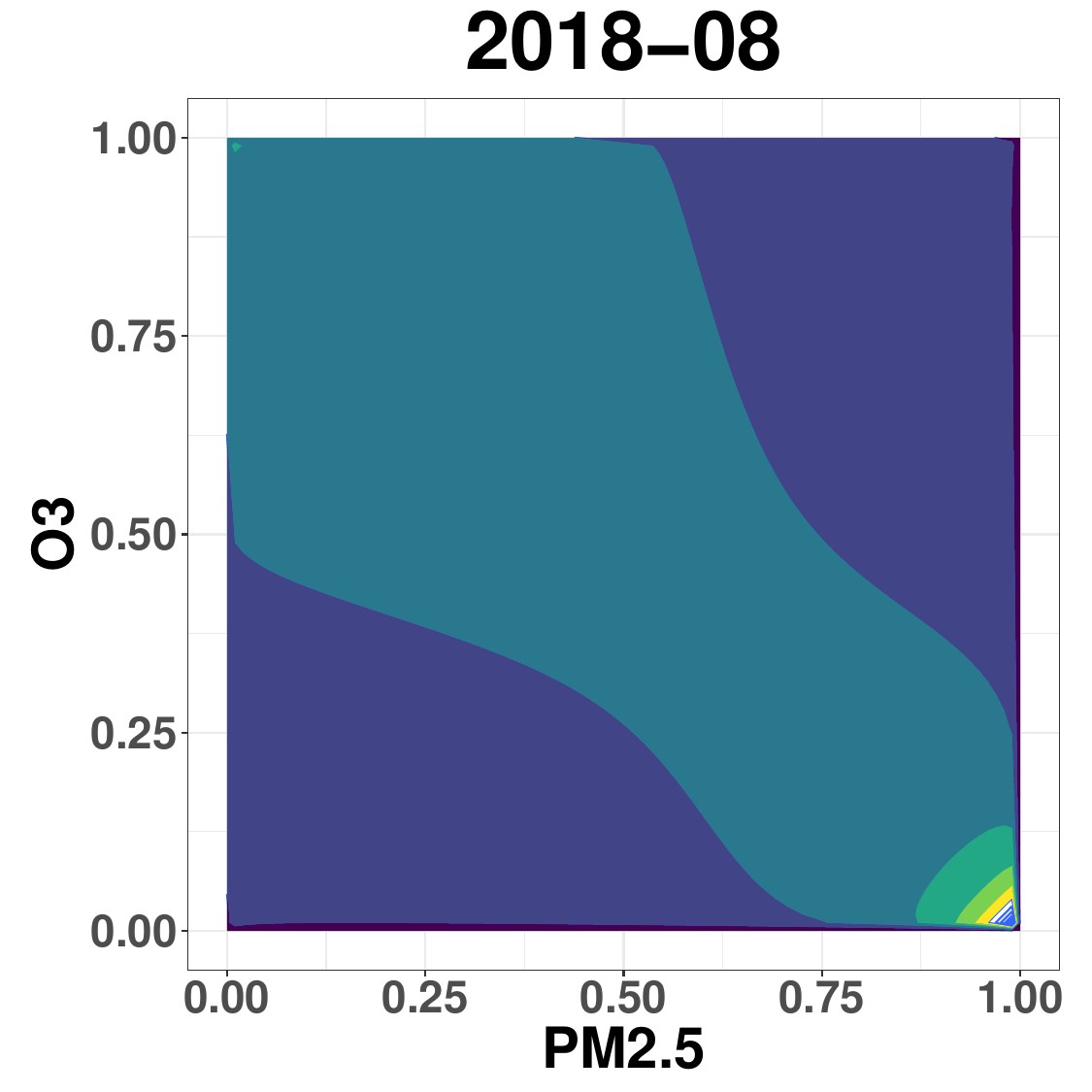}
\includegraphics[scale=0.144]{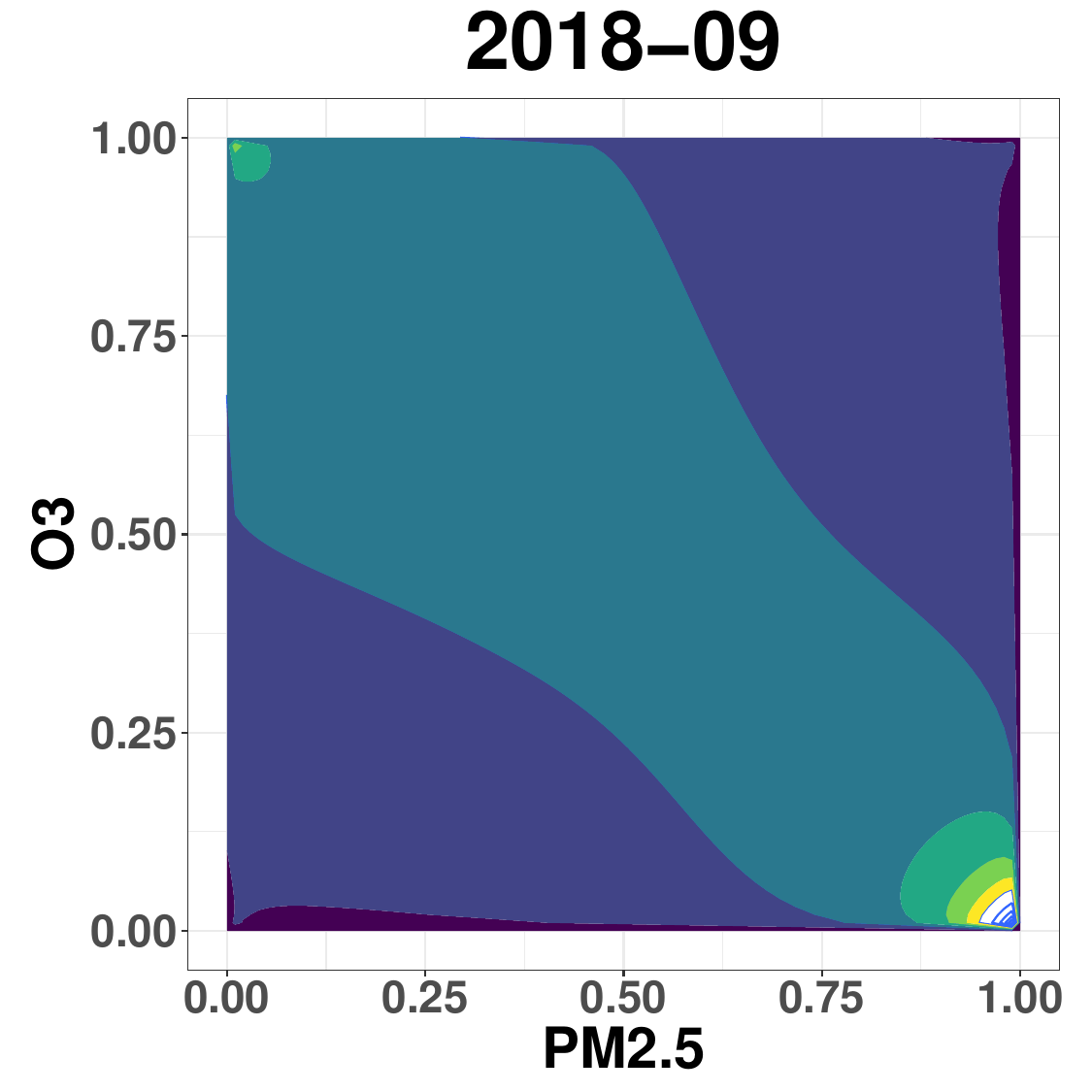}
}
\centerline{
\includegraphics[scale=0.144]{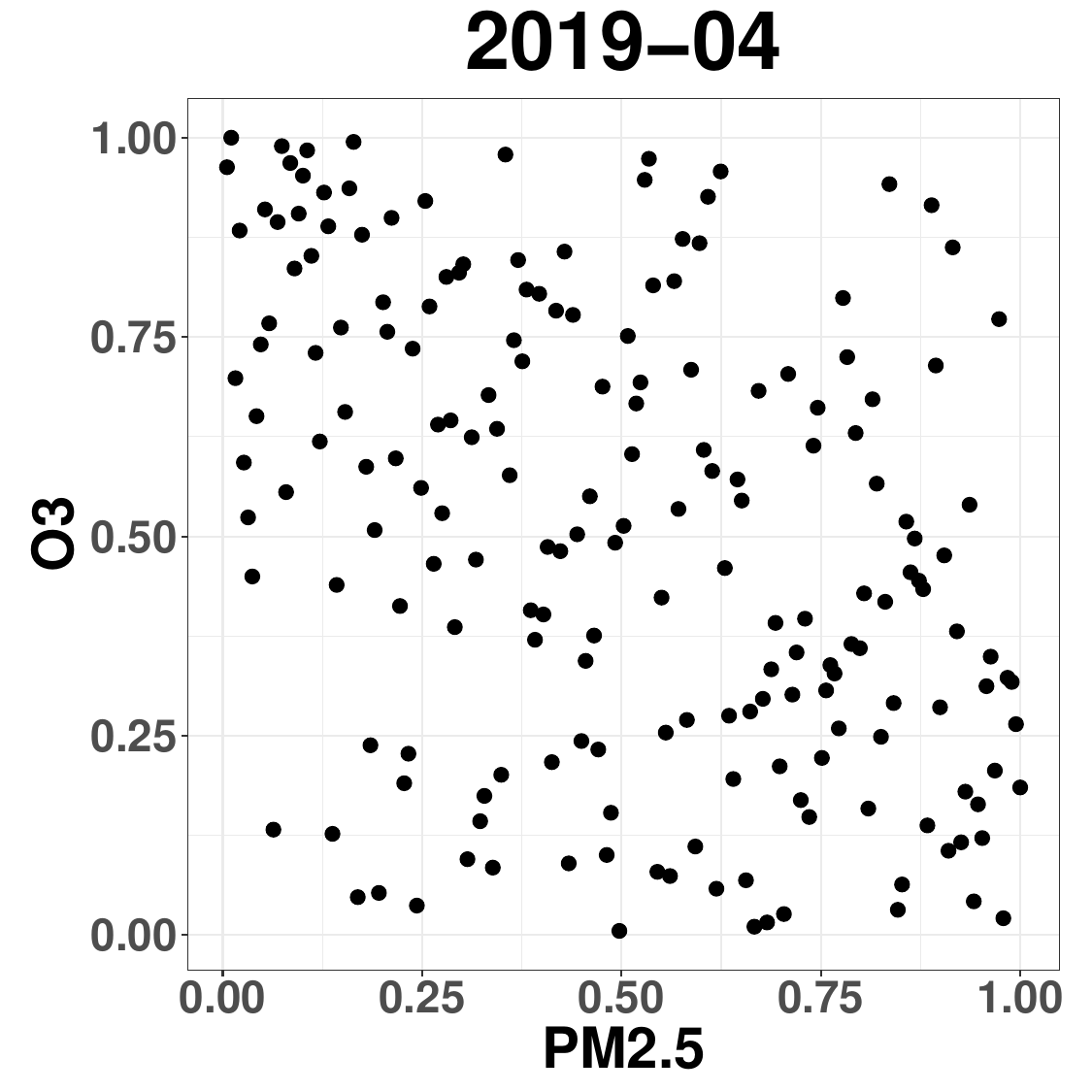}
\includegraphics[scale=0.144]{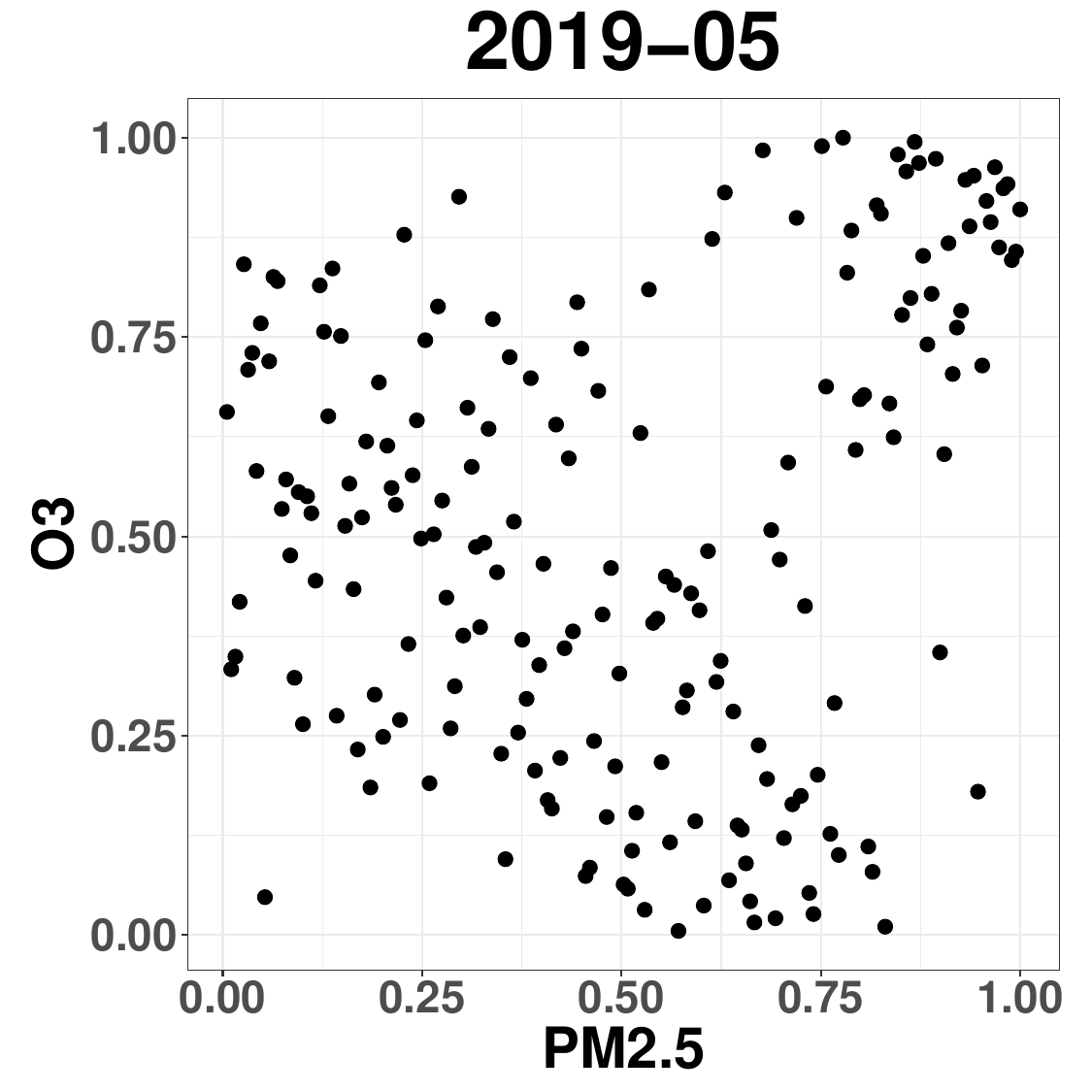}
\includegraphics[scale=0.144]{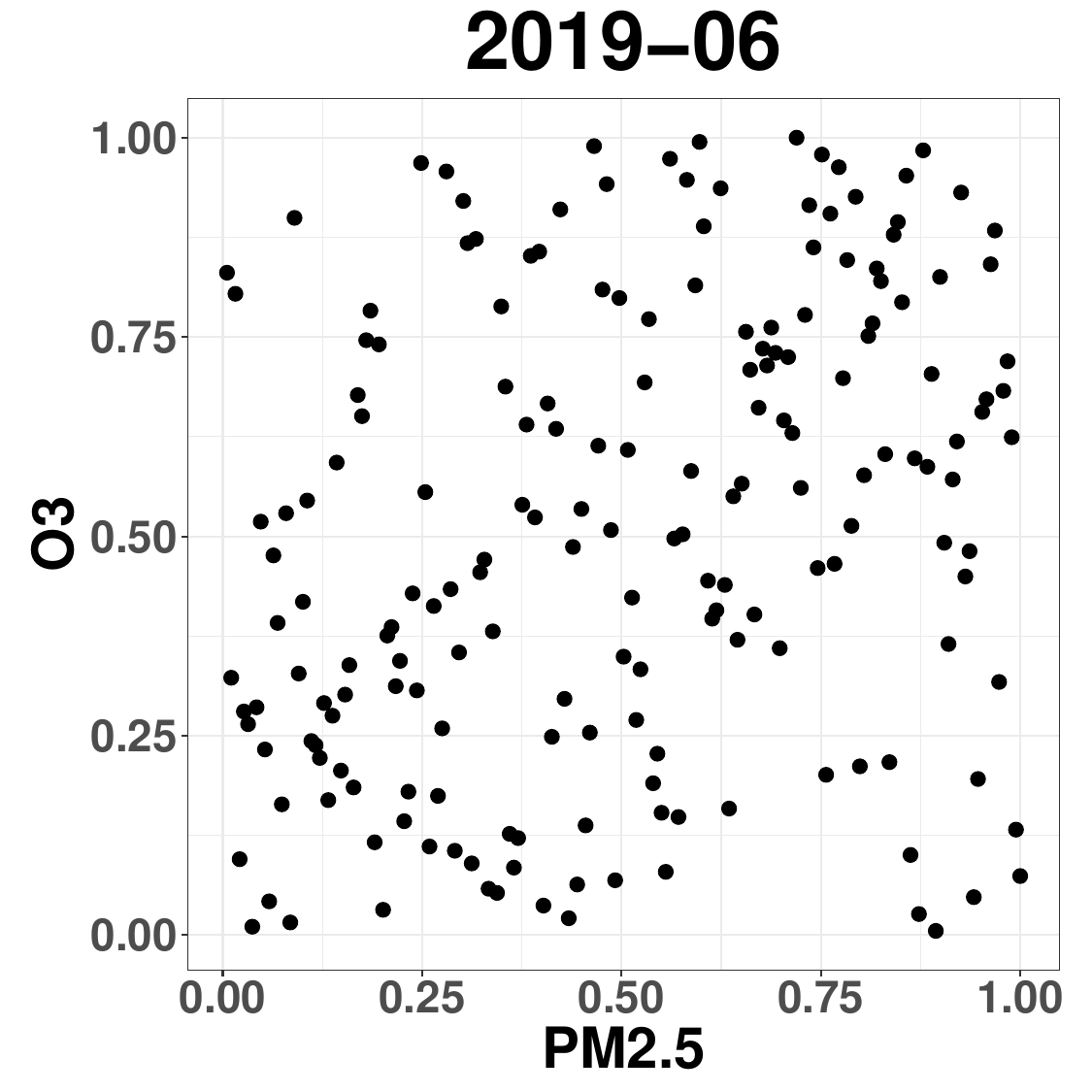}
\includegraphics[scale=0.144]{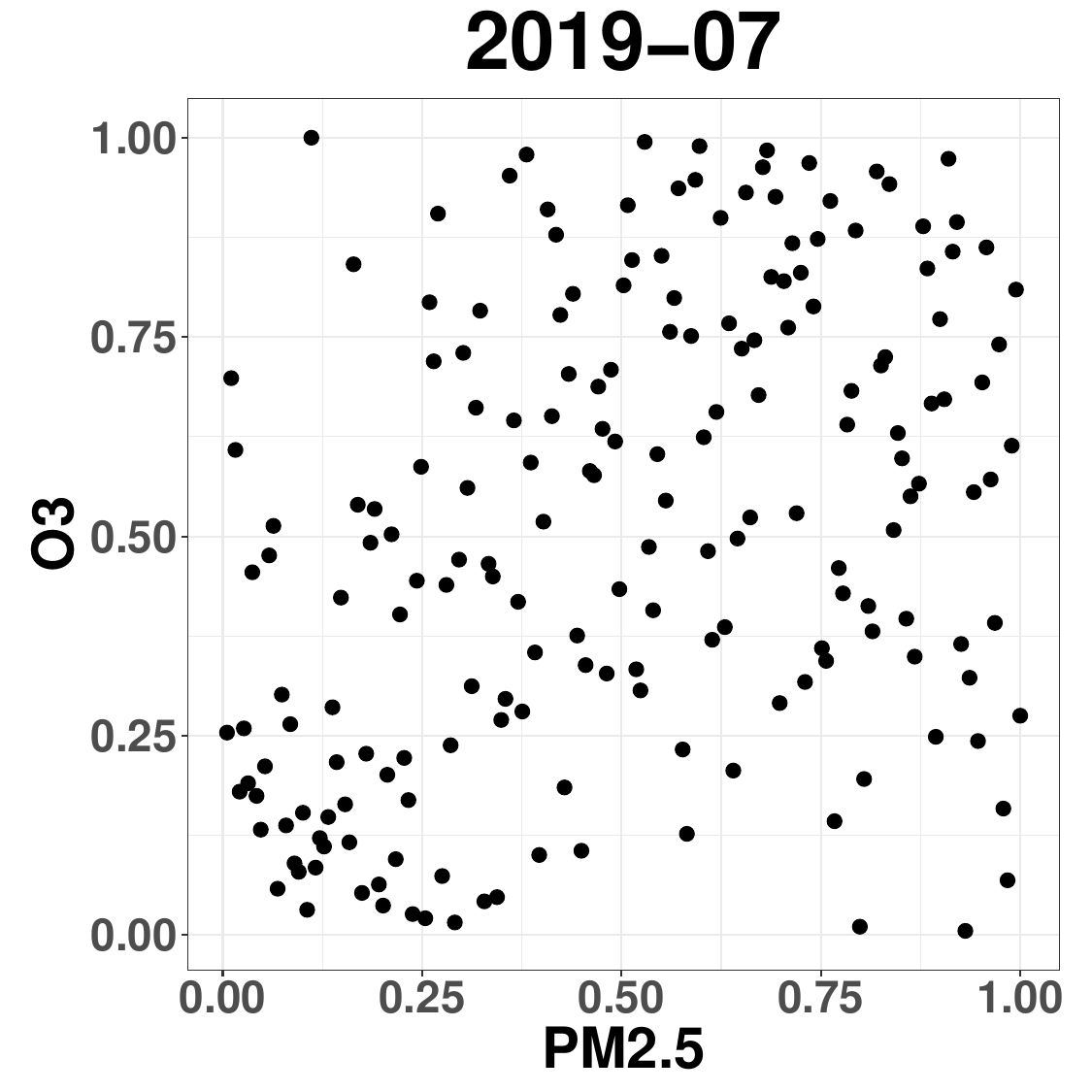}
\includegraphics[scale=0.144]{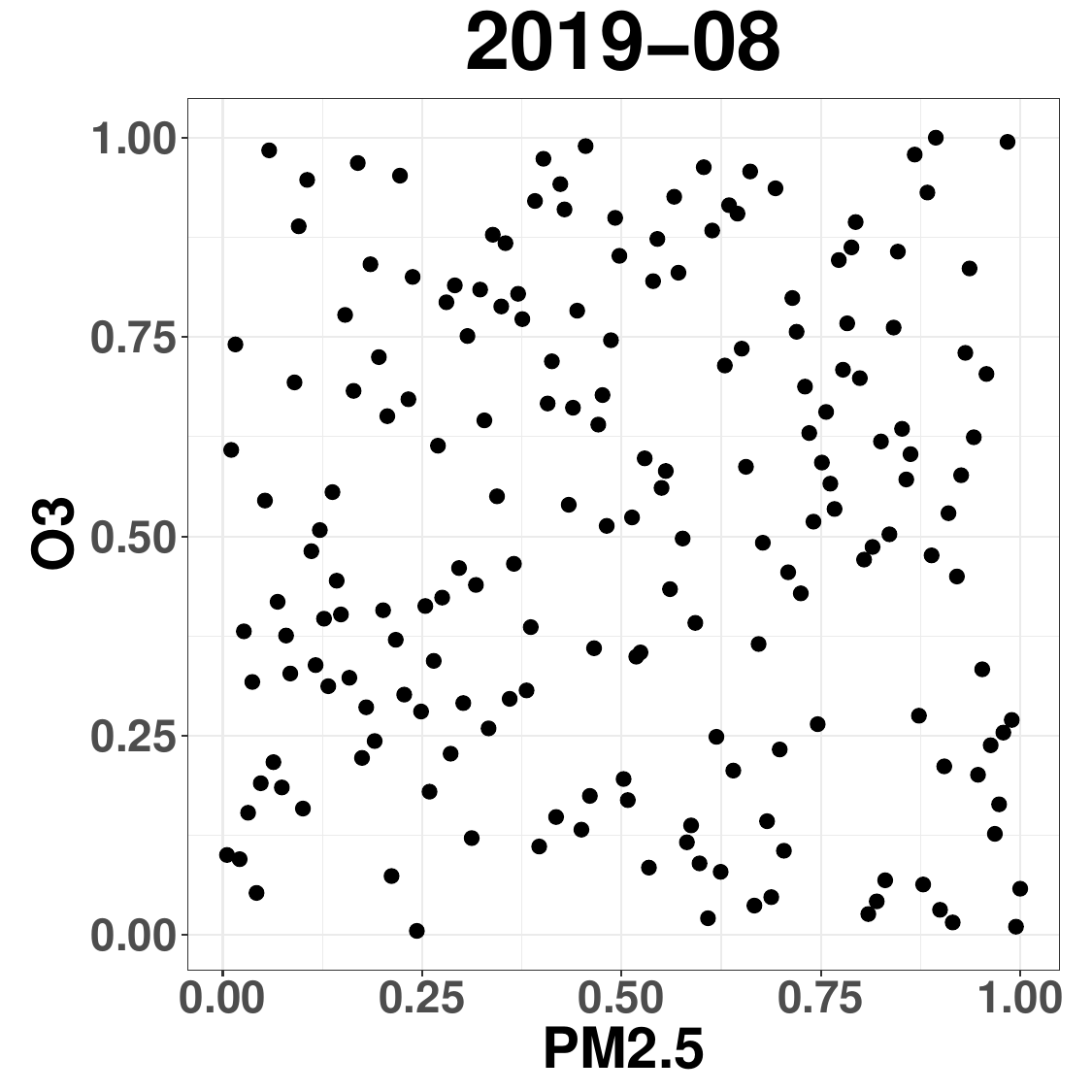}
\includegraphics[scale=0.144]{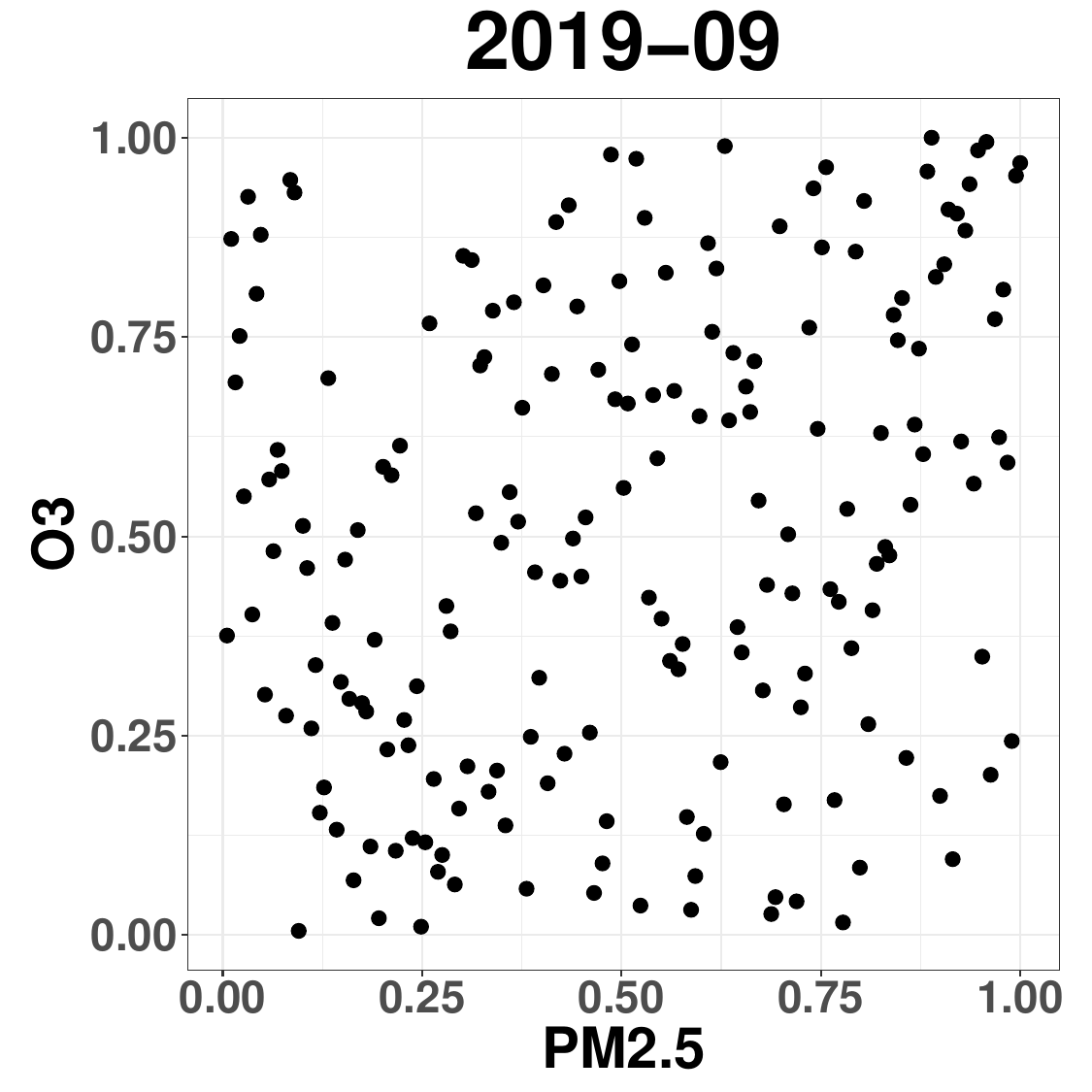}
}
\centerline{
\includegraphics[scale=0.144]{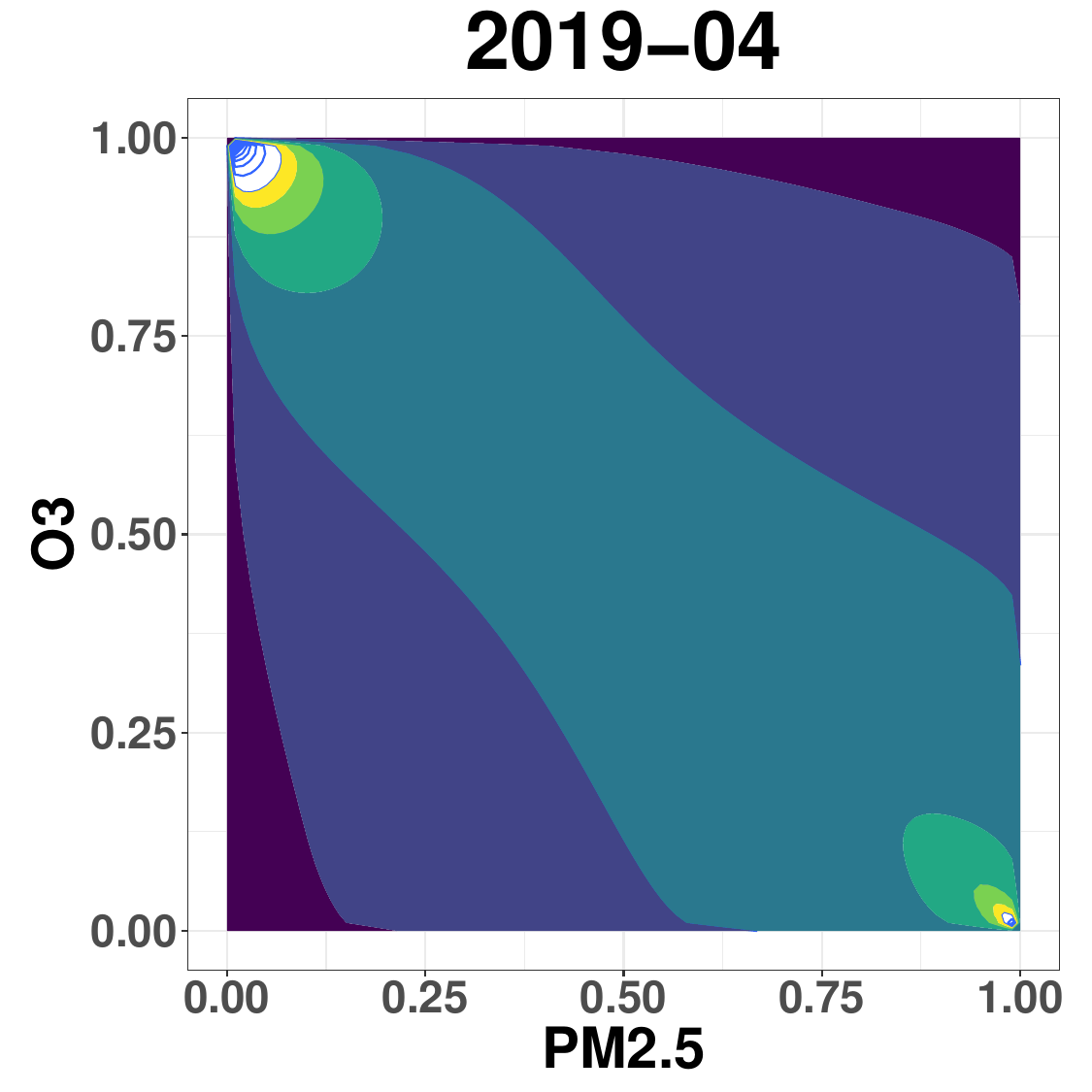}
\includegraphics[scale=0.144]{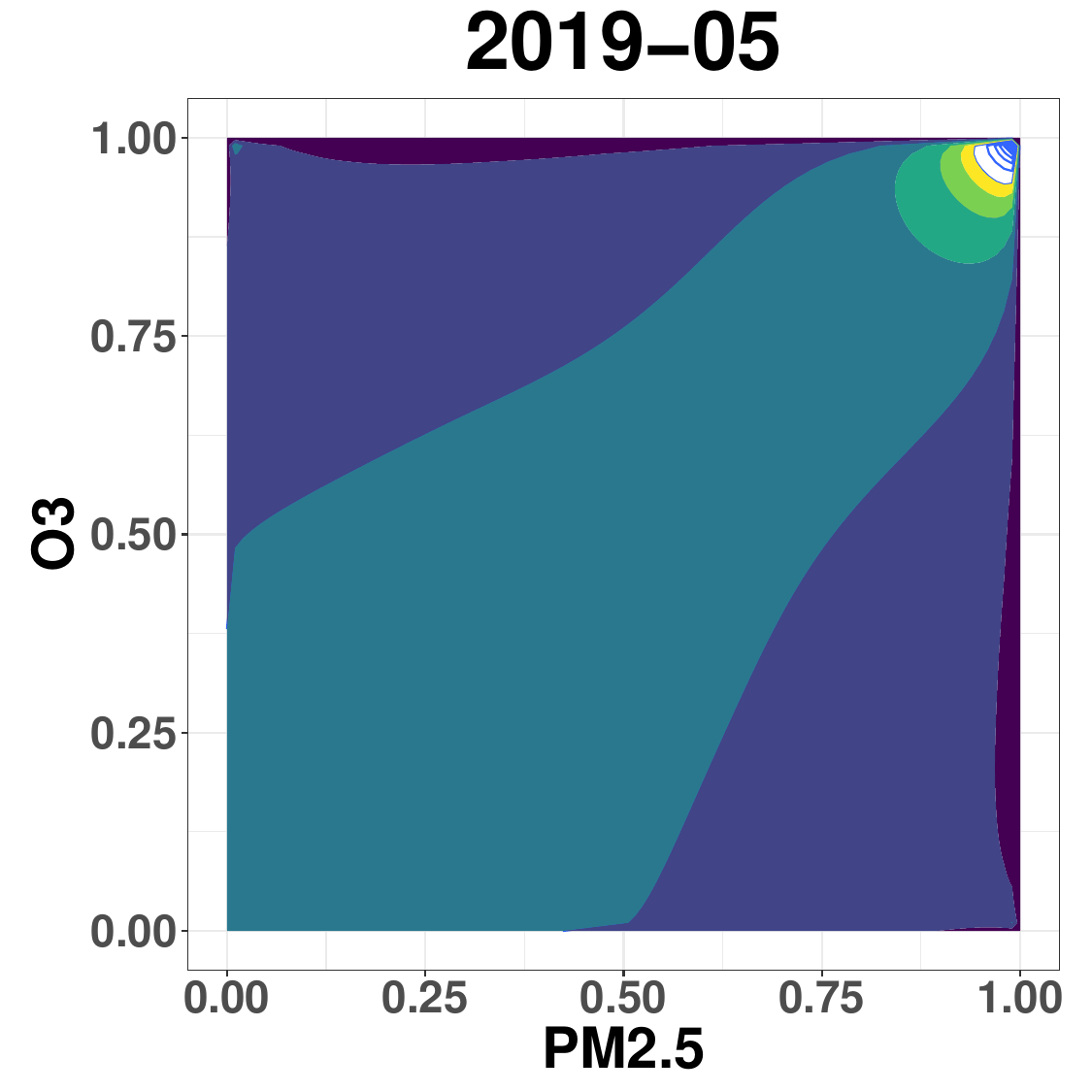}
\includegraphics[scale=0.144]{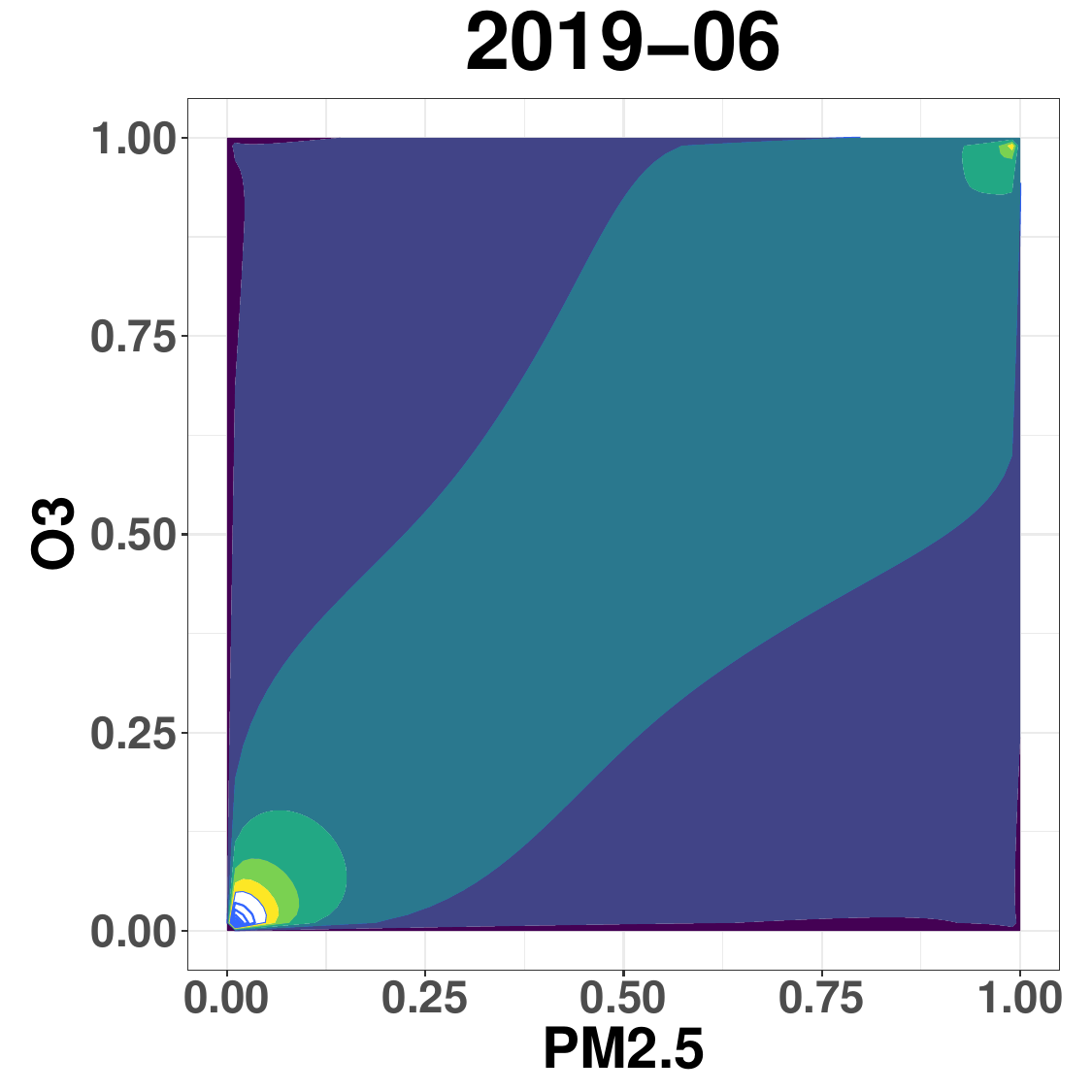}
\includegraphics[scale=0.144]{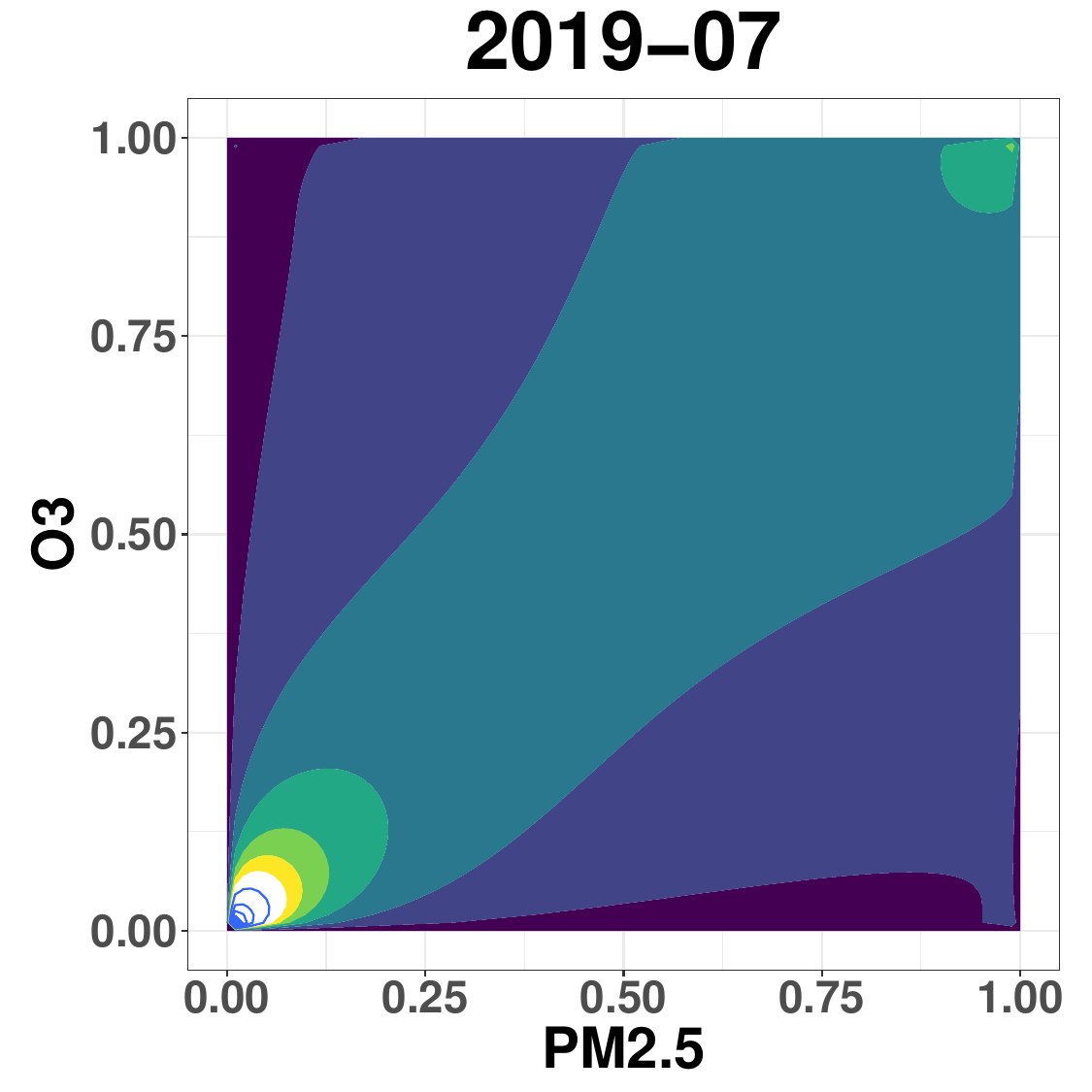}
\includegraphics[scale=0.144]{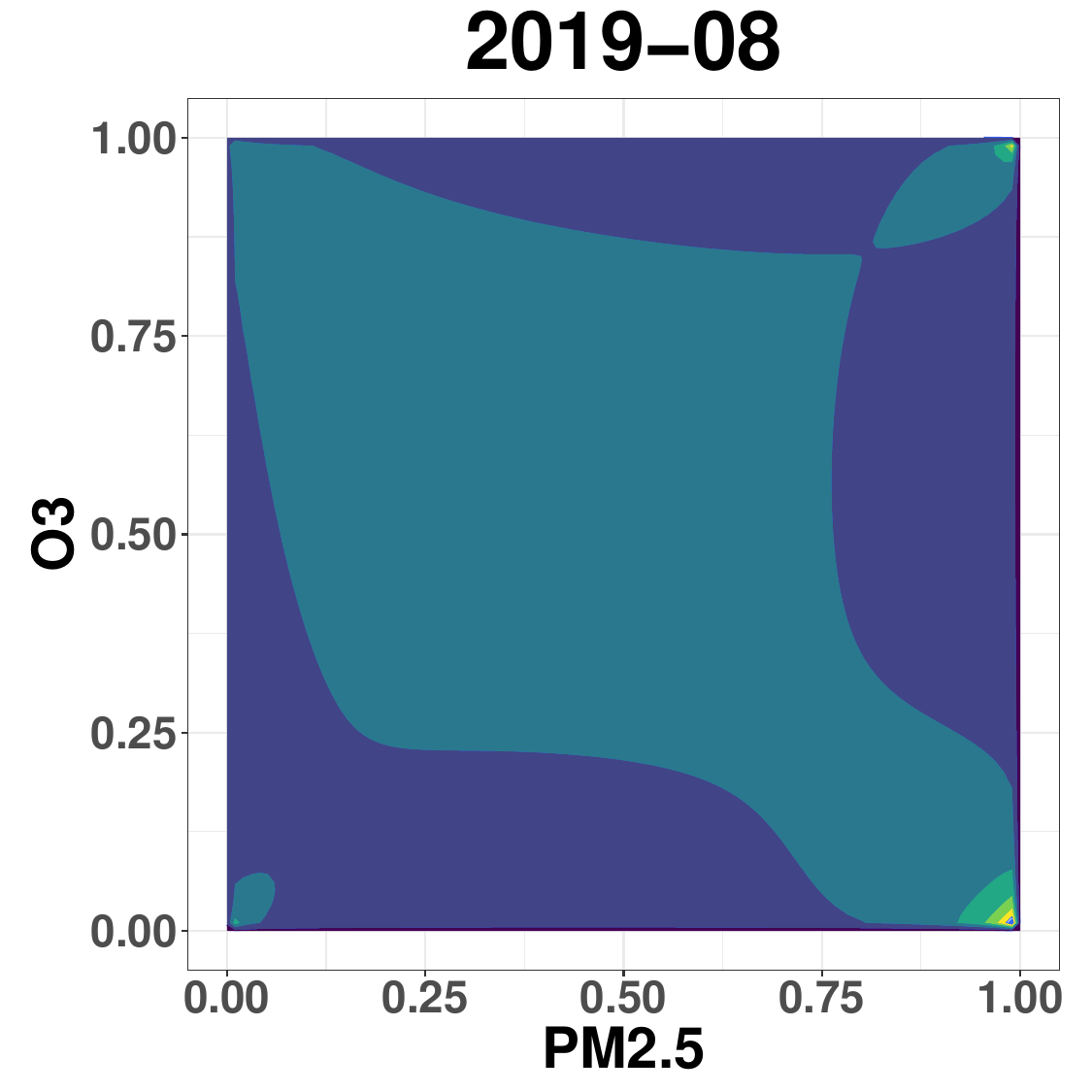}
\includegraphics[scale=0.144]{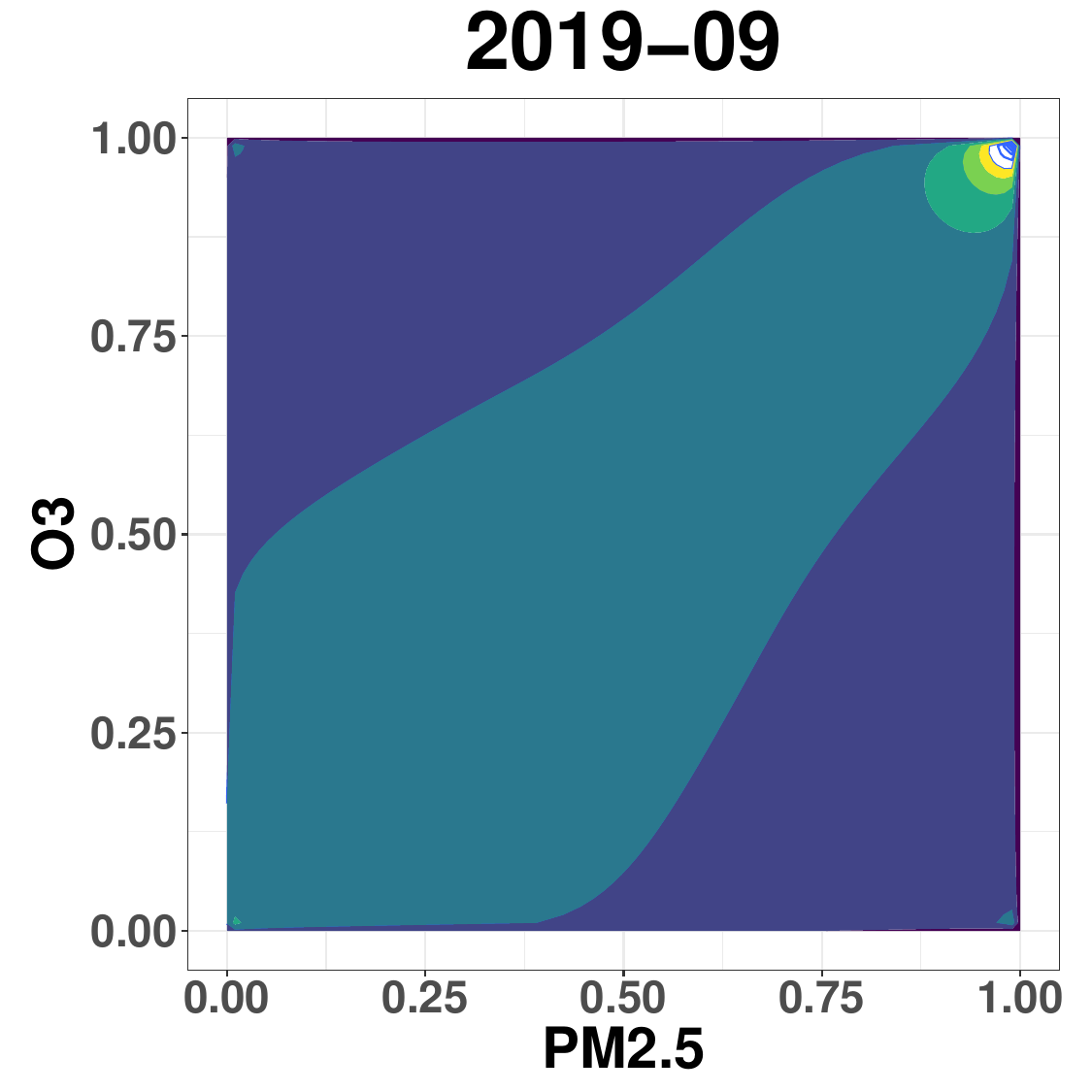}
}
\caption{Observed real data and estimated joint density from April to September 2017 (top), 2018 (middle), and 2019 (bottom). Dependence patterns tend to vary between seasons with some months exhibiting transitional regimes.  }
\label{fig:joint_predict_real}
\end{figure}

\begin{figure}
    \centering
    \includegraphics[scale=0.4]{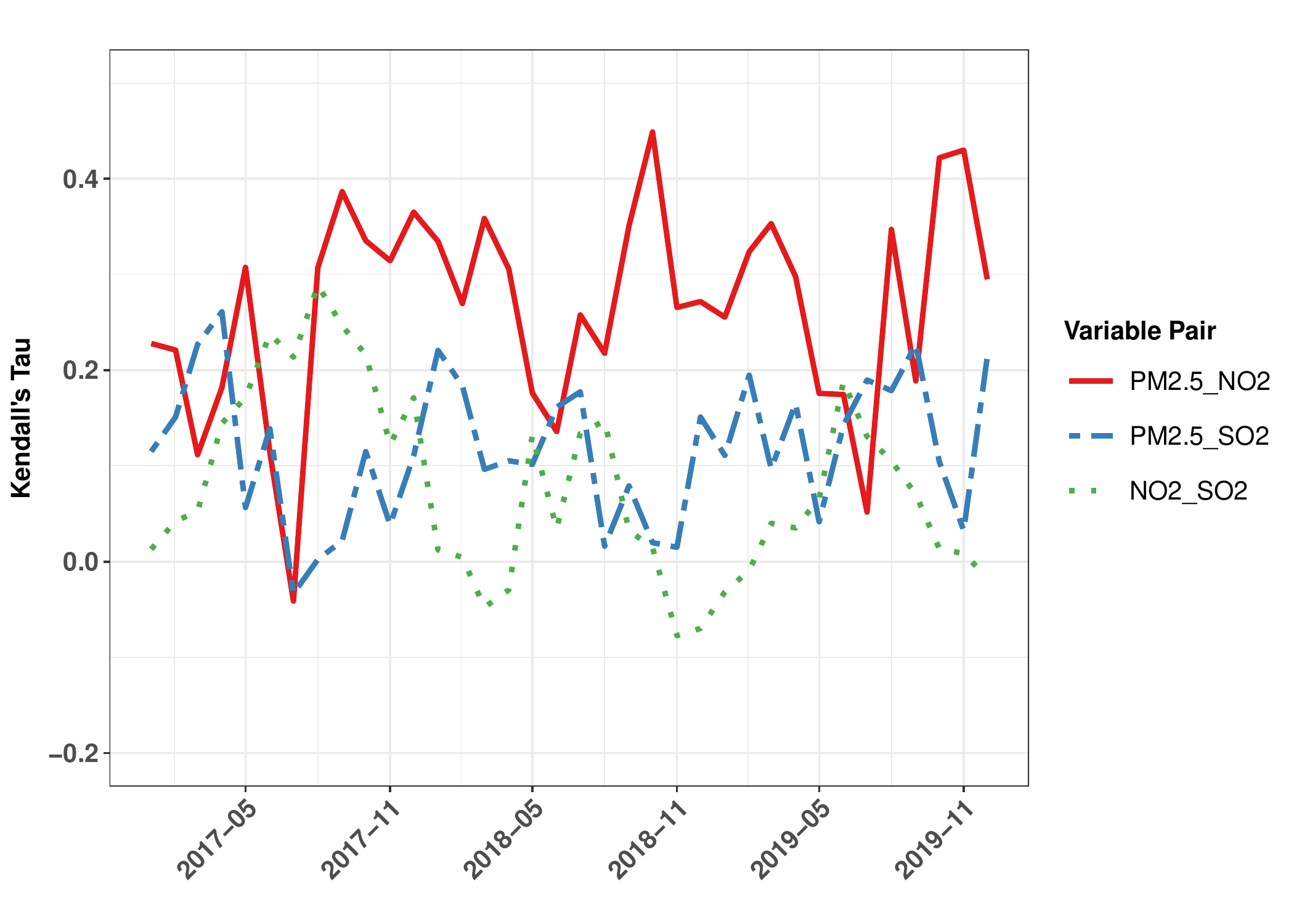}
    \caption{Multivariate real dataset. Pairwise empirical Kendall's tau.}
\label{fig:real_tau}
\end{figure}

\begin{figure}
\centering
\includegraphics[scale=0.4]{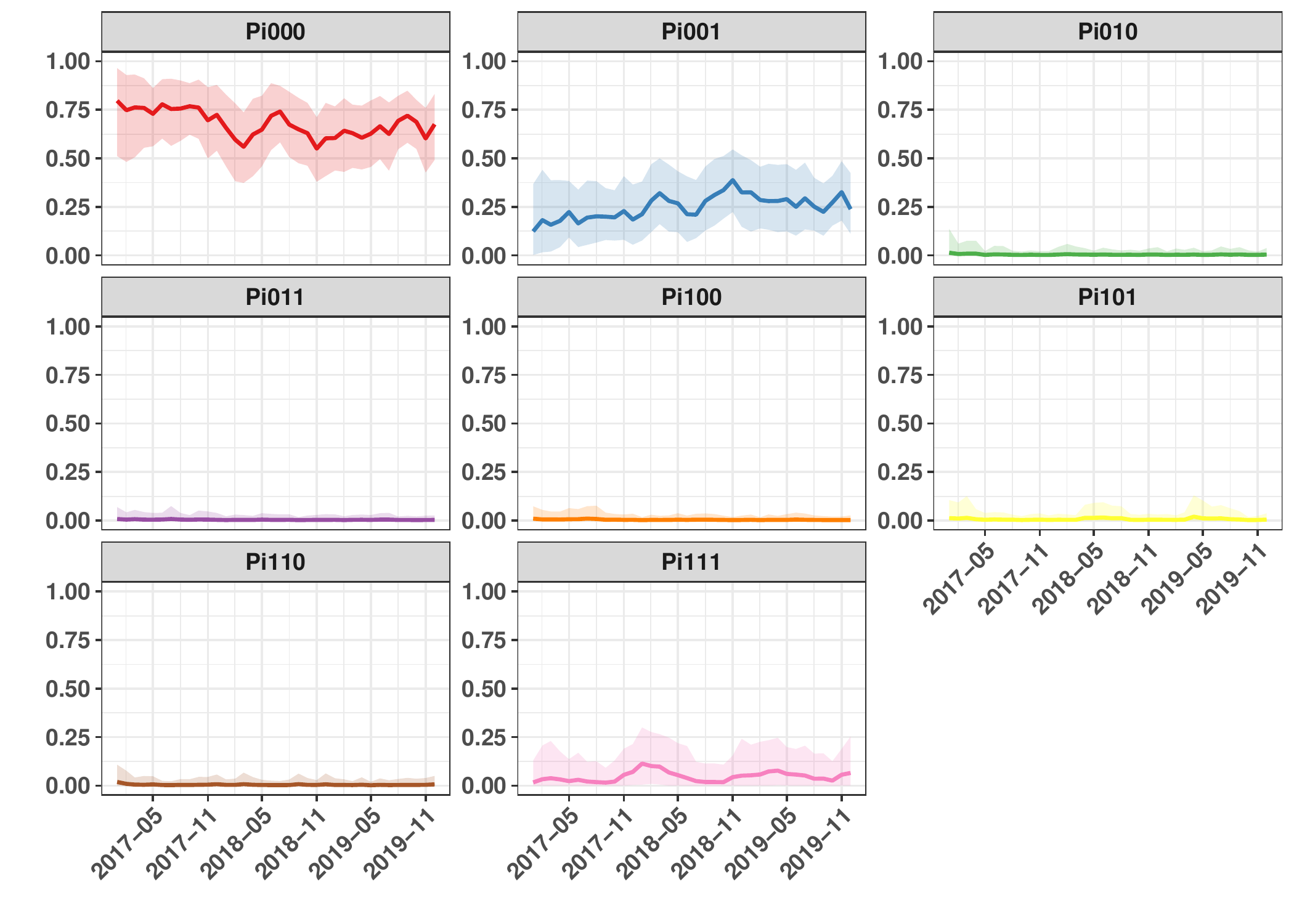}
\includegraphics[scale=0.4]{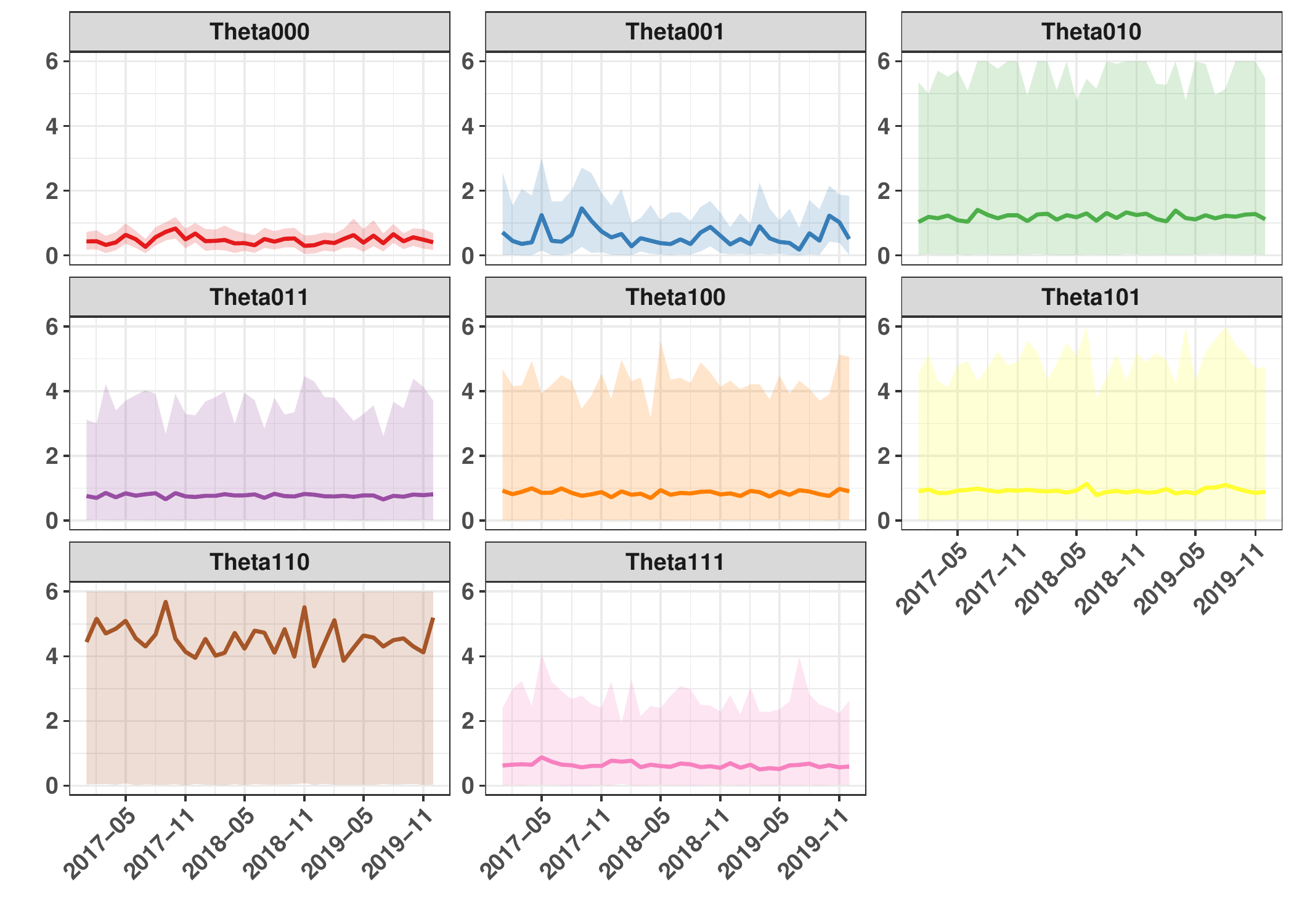}
\caption{Multivariate real dataset. Posterior estimates of $\bpi$ (top) and $\btheta$ (down): posterior mean (solid line) with 95\% credible intervals (shadows).}
\label{fig:pi_dim3_predict_real}
\end{figure}

\begin{figure}
\centerline{
\includegraphics[scale=0.144]{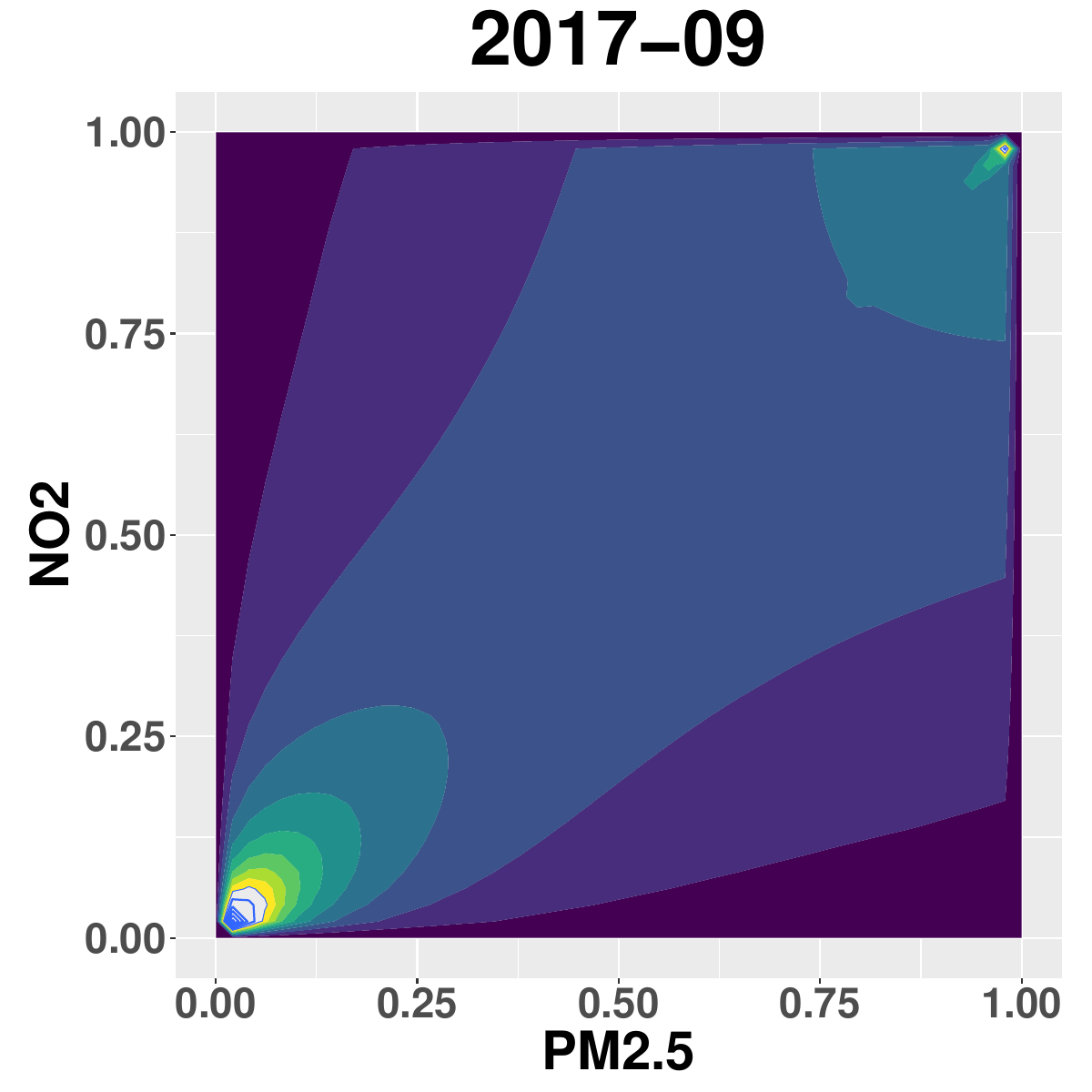}
\includegraphics[scale=0.144]{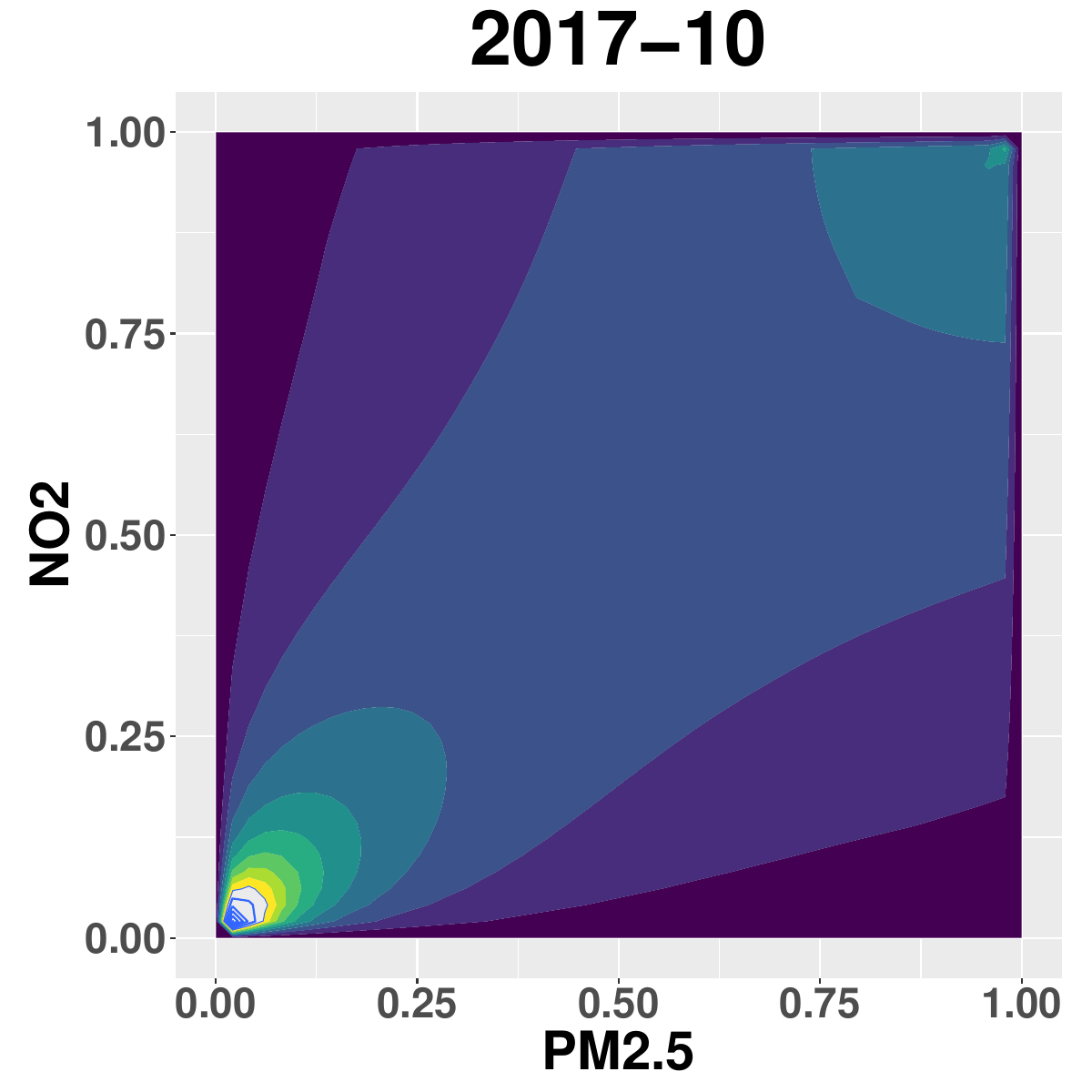}
\includegraphics[scale=0.144]{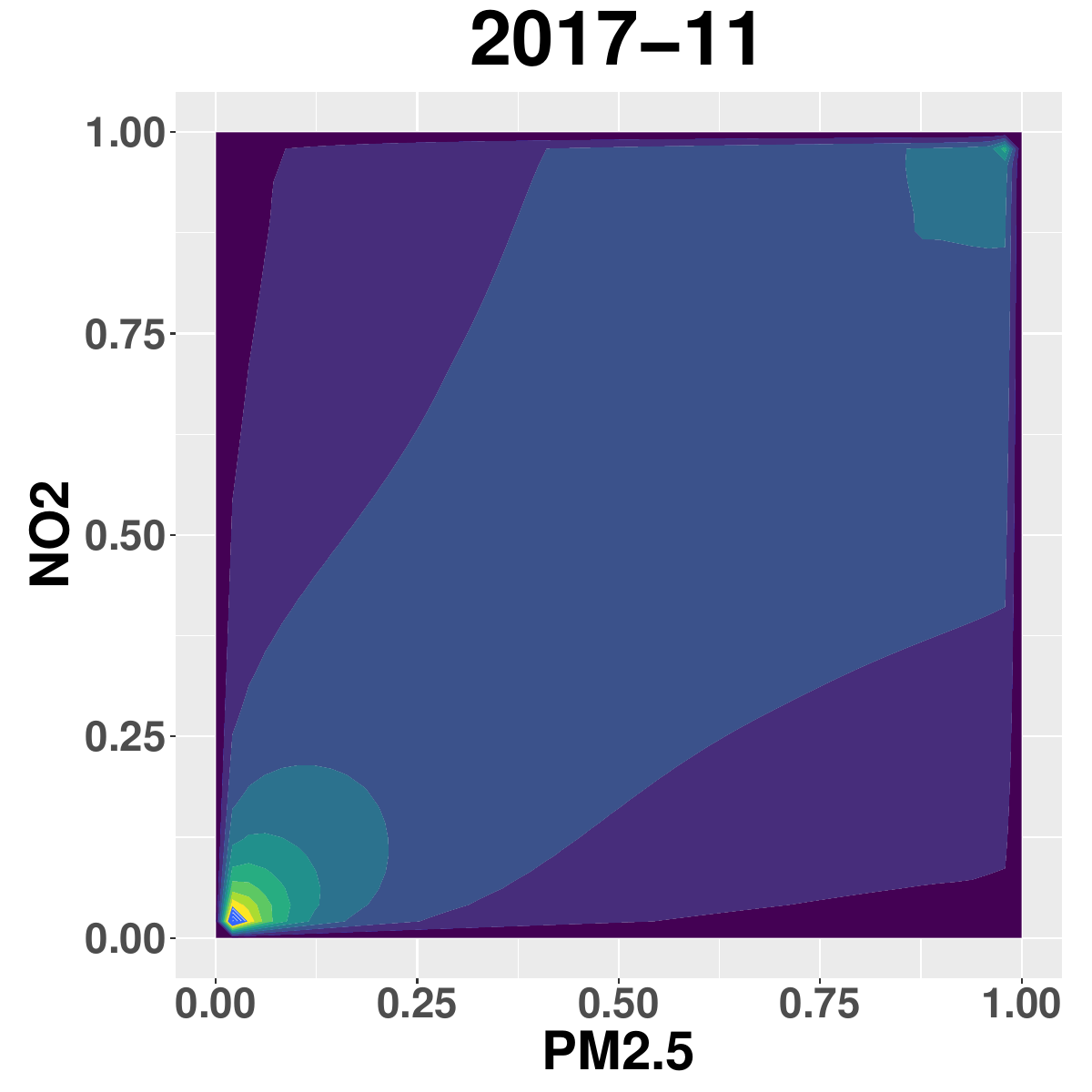}
\includegraphics[scale=0.144]{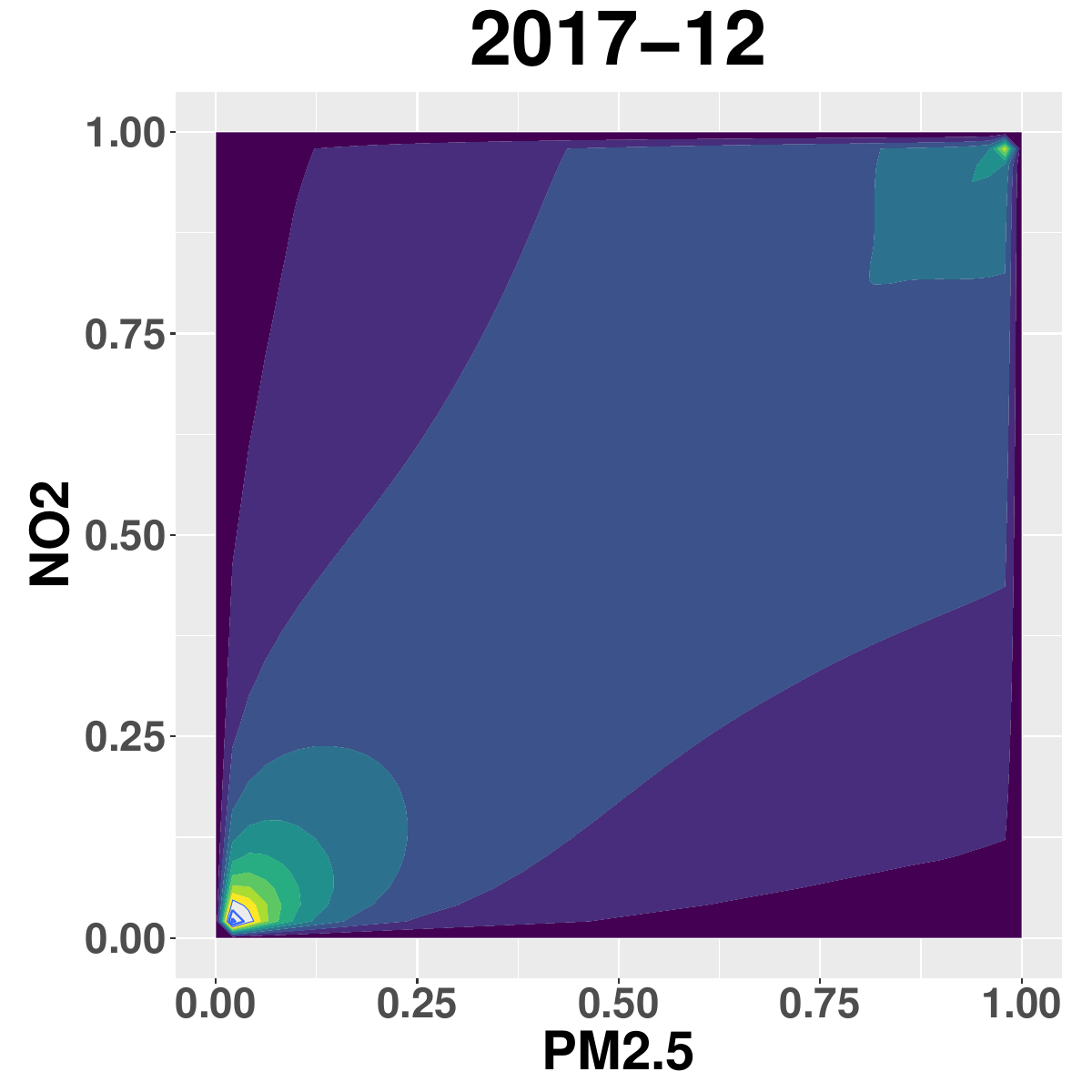}
}
\centerline{
\includegraphics[scale=0.144]{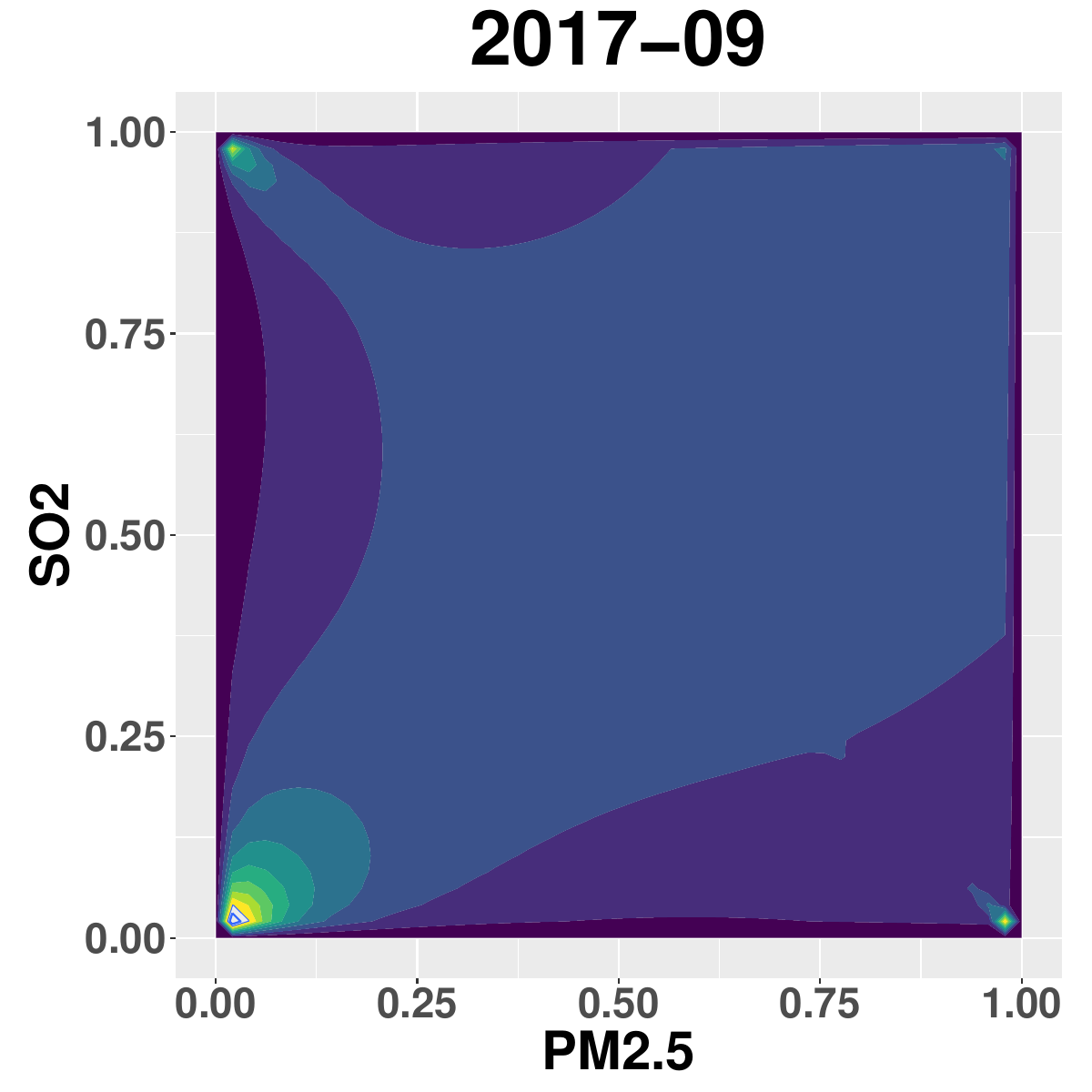}
\includegraphics[scale=0.144]{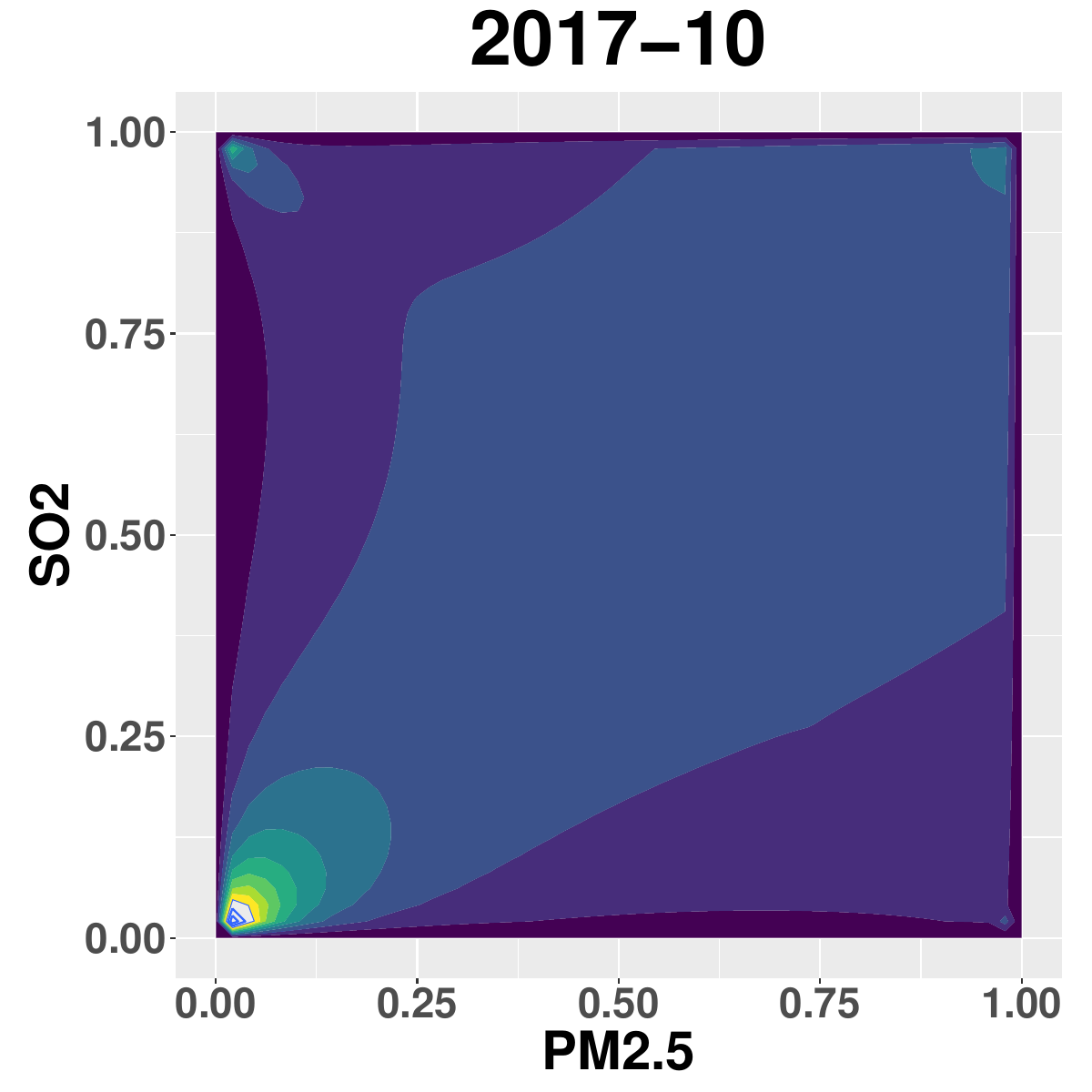}
\includegraphics[scale=0.144]{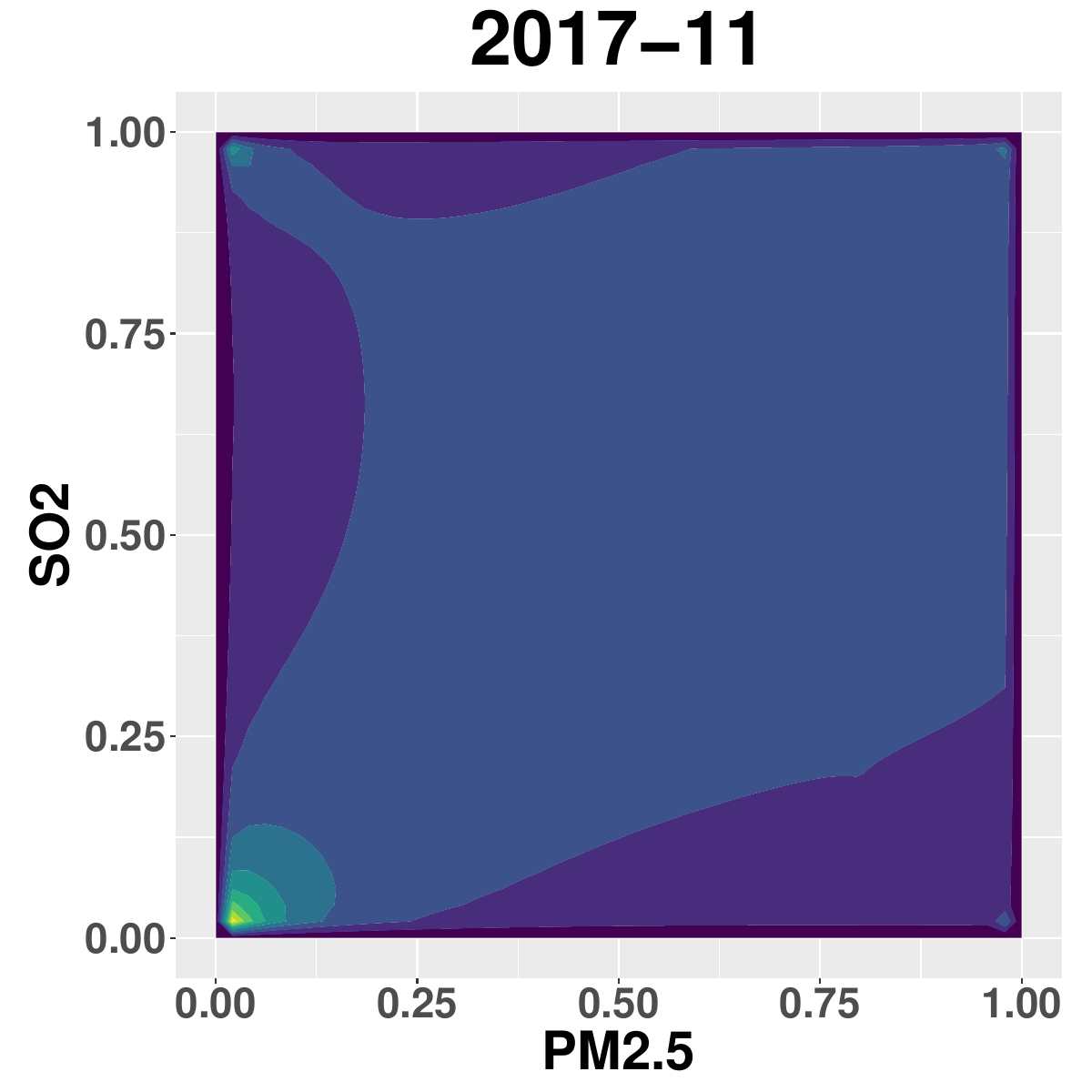}
\includegraphics[scale=0.144]{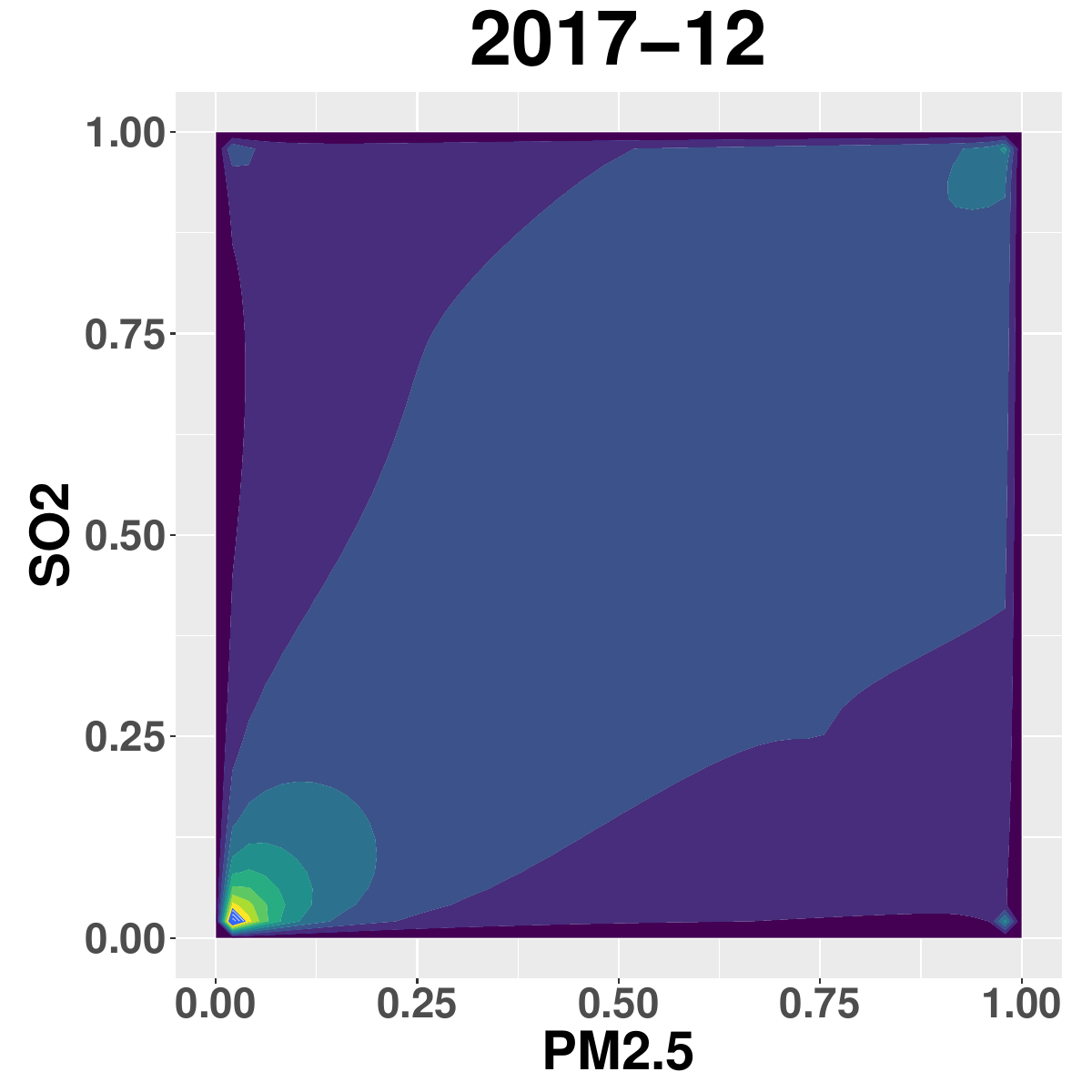}
}
\centerline{
\includegraphics[scale=0.144]{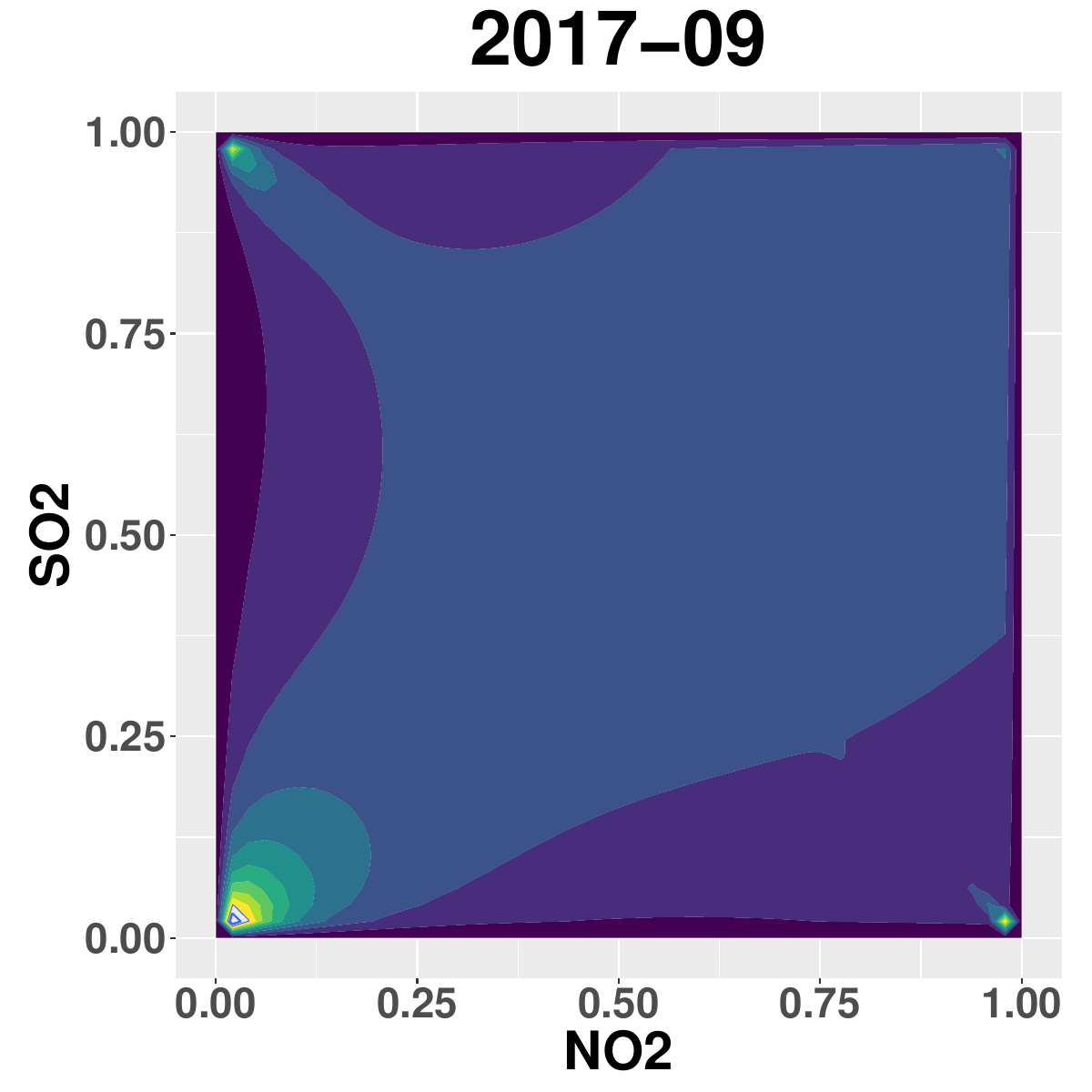}
\includegraphics[scale=0.144]{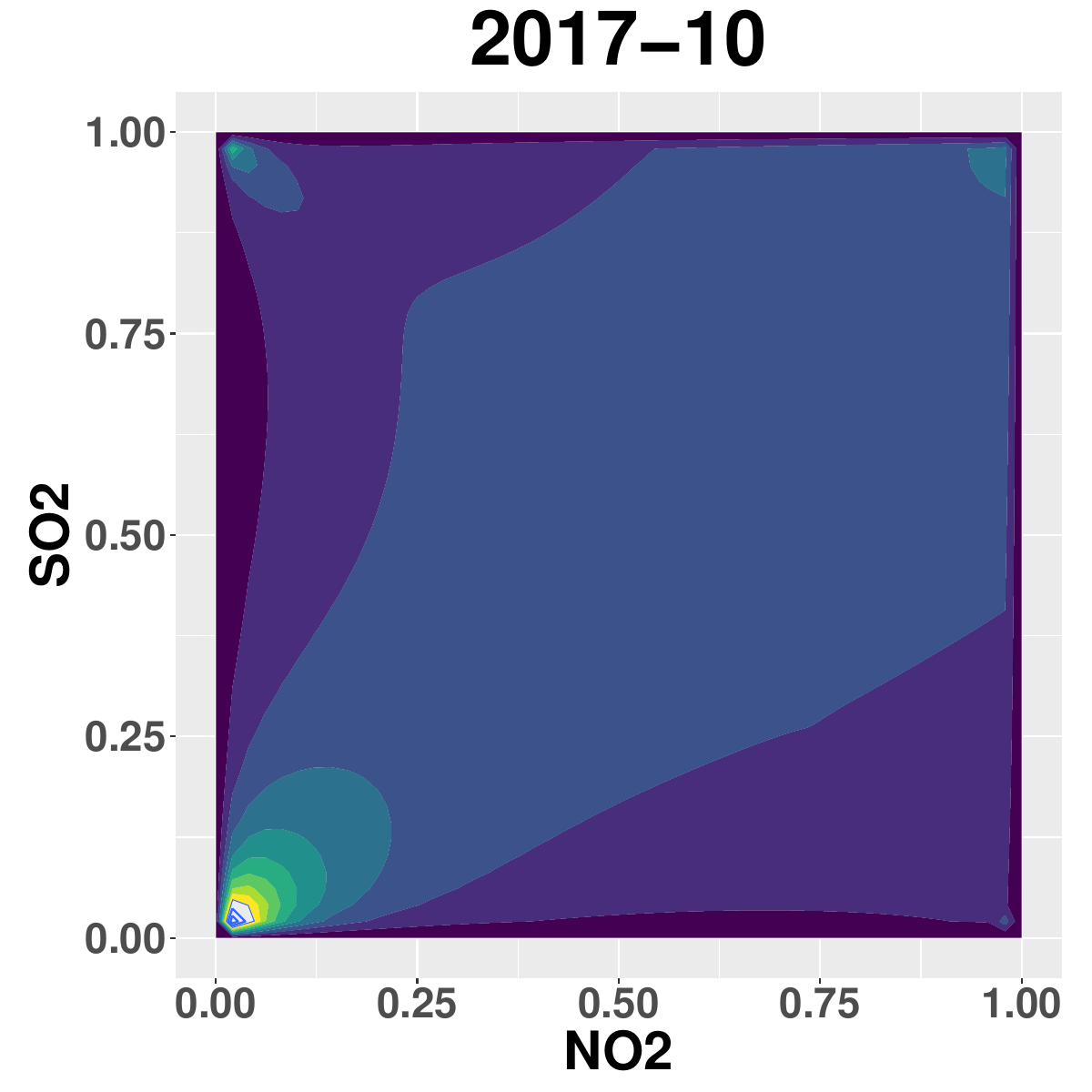}
\includegraphics[scale=0.144]{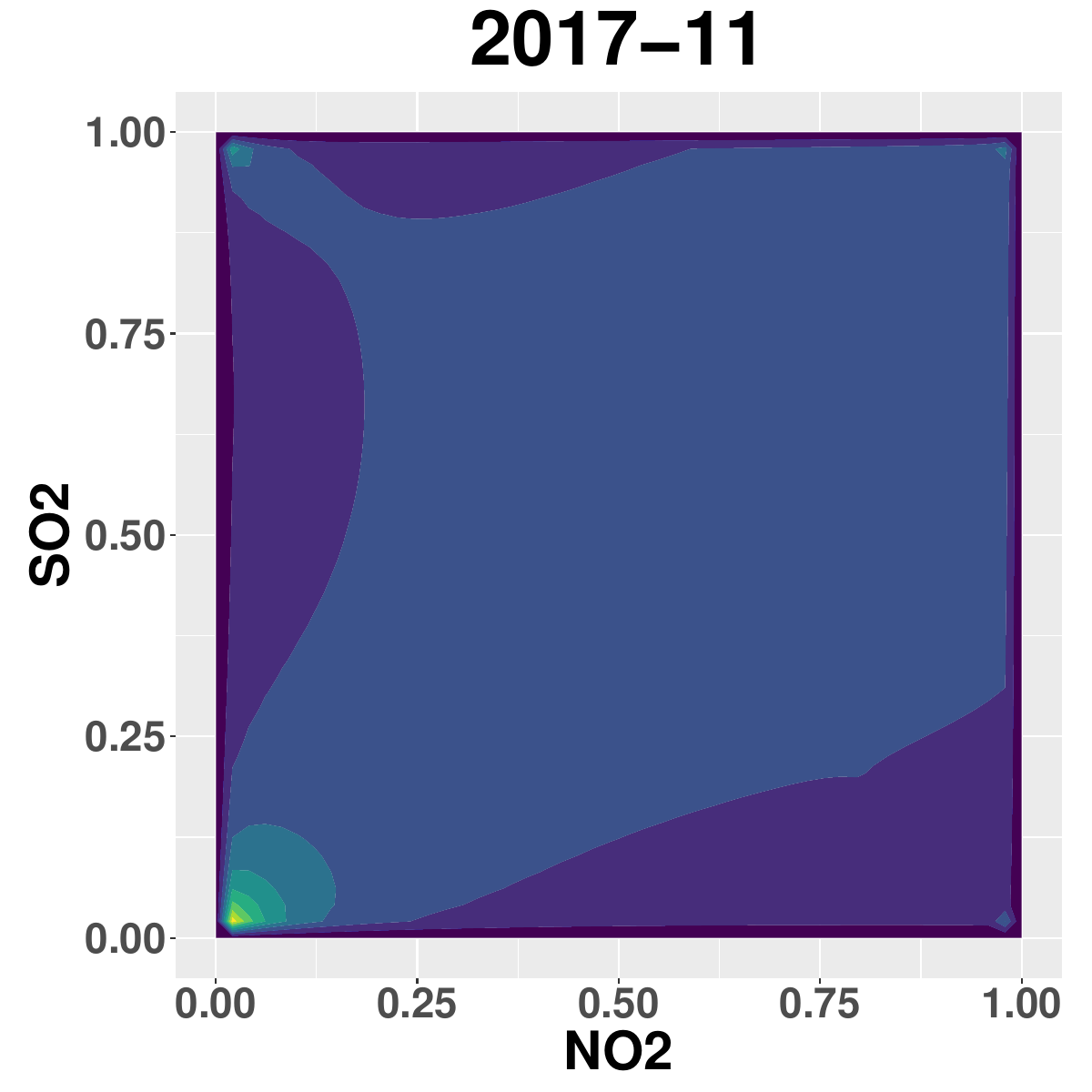}
\includegraphics[scale=0.144]{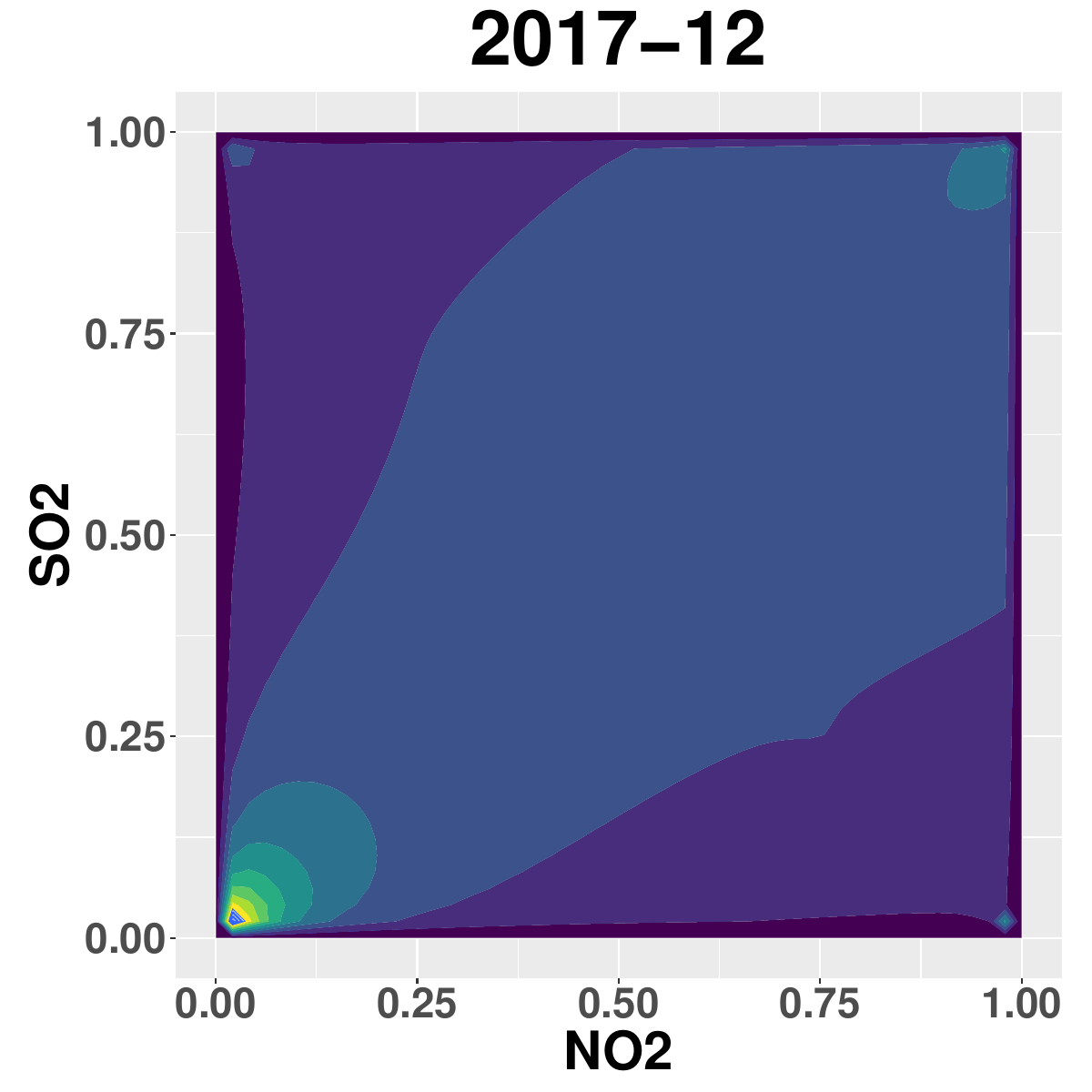}
}
\caption{Multivariate real data. Estimated joint density for months September to December of 2017. $(PM_{2.5},NO_2)$ (top row), $(PM_{2.5},SO_2)$ (middle row) and $(NO_2,SO_2)$ (bottom row).}
\label{fig:joint_predict_real_PM25_NO2_SO2}
\end{figure}

\end{document}